\newtheorem{observation}{Observation}[section]
\def \calp {{\cal P}}
\def \calf {{\cal F}}
\def \psp {{{{\textsc{Path Set Packing}}}}}
\def \psps {{{{\textsc{PSP}}}}}
\def \sp {{{{\textsc{Set Packing}}}}}
\def \mis {{{{\textsc{Max Independent Set}}}}}
\def \is {{{{\textsc{Independent Set}}}}}
\def \kmcc {{{{\textsc{$k$-Multi Colored Clique}}}}}
\def \kmccs {{{{\textsc{$k$-MCC}}}}}
\begin{document}

\title{Parameterized Complexity of Path Set Packing}
\titlerunning{Path Set Packing}
%
%
\author{N.R. Aravind\and
Roopam Saxena* }
\authorrunning{Aravind and Saxena}
%
\institute{Department of Computer Science and Engineering\\ IIT Hyderabad, Hyderabad, India\\
\email{aravind@cse.iith.ac.in, cs18resch11004@iith.ac.in*}}
\maketitle              

\begin{abstract}
In {\psp}, the input is an undirected graph $G$, a collection $\calp$ of simple paths in $G$, and a positive integer $k$. The problem is to decide whether there exist $k$ edge-disjoint paths in $\calp$. We study the parameterized complexity of {\psp} with respect to both natural and structural parameters. We show that the problem is $W[1]$-hard with respect to vertex cover number, and $W[1]$-hard respect to pathwidth plus maximum degree plus solution size. These results answer an open question raised in \cite{XZ2018}.
On the positive side, we present an FPT algorithm parameterized by feedback vertex number plus maximum degree, and present an FPT algorithm parameterized by treewidth plus maximum degree plus maximum length of a path in $\calp$. These positive results complement the hardness of {\psp} with respect to any subset of the parameters used in the FPT algorithms. We also give a $4$-approximation algorithm for maximum path set packing problem which runs in FPT time when parameterized by feedback edge number.

\keywords{Path set packing, set packing, parameterized complexity, fixed parameter tractability, graph algorithms.}
\end{abstract}

\section{Introduction}
\subsection{Problem Definition and Previous Work}
The path set packing problem was introduced by Xu and Zhang \cite{XZ2018}. A collection of simple paths $P$ in a graph $G$ is a path set packing if all the paths in $P$ are pairwise edge disjoint. We now formally define {\psp}.
\begin{tcolorbox}[colback=white]
{
{\psp} (\psps): \newline
\textit{Input:} An instance $I$ = $(G,\calp, k)$, where $G=(V,E)$ is an undirected graph, $\calp$ is a collection of simple paths in $G$, and $k\in \mathbb{N}$.\newline
\textit{Output:} YES, if there exists a $S\subseteq \calp$ such that $|S|\geq k$ and $S$ is a path set packing; NO otherwise.
}
\end{tcolorbox} 
\paragraph{}

Xu and Zhang \cite{XZ2018} showed that {\psp} is NP-complete even when the maximum length of the given paths is no more than $3$. Considering the optimization version of the problem, they showed that finding maximum path set packing is hard to approximate within $O(|E|^{{1\over{2}}-\epsilon})$ unless NP=ZPP. They showed that {\psp} can be solved in polynomial time when the input graph is a tree. They also gave a parameterized algorithm for maximum path set packing problem with running time $O(|\mathcal{P}|^{tw(G)\Delta}|V|$) where $tw(G)$ is treewidth of $G$ and $\Delta$ is maximum degree. Further, they left open the question whether {\psp} is fixed parameter tractable with respect to treewidth of the input graph.

\subsection{Related Work}
Given a universe $\cal U$ and a family $\cal F$ of subsets of $\cal U$, {\sp} is the problem  of deciding if there exists a subfamily $F\subseteq \cal F$ of size at least $k$ such that all the sets in $F$ are pairwise disjoint. {\sp} is known to be W[1]-hard when parameterized by solution size $k$ \cite{DBLP:series/mcs/DowneyF99}. When the maximum size of a set in $\calf$ is $d$, FPT algorithms for the combined parameter of $k$ and $d$ have been obtained \cite{JIA2004106,KOUTIS20057}. Kernels of size $O(k^d)$ \cite{Fellowsmichael2004} and $O(k^{d-1})$ \cite{ABUKHZAM2010621} have also been obtained. Since {\psp} can be seen as a special case of {\sp}, all the positive results obtained for {\sp} are also applicable on {\psp}. Thus, {\psp} is FPT when parameterized by maximum length of a path in $\cal P$ plus solution size $k$. Since length of a path in a graph $G$ is bounded by its vertex cover number, {\psp} is FPT when parameterized by the vertex cover number plus solution size.
{\sp} has also been studied extensively in the realm of approximation algorithms \cite{halldorsson1995approximating,halld1998}.
\paragraph{}

Finding a maximum size path set packing can also be seen as the problem of finding a maximum independent set on the {\it conflict graph} obtained by considering each path as a vertex with two vertices being adjacent if the corresponding paths share an edge.
Like {\sp}, the {\mis} (MIS) problem and its decision version {\is} problem have been extensively studied; of particular relevance is the study of MIS for intersection graphs, i.e. graphs whose vertices represent mathematical objects with edges representing objects that intersect.
For example, $O(n^{\varepsilon})$ approximation algorithms were obtained in \cite{FP11} for intersection graphs of curves on the plane (also called string graphs).
In \cite{KT14}, the authors introduced measures of similarity to the class of chordal graphs, and as a consequence obtained approximation algorithms for intersection graphs of various geometric objects. Their paper also includes a tabular summary of such results.

\paragraph{}

{\psp} itself has been mainly studied for the case when the underlying graph $G$ is a grid graph; the corresponding conflict graphs are called EPG graphs \cite{GolumbicLS09}. It was shown in \cite{GolumbicLS09} that every graph is an EPG graph. Further, there is no $2^{o(n)}$ time exact algorithm for MIS assuming ETH \cite{DBLP:journals/eatcs/LokshtanovMS11}, {\is} is W[1]-hard by solution size \cite{DBLP:series/mcs/DowneyF99}, and
these results immediately implies the following.

\begin{corollary}\label{theorem:planar-hardness}
 {\psp} is W[1]-hard on Grid graphs when parameterized by solution size $k$.
\end{corollary}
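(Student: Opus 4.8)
The plan is to give a parameterized reduction from MAXIMUM INDEPENDENT SET (MIS), which is known to be W[1]-hard when parameterized by solution size, to PSP on grid graphs. The bridge is exactly the observation recorded earlier in the excerpt: a collection of paths has $k$ pairwise edge-disjoint members if and only if its conflict graph has an independent set of size $k$. Consequently, if an arbitrary graph $H$ can be realized as the conflict graph of a path family living inside a grid, then an independent set of size $k$ in $H$ translates verbatim into $k$ edge-disjoint paths, and the parameter $k$ is left untouched.

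First I would take an arbitrary instance $(H,k)$ of MIS. Invoking the result of \cite{GolumbicLS09} that every graph is an EPG graph, I obtain a grid graph $G$ together with a family $\calp$ of paths in $G$ whose edge-intersection graph is exactly $H$, i.e. two members of $\calp$ share an edge precisely when the corresponding vertices are adjacent in $H$. By construction the conflict graph of the PSP instance $(G,\calp,k)$ is isomorphic to $H$.

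Next I would establish the equivalence of the two instances. A set $S\subseteq\calp$ of pairwise edge-disjoint paths corresponds to a set of vertices of $H$ that are pairwise non-adjacent, i.e. an independent set, and conversely every independent set of size $k$ in $H$ yields $k$ edge-disjoint paths in $\calp$. Hence $(G,\calp,k)$ is a YES instance of PSP on grid graphs if and only if $H$ admits an independent set of size at least $k$. Since the parameter $k$ is carried over unchanged, this is a genuine parameterized reduction, and so W[1]-hardness transfers from MIS to PSP on grids.

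The point demanding care — and the main obstacle — is that ``every graph is an EPG graph'' is phrased as an existence theorem, whereas a reduction requires the grid $G$ and the family $\calp$ to be produced in time polynomial (or at least FPT) in $|H|$, and to have size polynomial in $|H|$. I would therefore verify that the construction underlying the theorem of \cite{GolumbicLS09} is algorithmic, runs in polynomial time, and yields a grid of polynomial size; this is indeed the case, since the EPG representation is built explicitly and mechanically from the adjacency structure of $H$. Once this is confirmed, the map $(H,k)\mapsto(G,\calp,k)$ is polynomial-time computable with the parameter preserved, which completes the argument.
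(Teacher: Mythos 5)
Your proposal is correct and matches the paper's intent: the corollary is stated as an immediate consequence of the fact that every graph is an EPG graph \cite{GolumbicLS09} together with the W[1]-hardness of MIS parameterized by solution size, which is precisely the reduction you spell out. Your extra check that the EPG construction is explicit, polynomial-time, and polynomial-size is the right point to verify and is the only detail the paper leaves implicit.
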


\begin{corollary}\label{exact-lowerbound}
The problem of finding maximum path set packing does not admit a $2^{o(|\calp|)}$ time exact algorithm, assuming ETH.
\end{corollary}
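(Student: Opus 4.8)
The plan is to derive the bound from the corresponding ETH-based lower bound for MAXIMUM INDEPENDENT SET (MIS) on general graphs, transported through the same EPG representation that already underlies Corollary~\ref{theorem:planar-hardness}. First I would recall the standard consequence of ETH that MIS (equivalently, CLIQUE) on an $n$-vertex graph admits no $2^{o(n)}$ time algorithm; this follows from the Sparsification Lemma together with the textbook reduction from $3$-SAT to INDEPENDENT SET that produces a graph whose vertex count is linear in the number of variables and clauses.

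Next I would make the reduction behind \cite{GolumbicLS09} quantitative. The statement that every graph is an EPG graph means that for any graph $H$ on $n$ vertices one can construct, in polynomial time, a grid graph $G$ and a collection $\calp$ of simple paths in $G$ whose conflict graph is exactly $H$. The crucial point is that this construction uses one path per vertex of $H$, so that $|\calp| = n$. By the definition of the conflict graph, a family of pairwise edge-disjoint paths in $\calp$ corresponds precisely to an independent set in $H$, and conversely; hence $(G,\calp,k)$ is a YES-instance of PSP if and only if $H$ has an independent set of size $k$.

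Composing these two facts yields the corollary: a hypothetical $2^{o(|\calp|)}$ time algorithm for PSP, applied to the instance $(G,\calp,k)$ above, would decide whether $H$ has an independent set of size $k$ in time $2^{o(n)}$ (using $|\calp| = n$ and the fact that the reduction is polynomial-time), contradicting the ETH lower bound for MIS. I expect the only point needing care to be the linear-size guarantee $|\calp| = n$ of the EPG construction, since it is exactly this bound that prevents the $o(\cdot)$ in the exponent from being diluted; the ETH lower bound for MIS and the equivalence between edge-disjoint paths and independent sets are routine.
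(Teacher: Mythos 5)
Your proposal is correct and follows exactly the route the paper intends: the paper derives this corollary directly from the fact that every graph is an EPG graph \cite{GolumbicLS09} together with the standard ETH lower bound for MIS, and your writeup simply makes explicit the two points the paper leaves implicit (that the EPG representation uses one path per vertex, so $|\calp|=n$, and that edge-disjoint path families correspond to independent sets in the conflict graph). Nothing further is needed.
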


\paragraph{}

Thus, it is natural to consider {\psp} with further or different restrictions on the input graph $G$.
We mention some of the known results where the input graph is a tree or a grid, and with some restrictions on the paths are imposed.
\begin{enumerate}
\item When $G$ is a tree, the conflict graph is called an EPT graph \cite{GolumbicJ85_EPT}, recognizing EPT graph is NP-Complete \cite{GolumbicJ85_EPT_NP}. {\mis} is solvable in polynomial time on the class of EPT graphs \cite{TARJAN1985221}. 
\item The class of $B_k$-EPG graphs was defined as graphs obtained as the edge intersection graph of paths on a grid, with the restriction that each path have at most $k$ bends.
Recognizing $B_1$-EPG graphs is NP-hard  \cite{HELDT2014144}. Further, {\mis} on $B_1$-EPG graphs is NP-hard \cite{EGM13} . 
\item Approximation algorithms: In \cite{EGM13}, the authors show that {\mis} on the class of $B_1$-EPG graphs admits a 4-approximation algorithm. 
\item Fixed Parameter Tractability: In \cite{BBCGP20}, the authors showed that for the class of $B_1$-EPG graphs, when the number of path shapes is restricted to three, {\is} admits an FPT algorithm parameterized by solution size, while remanining W[1]-hard on $B_2$-EPG graphs with parameter solution size.
\end{enumerate}

\subsection{Our Results}

We studied {\psp} with respect to combination of both natural and structural parameters of the input graph $G$, specifically the structural parameters of the input graph $G$.
We obtained the following hardness results for {\psp}.

\begin{theorem}\label{theorem:vc-hardness}
{\psp} is W[1]-hard when parameterized by vertex cover number of input graph $G$.
\end{theorem}
We note that the maximum length of a simple path in $G$ is bounded by its vertex cover number. Thus, {\psp} remains W[1]-hard when parameterizeed by vertex cover number plus maximum length of a path in $\cal P$.
\begin{theorem}\label{theorem:Pathwidth-hardness}
{\psp} on graphs of degree at most $4$ is W[1]-hard when parameterized by pathwidth of input graph $+$ solution size.
\end{theorem}

On the positive side, we obtained the following parameterized algorithms for {\psp}, which are summarized in the following theorems.
\begin{theorem}\label{theorem:FVD_maxdegree}
{\psp} admits an FPT algorithm when parameterized by feedback vertex number of $G$ $+$ maximum degree of $G$.
\end{theorem}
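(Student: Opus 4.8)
The plan is to use the feedback vertex set $F$ (with $|F| = f$) as a small modulator. Since $G - F$ is a forest and $G$ has maximum degree $\Delta$, the set $E_F$ of edges incident to $F$ has size at most $f\Delta$. The starting observation is that these are the only ``bottleneck'' edges: in any collection of pairwise edge-disjoint paths, each path meeting $F$ privately uses at least one edge of $E_F$, so at most $f\Delta$ solution paths can touch $F$. I would therefore split the desired solution into the paths that avoid $F$ entirely (which live inside the forest $G - F$) and the at most $f\Delta$ paths that meet $F$.

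First I would classify every path $P \in \mathcal{P}$ by its \emph{skeleton} with respect to $F$, namely the ordered sequence of $E_F$-edges it traverses (equivalently, the sequence of $F$-vertices it visits together with the incident edges used). Because $G - F$ is a forest, any two consecutive $E_F$-edges of $P$ are joined by the \emph{unique} forest path between their forest endpoints; hence the skeleton already determines all interior forest segments of $P$, leaving only its two end-segments free. As each path uses at most $2f$ edges of $E_F$, the number of distinct skeletons is bounded by a function of $f$ and $\Delta$ alone.

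Next I would branch over the interaction pattern of the whole solution with $E_F$: guess the subset $Q \subseteq E_F$ used by the $F$-touching solution paths, together with a partition of $Q$ into at most $f\Delta$ skeletons and their threading orders. Since $|Q| \le |E_F| \le f\Delta$, the number of such guesses is bounded by a function of $f$ and $\Delta$, so the branching stays FPT. Each valid guess fixes the interior forest segments of all $F$-touching solution paths, thereby reserving (deleting) a set of forest edges; what remains is a residual problem purely on $G - F$, in which we must (i) complete each guessed skeleton to an actual path of $\mathcal{P}$ by attaching admissible end-segments, and (ii) select as many fully-forest paths of $\mathcal{P}$ as possible, all pairwise edge-disjoint, so as to reach the target $k$.

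Finally I would solve this residual forest problem by a dynamic program over a rooted version of $G - F$, extending the known polynomial-time solvability of PSP on trees. The main obstacle, and the step requiring the most care, is exactly this coordination on the forest: the up to $2f\Delta$ free end-segments of the $F$-touching paths compete for forest edges both with one another and with the forest-only paths, so the DP state must simultaneously record which guessed skeletons still need completion through the current subtree while counting forest-only paths, and it must do so \emph{without} enumerating path identities (which would only yield an XP bound). Here the maximum-degree bound is essential: at each forest vertex only $O(\Delta)$ edge-disjoint path-pieces can pass through, which keeps the number of relevant completion states bounded and lets the DP run in polynomial time. Combining the FPT number of branches with this polynomial-time forest DP yields the claimed FPT algorithm.
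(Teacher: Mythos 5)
Your overall strategy --- guess how the solution interacts with the $O(f\Delta)$ edges incident to a small modulator, then solve a residual problem on the remaining forest --- is the same one the paper uses (it guesses, for each solution path meeting the core, the exact set of core edges that path uses). The branching part of your argument is sound. The genuine gap is the residual forest problem, which you yourself flag as the main obstacle and then resolve only by assertion. Two concrete difficulties are not addressed. First, a skeleton's two free end-segments are not independent: they must be the two end-segments of a \emph{single} path $p\in\calp$ realizing that skeleton, so a DP that deliberately avoids ``enumerating path identities'' has no way to enforce this coupling when the two anchors lie in the same tree, or when an end-segment climbs out of the subtree containing its anchor; remembering the realizing path for each open skeleton gives $|\calp|^{\Theta(f\Delta)}$ states, i.e.\ an XP bound, and the maximum-degree bound does not obviously repair this --- it bounds how many segments touch a vertex, not how much information about their identities and far endpoints must be carried. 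Second, a single component of $G-F$ may be attached to $F$ by up to $f\Delta$ edges, so many end-segments plus the forest-only packing interact inside one component, and the number of forest-only paths that survive there depends on the joint choice of all end-segments routed through it; you give no mechanism for optimizing this trade-off, nor an analogue of the paper's guarantee that each component loses at most a bounded number of internal paths.

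The paper escapes both problems by refining the decomposition before guessing: it peels pendant trees, takes $X$ to be the vertices of degree at least three in what remains, and observes that every tree component hanging off $X$ attaches by at most \emph{two} edges. Hence at most two core-touching paths can enter any such component, a nice solution loses at most one internal path per component (Lemma~\ref{lemma:nicesolution}), and the conflict graph between the guessed path-types has maximum degree two, so it is a disjoint union of paths and cycles. Feasibility of a guess then reduces to checking pairwise compatibility of adjacent types along each chain (Lemma~\ref{lemma:adjacent}), which is decided by a shortest-path computation in a layered digraph (Lemmas~\ref{lemma:fdfe_cycle} and~\ref{lemma:verify_cycle}). To complete your proof you would need either to reproduce this finer decomposition, or to supply a genuinely different argument for why your residual forest problem is fixed-parameter tractable; as written, that step does not go through.
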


\begin{theorem}\label{theorem:4-apxFEN}
There exists a $4$-approximation algorithm for finding maximum size path set packing in $\calp$ which runs in FPT time when parameterized by feedback edge number  of input graph $G$.
\end{theorem}

\begin{theorem}\label{theorem:tw_maxdegree}
{\psp} admits an FPT algorithm when parameterized by  treewidth of $G$ $+$ maximum degree of $G$ $+$ maximum length of a path in $\calp$.
\end{theorem}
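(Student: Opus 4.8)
The plan is to perform dynamic programming over a tree decomposition of $G$, the key leverage being that bounding the maximum path length $\ell$ (together with the maximum degree $\Delta$) limits how many paths of $\calp$ can be relevant at any bag. First I would establish the crucial combinatorial bound: in a graph of maximum degree $\Delta$, the number of simple paths of length at most $\ell$ through a fixed vertex is at most some $c(\Delta,\ell)$ (choosing the position of the vertex along the path and then growing the two arms gives a bound of roughly $\ell\,\Delta^{2\ell}$). Consequently, for a bag $X_t$ of a tree decomposition of width $w$, the set $\mathcal{R}_t \subseteq \calp$ of paths containing at least one vertex of $X_t$ has size at most $(w+1)\cdot c(\Delta,\ell)=:N$, a function of the parameters alone. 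This is exactly where $\ell$ is needed: without it the number of paths meeting a bag could grow with $|\calp|$, which yields only the XP bound of Xu and Zhang rather than an FPT one.

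Next I would fix a nice tree decomposition with leaf, introduce-vertex, introduce-edge, forget, and join nodes, and design a DP whose state at node $t$ is a subset $A \subseteq \mathcal{R}_t$, interpreted as the collection of selected paths that touch or cross $X_t$. The table value $\mathrm{DP}[t,A]$ is the maximum number of selected paths lying entirely in the processed subgraph $G_t$ and no longer meeting $X_t$, taken over all partial solutions consistent with $A$ and with global edge-disjointness. Since each node has at most $2^N$ admissible states, once the transitions are shown to be computable in time polynomial in $n$ and $2^N$, the whole procedure runs in FPT time in $w+\Delta+\ell$.

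The transitions are routine once two invariants are maintained. Edge-disjointness is enforced locally: because every path using an edge $e=\{u,v\}$ must contain both $u$ and $v$, at the introduce-edge node for $e$ all such paths already lie in $\mathcal{R}_t$, so I simply forbid states $A$ that contain two paths both using $e$; performing this check at every introduce-edge node guarantees edge-disjointness globally. Correct counting is enforced by crediting a path only upon completion: at a forget node for $v$, any path in $A$ whose last bag-vertex is $v$ now has all of its vertices, and hence (by the edge-introduction order) all of its edges, inside $G_t$, so it is removed from the active set and contributes $+1$. A join node combines children whose active sets agree on $\mathcal{R}_t$ and sums their completed counts; since the edges are partitioned between the two sides no new conflicts arise, and since crossing paths in $A$ are not yet credited no path is double counted. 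The answer is read at the root, whose bag is empty and all of whose selected paths have been completed.

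The main obstacle, and where the careful bookkeeping lives, is the treatment of paths that repeatedly cross the separator $X_t$: a simple path may enter and leave $G_t$ up to $|X_t|$ times, so committing such a path to $A$ implicitly fixes several of its segments, and I must verify at introduce-vertex and join nodes that these partial commitments are mutually consistent and that no path is simultaneously treated as completed on one side and active on the other. Proving that the local introduce-edge checks compose into genuine global edge-disjointness, and that the completion rule credits each selected path exactly once across such multiple crossings, is the heart of the correctness argument; the bound $|\mathcal{R}_t|\le N$ is precisely what keeps the full state space, including any extra consistency information attached to active paths, within FPT size.
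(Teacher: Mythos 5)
Your proposal is correct in substance and rests on exactly the same pivotal bound as the paper --- in a graph of maximum degree $\Delta$, at most $f(\Delta,\ell)$ paths of length at most $\ell$ pass through any fixed vertex, so at most $(w+1)f(\Delta,\ell)$ paths of $\calp$ meet any bag --- but you package it as a bespoke dynamic program on a nice tree decomposition of $G$, whereas the paper routes it through the conflict graph $H$ (one vertex per path of $\calp$, adjacent when two paths share an edge). Lemma~\ref{lemma:fpt} shows that replacing each vertex $x$ of a bag $\beta(v)$ by all paths of length at most $r$ through $x$ yields a tree decomposition of $H$ of width at most $(k+1)\Delta^{r}$, after which one invokes the known $O(2^{\tau})$ algorithm for Maximum Independent Set on graphs of treewidth $\tau$. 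The reduction buys something concrete: all of the bookkeeping you rightly identify as the heart of your correctness argument --- paths crossing the separator several times, consistency of partial commitments at introduce and join nodes, crediting each completed path exactly once --- collapses into the single standard verification that the connectedness condition holds for the new decomposition, which the paper dispatches with a short separator argument (if the set of nodes whose bags contain $p$ were disconnected at $v$, then $\beta(v)$ would separate two vertices of $p$ and hence would have to contain a vertex of $p$). Your route does go through, because the set of tree nodes whose bags meet $V(p)$ is connected (as $p$ is connected and adjacent vertices share a bag), which is precisely what guarantees that your forget-node crediting fires exactly once per path and that selection decisions propagate consistently; but this connectivity fact is the one step you flag without proving, and you should either establish it explicitly or adopt the conflict-graph reduction and let the generic MIS machinery absorb it.
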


We note that our positive results complement the hardness of {\psp} with respect to any subset of the parameters used in the respective algorithms.  
For maximum degree plus maximum path length, we note that the reduction from {\mis} to maximum path set packing problem given in \cite{XZ2018} to prove the inapproximability of maximum path set packing problem also works to prove NP-hardness of {\psp} for bounded maximum degree in $G$ and bounded maximum length of a path in $\mathcal{P}$ using the fact that {\is} is NP-hard on bounded degree graphs \cite{garey}.
\section{Preliminaries}

\subsection{Sets and Sequences.}
We use $[n]$ to denote the set $\{1,2,....,n\}$. A sequence is a list of elements in a particular order. For a sequence $\rho$, its length is denoted by $|\rho|$ and it is the number of elements in $\rho$. For $j\in[|\rho|]$, $\rho[j]$ is the element of $\rho$ at index $j$. $set(\rho)$ denotes the set formed by all the elements of $\rho$. 

For a set $X$, a collection $P$ of subsets of $X$ is a partition of $X$ if
\begin{itemize}
    \item sets in $P$ are pairwise disjoint,
    \item $P$ does not contain an empty set, and
    \item $X= \bigcup_{p\in P} p$.
\end{itemize}

\subsection{Graphs}
 All the graphs consider in this paper are simple and finite. We use standard graph notations and terminologies and refer the reader to \cite{DBLP:books/daglib/0030488} for basic graph notations and terminologies. We mention some notations used in this paper. A graph  $G=(V,E)$ has a vertex set $V$ and edge set $E$. We also use $V(G)$ and $E(G)$ to denote the vertex set and edge set of $G$ respectively. 
 For an edge set $F\subseteq E(G)$, $V(F)$ denotes the set of all the vertices of $G$ with at least one edge in $F$ incident on it. For two disjoint vertex sets $A,B\subseteq V$, we use $E_G(A,B)$ to denote all the edges in graph $G$ with one endpoint in $A$ and other in $B$, if the graph in context is clear then we simply use $E(A,B)$. For a vertex set $S\subseteq V(G)$, $G[S]$ denotes the induced sub graph of $G$ on vertex set $S$, and $G-S$ denotes the graph $G[V(G)\setminus S]$. For an edge set $A\subseteq E(G)$, $G-A$ denotes the graph with vertex set $V(G)$ and edge set $E(G)\setminus A$.
 A component $C$ of a graph $G$ is a maximally connected subgraph of $G$.
 A graph $G$ is a forest if every component of $G$ is a tree.
\paragraph{}
 \textbf{Tree Decomposition} \cite{DBLP:books/sp/CyganFKLMPPS15} : A tree decomposition of a graph $G$ is a pair $(T,\beta)$ where $T$ is a tree and $\beta$ (called a bag) is a mapping that assigns to every $t \in V (T)$ a set $\beta(t)\subseteq V (G)$, such that the following
holds: 
\begin{enumerate}
    \item For every $e \in E(G)$, there exists a $ t\in V(T)$ such that $V(e) \subseteq \beta (t);$
    \item For $v\in V(G)$, let $\beta ^{-1}(v)$ be the set of all vertices $t \in V(T)$ such that $v \in \beta (t)$, then  $T[\beta ^{-1}(v)]$ is a connected nonempty subgraph of $T$.
\end{enumerate}
If the tree $T$ is rooted at some node $r$, we call it a \textit{rooted tree decomposition}. If $T$ is a path then it is called a path decomposition. 
The width of the tree decomposition $(T,\beta)$ is the $\max \{|\beta(t)|-1\mid t\in T\}$. Treewidth of $G$ is defined to be the minimum width of any tree decomposition of $G$. The pathwidth of $G$ is similarly defined using path decompositions of $G$.
 \paragraph{}
 A vertex set $S\subseteq V(G)$ is a vertex cover of $G$ if the sub graph $G[V\setminus S]$ has no edge. The minimum size of any vertex cover of $G$ is called vertex cover number of $G$. A vertex set $S\subseteq V(G)$ is a feedback vertex set of $G$ if the sub graph $G[V\setminus S]$ has no cycle. The minimum size of any feedback vertex set of $G$ is called feedback vertex number (FVN) of $G$. An edge set $F\subseteq E(G)$ is a feedback edge set of $G$ if the sub graph $G-F$ has no cycle. The minimum size of any feedback edge set of $G$ is called feedback edge number (FEN) of $G$. For a connected graph $G$, its feedback edge number $\lambda = |E(G)|-|V(G)|+1$. The set of all the edges incident on a feedback vertex set forms a feedback edge set, if the graph has a feedback vertex number $\Gamma$, then $\lambda\leq \Gamma \cdot \Delta$.
 \paragraph{}
 A path in a graph $G$ is a sequence of distinct vertices such that successive vertices are connected by an edges. The first and last vertices of a path are its endpoints. We call a path simple to emphasize that all its vertices are distinct. We denote the set of all the edges and all the vertices of a path $p$ by $E(p)$ and $V(p)$ respectively. Similarly, for a collection of paths $P$, we use $E(P)$ to denote the set $\bigcup_{p\in P} E(p)$, and $V(P)$ to denote the set $\bigcup_{p\in P} V(p)$. A collection of paths $P$ is a path set packing if all the paths in $P$ are pairwise edge disjoint.

\subsection{Parameterized Complexity} 

For the details on parameterized complexity, we refer to \cite{DBLP:books/sp/CyganFKLMPPS15,DBLP:series/txcs/DowneyF13}, and recall some definitions here.

\begin{definition}[\cite{DBLP:books/sp/CyganFKLMPPS15}]
A \textit{parameterized problem} is a language $L \subseteq \Sigma^* \times \mathbb{N} $ where $\Sigma$ is a fixed and finite alphabet. For an instance $I=(x,k) \in \Sigma^* \times \mathbb{N} $, $k$ is called the parameter.  A parameterized problem is called \textit{fixed-parameter tractable} (FPT) if there exists an algorithm $\cal A$ (called a \textit{fixed-parameter algorithm} ), a computable function $f:\mathbb{N} \to \mathbb{N}$, and a constant $c$ such that, the algorithm $\cal A$ correctly decides whether $(x,k)\in L$ in time bounded by $f(k).|(x,k)|^c$. The complexity class containing all fixed-parameter tractable problems is called FPT.
\end{definition}

Informally, a W[1]-hard problem is unlikely to be fixed parameter tractable, see \cite{DBLP:books/sp/CyganFKLMPPS15} for details on complexity class W[1].
\begin{definition}[\cite{DBLP:books/sp/CyganFKLMPPS15}]
Let $P,Q$ be two parameterized problems. A parameterized reduction from $P$ to $Q$ is an algorithm which for an instance $(x,k)$ of $P$ outputs an instance $(x',k')$ of $Q$ such that:
\begin{itemize}
    \item $(x,k)$ is yes instance of $P$ if and only if $(x',k')$ is a yes instance of $Q$,
    \item $k'\leq g(k)$ for some computable function $g$, and
    \item the reduction algorithm takes time $f(k)\cdot |x|^{O(1)}$ for some computable function $f$
\end{itemize}
\end{definition}

\begin{theorem}[\cite{DBLP:books/sp/CyganFKLMPPS15}]\label{def-parameterized reduction}
If there is a parameterized reduction from $P$ to $Q$ and $Q$ is fixed parameter tractable then $P$ is also fixed parameter tractable.
\end{theorem}

\subsection{Problem Definitions}
A collection of sets $Q$ is a set packing if all the sets in $Q$ are pairwise disjoint.

\begin{tcolorbox}[colback=white]
{
$d$-{\sp}: \newline
\textit{Input:} An instance $I$ = $(U,{S},k)$, where  $U$ is a universe, $S$ is a collection of subsets of $U$, where each subset contains at most $d$ elements, and $k \in \mathbb{N}$.\newline
\textit{Output:}  A set packing of size $k$ in $S$, or report that no set packing of size $k$ exists in $S$. \newline
} 
\end{tcolorbox}

\begin{theorem}[\cite{WANG2008748}]\label{thm:3setpacking}
   $3$-{\sp} can be solved in time $2^{O(k)}\cdot (|{\cal Q}|)^{O(1)}$.
\end{theorem}

\section{Parameterized by Vertex Cover Number}

\begin{figure}[h]

    \centering
    \begin{tikzpicture}


    \foreach \i in {1,2}{
     
            \node[shape=circle, draw, fill = black, scale= 0.3, font=\footnotesize,label={\scriptsize{$x_{i,\i}$}}] ({\i+2}) at (-1+\i/1.2,-0.2+1){};
   }
   \foreach \i in {3,4}{
     
            \node[shape=circle, draw, fill = black, scale= 0.1, font=\footnotesize,] ({\i+2}) at (-1+\i/1.2,-0.2+1){};
   }
   \foreach \i in {5}{
     
            \node[shape=circle, draw, fill = black, scale= 0.3, font=\footnotesize,label={\scriptsize{$x_{i,n-1}$}}] ({\i+2}) at (-1+\i/1.2,-0.2+1){};
   }

    \foreach \i in {1}{
     
            \node[shape=circle, draw, fill = black, scale= 0.3, font=\footnotesize,label={180:\scriptsize{$c_{i,\i}$}}] ({\i}) at (0+\i/1.2,-1){};
   }
   \foreach \i in {2}{
     
            \node[shape=circle, draw, fill = black, scale= 0.3, font=\footnotesize,label={\scriptsize{$c_{i,\i}$}}] ({\i}) at (0+\i/1.2,-1){};
   }
   \foreach \i in {3,4}{
     
            \node[shape=circle, draw, fill = black, scale= 0.1, font=\footnotesize,] ({\i}) at (0+\i/1.2,-1){};
   }
    \foreach \i in {5}{
     
            \node[shape=circle, draw, fill = black, scale= 0.3, font=\footnotesize,label={\scriptsize{$c_{i,k}$}}] ({\i}) at (0+\i/1.2,-1){};
   }


    \foreach \i in {1,2}{
     
            \node[shape=circle, draw, fill = black, scale= 0.3, font=\footnotesize,label={270:\scriptsize{$v_{i,1,\i}$}}] ({\i+0}) at (-2+\i/1.2,-3){};
   }
   \foreach \i in {3,4}{
     
            \node[shape=circle, draw, fill = black, scale= 0.1, font=\footnotesize,] ({\i+0}) at (-1.6+\i/1.7,-3){};
   }
    \foreach \i in {5}{
     
            \node[shape=circle, draw, fill = black, scale= 0.3, font=\footnotesize,label={270:\scriptsize{$v_{i,1,k}$}}] ({\i+0}) at (-2+\i/1.4,-3){};
   }
   
    
    \foreach \i in {1,2}{
     
            \node[shape=circle, draw, fill = black, scale= 0.3, font=\footnotesize,label={270:\scriptsize{$v_{i,n,\i}$}}] ({\i+1}) at (3+\i/1.2,-3){};
   }
   \foreach \i in {3,4}{
     
            \node[shape=circle, draw, fill = black, scale= 0.1, font=\footnotesize,] ({\i+1}) at (3.4+\i/1.6,-3){};
   }
    \foreach \i in {5}{
     
            \node[shape=circle, draw, fill = black, scale= 0.3, font=\footnotesize,label={270:\scriptsize{$v_{i,n,k}$}}] ({\i+1}) at (3+\i/1.4,-3){};
   }

   \foreach \i in {1,2,...,5}{
            \draw[thin,black!15] (\i)--(\i+1);
            \draw[thin,black!15] (\i)--(\i+1);

   }
   \draw[thin,black!15] (2)-- (1+1) ;
   \draw[thin,black!15] (3)-- (2+1) ;
   \draw[thin,black!15] (4)-- (3+1) ;
   \draw[thin,black!15] (5)-- (4+1) ;

   \draw[thick,black] (1)--(2+2);
   \draw[thin,black!15] (1)--(1+2);
   \foreach \i in {3,...,5}{
            \draw[thin,black!15] (1)--(\i+2);
            \draw[thin,black!15] (1)--(\i+2);

   }

    \foreach \i in {1,2,...,5}{
            \draw[thick,black] (\i)--(\i+0);

   }
   \draw[thick,black] (2)-- (1+0) ;
   \draw[thick,black] (3)-- (2+0) ;
   \draw[thick,black] (4)-- (3+0) ;
   \draw[thick,black] (5)-- (4+0) ;
   
  

\end{tikzpicture}
\caption{An example of vertex selection gadget $H_i$, the darkened edges forms a long path $P_{x_{i,2},V_{i,1}}=(x_{i,2},c_{i,1},v_{i,1,1},c_{i,2},v_{i,1,2},......., c_{i,k},v_{i,1,k})$. }\label{fig:longpath}
\end{figure}

\begin{figure}[h]

    \centering
    \begin{tikzpicture}[scale=1]


 \node [ellipse,draw= black!10, minimum height=5.5cm,minimum width= 3cm, label ={90:$H_i$}] (z) at (-1.75,-1.2) {};
   
    \foreach \i in {1,3}{
     
            \node[shape=circle, draw, fill = black, scale= 0.2, font=\scriptsize] ({\i}) at (-1,-\i/2){};
   }
   \foreach \i in {2}{
     
            \node[shape=circle, draw, fill = black, scale= 0.4, font=\scriptsize,label={0:\scriptsize{$c_{i,l}$}}] ({\i}) at (-1,-\i/2){};
   }
   \foreach \i in {4}{
     
            \node[shape=circle, draw, fill = black, scale= 0.4, font=\scriptsize,label={-30:\scriptsize{$c_{i,j}$}}] ({\i}) at (-1,-\i/2){};
   }
  

    \foreach \i in {1,2,3,4}{
     
            \node[shape=circle, draw, fill = black, scale= 0.2, font=\scriptsize,label={}] ({\i+0}) at (-2,1.5-\i/3){};
   }
  
   
    
    \foreach \i in {1,2,3}{
     
            \node[shape=circle, draw, fill = black, scale= 0.2, font=\scriptsize,label={}] ({\i+1}) at (-2.5,-0.2-\i/3){};
   }
      \foreach \i in {4}{
     
            \node[shape=circle, draw, fill = black, scale= 0.4, font=\scriptsize,label={-90:\scriptsize{$v_{i,i',j}$}}] ({\i+1}) at (-2.5,-0.2 -\i/3){};
   }

  \foreach \i in {1,2,3,4}{
     
            \node[shape=circle, draw, fill = black, scale= 0.2, font=\scriptsize,label={}] ({\i+2}) at (-2,-2-\i/3){};
   }

    \foreach \i in {1,2,...,4}{
            \draw[thin,black!20] (\i)--(\i+0);

   }
   \draw[thin,black!20] (2)-- (1+0) ;
   \draw[thin,black!20] (3)-- (2+0) ;
   \draw[thin,black!20] (4)-- (3+0) ;
  
   \foreach \i in {1,2,...,4}{
            \draw[thin,black!20] (\i)--(\i+1);

   }
   \draw[thin,black!20] (2)-- (1+1) ;
   \draw[thin,black!20] (3)-- (2+1) ;
   \draw[thin,black!20] (4)-- (3+1) ;

   \foreach \i in {1,2,...,4}{
            \draw[thin,black!20] (\i)--(\i+2);

   }
   \draw[thin,black!20] (2)-- (1+2) ;
   \draw[thin,black!20] (3)-- (2+2) ;
   \draw[thin,black!20] (4)-- (3+2) ;

  


  \node [ellipse,draw= black!10, minimum height=5.5cm,minimum width= 3cm, label ={90:$H_j$}] (z) at (4.75,-1.2) {};
  
    \foreach \i in {1,4}{
     
            \node[shape=circle, draw, fill = black, scale= 0.2, font=\footnotesize] ({\i+10}) at (4,-\i/2){};
   }
   \foreach \i in {2}{
     
            \node[shape=circle, draw, fill = black, scale= 0.4, font=\footnotesize,label={180:\scriptsize{$c_{j,l}$}}] ({\i+10}) at (4,-\i/2){};
   }
   \foreach \i in {3}{
     
            \node[shape=circle, draw, fill = black, scale= 0.4, font=\footnotesize,label={230:\scriptsize{$c_{j,i}$}}] ({\i+10}) at (4,-\i/2){};
   }
  

    \foreach \i in {1,2,3,4}{
     
            \node[shape=circle, draw, fill = black, scale= 0.2, font=\footnotesize,label={}] ({\i+10+0}) at (5,1.5-\i/3){};
   }
  
   
    
    \foreach \i in {1,2,4}{
     
            \node[shape=circle, draw, fill = black, scale= 0.2, font=\footnotesize,label={}] ({\i+10+1}) at (5.5,-0.2 -\i/3){};
   }
      \foreach \i in {3}{
     
            \node[shape=circle, draw, fill = black, scale= 0.4, font=\footnotesize,label={-80:\scriptsize{$v_{j,j',i}$}}] ({\i+10+1}) at (5.5,-0.2 -\i/3){};
   }

  \foreach \i in {1,2,3,4}{
     
            \node[shape=circle, draw, fill = black, scale= 0.2, font=\footnotesize,label={}] ({\i+10+2}) at (5,-2 -\i/3){};
   }

    \foreach \i in {1,2,...,4}{
            \draw[thin,black!20] (\i+10)--(\i+10+0);

   }
   \draw[thin,black!20] (2+10)-- (1+10+0) ;
   \draw[thin,black!20] (3+10)-- (2+10+0) ;
   \draw[thin,black!20] (4+10)-- (3+10+0) ;

   \foreach \i in {1,2,...,4}{
            \draw[thin,black!20] (\i+10)--(\i+10+1);

   }
   \draw[thin,black!20] (2+10)-- (1+10+1) ;
   \draw[thin,black!20] (3+10)-- (2+10+1) ;
   \draw[thin,black!20] (4+10)-- (3+10+1) ;

   \foreach \i in {1,2,...,4}{
            \draw[thin,black!20] (\i+10)--(\i+10+2);

   }
   \draw[thin,black!20] (2+10)-- (1+10+2) ;
   \draw[thin,black!20] (3+10)-- (2+10+2) ;
   \draw[thin,black!20] (4+10)-- (3+10+2) ;


  \node [ellipse,draw= black!10, minimum height=1.8cm,minimum width= 2.5cm, label ={30:$H_l$}] (z) at (1.5,0.35) {};
  
    \foreach \i in {2,5}{
     
            \node[shape=circle, draw, fill = black, scale= 0.2, font=\footnotesize] ({\i+20}) at (0.3+\i/3,0){};
   }
   \foreach \i in {3}{
     
            \node[shape=circle, draw, fill = black, scale= 0.4, font=\footnotesize,label={-90:\scriptsize{$c_{l,i}$}}] ({\i+20}) at (0.3+\i/3,0){};
   }
  
  \foreach \i in {4}{
     
            \node[shape=circle, draw, fill = black, scale= 0.4, font=\footnotesize,label={-87:\scriptsize{$c_{l,j}$}}] ({\i+20}) at (0.3+\i/3,0){};
   }

    \foreach \i in {2,3,4,5}{
     
            \node[shape=circle, draw, fill = black, scale= 0.2, font=\footnotesize,label={}] ({\i+20+0}) at (0.5+\i/4,1){};
   }
  
   

    \foreach \i in {2,...,5}{
            \draw[thin,black!20] (\i+20)--(\i+20+0);

   }
   
   \draw[thin,black!20] (3+20)-- (2+20+0) ;
   \draw[thin,black!20] (4+20)-- (3+20+0) ;
   \draw[thin,black!20] (5+20)-- (4+20+0) ;

  \draw[thick,black!20] (2)-- (3+20) ;
   \draw[thick,black!60] (4)-- (3+10) ;
   \draw[thick,black!20] (2+10)-- (4+20) ;
   
   \draw[thick,black!60] (3+10)-- (3+10+1) ;
   \draw[thick,black!60] (4)-- (4+1) ;

\end{tikzpicture}
\caption{ An example of inter gadget edges, and darkened edges forms a short path  $P_{v_{i,i',j},v_{j,j',i}}$ corresponding to an edge $v_{i,i'}v_{j,j'}$ in $G$. }\label{fig:shortpath}
\end{figure}

In the {\kmcc} ({\kmccs}) problem we are given a graph $G=(V,E)$, where $V$ is partitioned into $k$ disjoint sets $V_1,...,V_k$, each of size $n$, and the question is if $G$ has a clique $\mathcal{C}$ of size $k$ such that $|\mathcal{C}\cap V_i|=1$ for every $i\in [k]$. It is known that {\kmccs} is $W[1]$-hard parameterized by $k$ \cite{FHRV09}.

We will give a parameterized reduction from {\kmccs} to {\psp}. Let $G=(V,E)$ and $\{V_1,...V_k\}$ be an input of {\kmccs}. For the simplicity of notations, let the vertices of every set $V_i$ are labeled  $v_{i,1}$ to $v_{i,n}$. We will construct an equivalent instance $(G'=(V',E'), \mathcal{P},k')$ of {\psp} (see Figure  \ref{fig:longpath} and Figure  \ref{fig:shortpath} for overview). For every set $V_i$, we construct a vertex selection gadget $H_i$ (an induced subgraph of $G'$) as follows .
\begin{itemize}
    \item Create a set $C_i=\{c_{i,1},...c_{i,k}\}$ of $k$ vertices, a set $X_i=\{x_{i,1},x_{i,2},...,x_{i,n-1}\}$ of $n-1$ vertices, and connect every $x_{i,j}$ to $c_{i,1}$ where $j\in [n-1]$, we call these edges $E_{C_i,X_i}$. 
    \item For every $v_{i,j}\in V_i$, create a vertex set $V_{i,j}= \{v_{i,j,1},v_{i,j,2},...,v_{i,j,k}\}$ of $k$ vertices, connect $v_{i,j,l}$ to $c_{i,l}$ and $c_{i,l+1}$ for every $l\in[k-1]$, and connect $v_{i,j,k}$ to $c_{i,k}$. We denote these edges by $E_{C_i,V_{i,j}}$.
\end{itemize}

Formally, $H_i= (\bigcup_{j=1}^n V_{i,j} \cup X_i\cup C_i ,  \bigcup_{j=1}^n E_{C_i,V_{i,j}} \cup E_{C_i,X_i})$. Further, let $C=\bigcup_{i=1}^k C_i$,
we add the following edges in $G'$.

\begin{itemize}
    \item For $1\leq i<j \leq k$, we connect $c_{i,j}\in C_i$ to $c_{j,i}\in C_j$. We call these edges the inter gadget edges and denote them by $E_C$. Observe that there are $k \choose 2$ inter gadget edges (Figure  \ref{fig:shortpath}).
\end{itemize}

The above completes the construction of $G' = (\bigcup_{i=1}^k V(H_i), \bigcup_{i=1}^k E(H_i) \cup E_C)$.

We now move on to the construction of the collection $\mathcal{P}$.
\begin{itemize}
    \item Let $P_{x_{i,l},V_{i,j}}= (x_{i,l},c_{i,1},v_{i,j,1},c_{i,2},v_{i,j,2},......., c_{i,k},v_{i,j,k})$, that is a path starting at $x_{i,l}$ and then alternatively going through a vertex in $C_i$ and $V_{i,j}$ and ending at $v_{i,j,k}$. We call such a path a long path. For every $i\in [k]$, $l\in [n-1]$, and $j\in[n]$ we add $P_{x_{i,l},V_{i,j}}$ in $\mathcal{P}$. Observe that there are $n(n-1)$ long paths added from every $H_i$.
    \item For every edge $e=v_{i,i'}v_{j,j'} \in E$ where $v_{i,i'}\in V_i$ and $v_{j,j'}\in V_j$, w.l.o.g. assuming $i<j$, we add a path $P_{v_{i,i',j},v_{j,j',i}}=\{v_{i,i',j},c_{i,j},c_{j,i},v_{j,j',i}\}$ in $\mathcal{P}$ and call such a path, a short path. There are $|E|$ short paths added to $\mathcal{P}$.
\end{itemize}
We set  $k'=k(n-1)+$ $k\choose 2$. The above completes the construction of instance $(G'=(V',E'),\mathcal{P},k')$ with $|\mathcal{P}|= |E|+kn(n-1)$. 

\paragraph{}

The vertex set $C$ forms a vertex cover for $G'$ which is of size $k^2$, and the length of every long path is $2k+1$.  Further, the time taken for construction is $(|V|)^{O(1)}$, which ensures that the reduction is a parameterized reduction. The following concludes the correctness of the reduction and proof of Theorem \ref{theorem:vc-hardness}.

\begin{lemma}\label{lemma:vertexcovercorrectness}
$(G=(V,E),\{V_1,...V_k\})$  is a yes instance of {\kmccs} if and only if $G'$ has $k(n-1)+$ $k\choose 2$ edge disjoint paths in $\mathcal{P}$.
\end{lemma}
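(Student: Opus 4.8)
The plan is to prove both directions by exploiting a tight correspondence between which sets $V_{i,j}$ are \emph{avoided} by the long paths and the vertices chosen for the clique. First I would record two edge-usage facts that drive everything. A long path $P_{x_{i,l},V_{i,j}}$ uses exactly the single edge $x_{i,l}c_{i,1}$ from $E_{C_i,X_i}$ together with \emph{all} edges of $E_{C_i,V_{i,j}}$; consequently two long paths inside the same gadget $H_i$ are edge-disjoint if and only if they use distinct vertices of $X_i$ and distinct sets $V_{i,j}$. A short path for an edge $v_{i,i'}v_{j,j'}\in E$ (with $i<j$) uses exactly one edge of $E_{C_i,V_{i,i'}}$, the unique inter-gadget edge $c_{i,j}c_{j,i}$, and one edge of $E_{C_j,V_{j,j'}}$. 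From these facts two conflict rules follow immediately: a short path touching $V_{i,i'}$ conflicts with any long path through $V_{i,i'}$ (they share the edge $v_{i,i',j}c_{i,j}$), and two short paths for the same pair $(i,j)$ always conflict on the edge $c_{i,j}c_{j,i}$.

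For the forward direction, suppose $\{v_{1,a_1},\dots,v_{k,a_k}\}$ is a multicolored clique. In each gadget $H_i$ I would select, for each of the $n-1$ indices $j\neq a_i$, one long path through $V_{i,j}$, assigning these paths distinct vertices of $X_i$ (possible since $|X_i|=n-1$). By the first fact these $k(n-1)$ long paths are pairwise edge-disjoint (paths in different gadgets share no edges), and they avoid precisely the sets $V_{1,a_1},\dots,V_{k,a_k}$. For each pair $i<j$, the clique edge $v_{i,a_i}v_{j,a_j}\in E$ yields a short path; its two non-inter-gadget edges lie in $E_{C_i,V_{i,a_i}}$ and $E_{C_j,V_{j,a_j}}$, which are untouched by long paths, and its inter-gadget edge is unique to the pair. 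Hence these $\binom{k}{2}$ short paths are edge-disjoint from the long paths and from one another, giving $k(n-1)+\binom{k}{2}$ edge-disjoint paths.

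For the backward direction I would first prove matching upper bounds. Edge-disjoint long paths inside $H_i$ need distinct vertices of $X_i$, so at most $n-1$ long paths come from each gadget; and by the second conflict rule at most one short path is chosen per pair, so at most $\binom{k}{2}$ short paths in total. Since a solution contains exactly $k(n-1)+\binom{k}{2}$ paths, both bounds must be tight: every gadget contributes exactly $n-1$ long paths and every pair contributes exactly one short path. The $n-1$ edge-disjoint long paths of $H_i$ use $n-1$ distinct sets $V_{i,j}$, so exactly one set $V_{i,a_i}$ is avoided; define $a_i$ this way. Now the short path chosen for a pair $(i,j)$ corresponds to some edge $v_{i,i'}v_{j,j'}\in E$, and its edge in $E_{C_i,V_{i,i'}}$ must be free of long paths; by the first conflict rule this forces $V_{i,i'}$ to be the avoided set, i.e. $i'=a_i$, and symmetrically $j'=a_j$. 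Thus $v_{i,a_i}v_{j,a_j}\in E$ for every pair, so $\{v_{1,a_1},\dots,v_{k,a_k}\}$ is a multicolored clique.

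The main obstacle is the backward direction, and specifically making the tightness argument airtight: the equality $|S|=k(n-1)+\binom{k}{2}$ must be used to force \emph{both} that each gadget supplies its full quota of $n-1$ long paths and that each pair supplies a short path, since only then is the avoided set $V_{i,a_i}$ well-defined and can the short path for $(i,j)$ be pinned to the clique vertices. The remaining point requiring care is verifying that the single edge a short path uses inside $E_{C_i,V_{i,i'}}$ is genuinely shared with every long path through $V_{i,i'}$, which is exactly what converts the assumption ``$V_{i,i'}$ is used by a long path'' into a contradiction.
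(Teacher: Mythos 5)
Your proposal is correct and follows essentially the same route as the paper: the same counting argument (degree-one $X_i$ vertices bound long paths by $n-1$ per gadget, inter-gadget edges bound short paths by $\binom{k}{2}$), the same tightness forcing, and the same observation that a long path through $V_{i,j}$ consumes all of $E_{C_i,V_{i,j}}$ so that surviving short paths must pass through the unique avoided set. Your explicit statement of the edge-usage facts up front is a slightly cleaner presentation of what the paper argues inline, but the substance is identical.
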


\begin{proof}
For the first direction, let $(G,\{V_1,....,V_k\})$ is a yes instance of {\kmccs}, and that $v_{i,f(i)}$ is the vertex selected from the set $V_i$ in the solution. From each gadget $H_i$, we select paths $P_{{x_{i,l},V_{i,\sigma(l)}}}$ where $l\in[n-1]$, and let $\sigma : [n-1] \to [n]\setminus f(i) $ be any bijection, i.e. one long path corresponding to every $V_{i,j}$ except $V_{i,f(i)}$. This way we are selecting $n-1$ edge disjoint long paths from each gadget which amounts to $k(n-1)$ long paths selected from all the gadgets combined. In every $H_i$, all the edges incident on $v_{i,f(i),l}$ where $l\in[n]$ are free, that is these edges do not belong to any selected long path. For every pair of vertices $v_{i,f(i)}, v_{j,f(j)}$ in the solution of {\kmccs}, w.l.o.g assuming $i<j$, we select short path $P_{v_{i,f(i),j},v_{j,f(j),i}}$, this way we have selected $k\choose 2$ pairwise edge disjoint short paths. 
\paragraph{}

For the other direction, let there be $k(n-1)+$ $k\choose 2$ pairwise edge disjoint paths in $\cal P$. Since there are $k\choose2$ inter gadget edges $E(C)$ and all short paths contain one inter gadget edge, there can be at most $k\choose 2$ short paths in the solution. Thus, there are at least $k(n-1)$ long paths in the solution, recall that for every gadget $H_i$, every long path starts with $x_{i,l}$ where $l\in [n-1]$, further every $x_{i,l}$ is a degree $1$ vertex, thus at most $n-1$ long paths are selected from every gadget. Further, to ensure that at least $k(n-1)$ long paths are selected, there must be $n-1$ long paths selected from every $H_i$, that is one long path corresponding to every set $V_{i,j}$ except one. Let $\sigma: [k] \to [n]$, such that for the gadget $H_i$  no long path $P_{x_{i,l},V_{i,\sigma(i)}}$ is selected for any $l\in [n-1]$. 
\paragraph{}

We argue that $\bigcup_{i=1}^k v_{i,\sigma(i)}$ is a multi colored clique of size $k$ in $G$. Since there are $k(n-1)$ long paths in the solution, there must be $k\choose 2$ short paths. Since no long path corresponding to $V_{i,\sigma(i)}$ for any $i\in [k]$ is selected in the solution, no edge incident on any vertex of $V_{i,\sigma(i)}$ belongs to any selected long path. Further, for each gadget $H_i$, only edges incident on $V_{i,\sigma(i)}$ do not belong to selected long paths. Thus, every short path selected must be a $P_{v_{i,\sigma(i),j},v_{j,\sigma(j),i}}$ for $i<j$, this can only happen if $v_{i,\sigma(i)}$ and $v_{j,\sigma(j)}$ are adjacent in $G$. This finishes the proof.
\end{proof}
\section{Hardness with Respect to Pathwidth $+$ Maximum Degree $+$ Solution Size.}
\begin{figure}[h]
    \centering
    \begin{tikzpicture}[scale=1]

    \node [rectangle,dashed, draw= black!20, minimum height=4.8cm,minimum width= 10cm, label ={-5:$P_{i}$}] (Pi) at (7.3,-3.1) {};

   \node [ label ={0:\tiny{\textcolor{blue}{$P^e_{i,k}$}}}] (Pi) at (12.5,-6) {};
   \node [ label ={0:\tiny{\textcolor{blue}{$P^e_{i,2}$}}}] (Pi) at (12.5,-6.8) {};
   \node [ label ={0:\tiny{\textcolor{blue}{$P^e_{i,1}$}}}] (Pi) at (12.5,-7.6) {};

   \draw[thin,black!60] (2.6,-1.2) to (2.6,-5);
   \draw[thin,black!60] (10.63,-1.2) to (10.63,-5);
    \draw[thin,black!60] (12.03,-1.2) to (12.03,-5);
   
   \draw[thin,black!60] (4.7,-1.2) to (4.7,-5);
   \draw[thin,black!60] (6.4,-1.2) to (6.4,-5);
   \draw[thin,black!60] (8.2,-1.2) to (8.2,-5);

   \draw[thin,black!60] (2.6,-5) to[out=0,in=270] (3.2,-4);
   \draw[thin,black!60] (8.2,-5) to[out=0,in=270] (8.8,-4);
   
   \draw[thin,black!60] (4.7,-5) to[out=0,in=-180] (6.4,-1.2);
   \draw[thin,black!60] (6.4,-5) to[out=0,in=-180] (8.2,-1.2);
   \draw[thin,black!60] (10.3,-1.5) to[out=0,in=180] (10.63,-1.2);
   \draw[thin,black!60] (4.2,-1.7) to[out=0,in=180] (4.7,-1.2);
   \draw[thin,black!30] (10.63,-5) to[out=0,in=180] (12.03,-1.2);
   
   \draw[thin,red!40] (10.63,-0.5) to (10.63,-1.2);
     \draw[thin,red!40] (10.63,-0.5) to (12.03,-1.2);
     \draw[thin,red!40] (2.6,-0.5) to (2.6,-1.2);
     \draw[thin,red!40] (2.6,-0.5) to (3,-0.9);
     \draw[thin,red!40] (4.7,-0.5) to (4.7,-1.2);
     \draw[thin,red!40] (4.7,-0.5) to (6.4,-1.2);
     \draw[thin,red!40] (6.4,-0.5) to (6.4,-1.2);
     \draw[thin,red!40] (6.4,-0.5) to (8.2,-1.2);
     \draw[thin,red!40] (8.2,-0.5) to (8.2,-1.2);
     \draw[thin,red!40] (8.2,-0.5) to (8.5,-0.9);
     \draw[thin,red!40] (9.7,-0.5) to (10.63,-1.2);
     \draw[thin,red!40] (9.7,-0.5) to (9.7,-1.2);
     
     \draw[thin,blue!20] (2.6,-6.0) to (12.03,-6.0);
     \draw[thin,blue!20] (2.6,-6.8) to (12.03,-6.8);
     \draw[thin,blue!20] (2.6,-7.6) to (12.03,-7.6);
     
     \draw[thin,blue!20] (2.6,-1.8) to[out=240,in=110] (2.6,-7.6);
     \draw[thin,blue!20] (2.6,-3.0) to[out=240,in=110] (2.6,-6.8);
     \draw[thin,blue!20] (2.6,-5.0) to[out=240,in=110] (2.6,-6.0);
     
      \draw[thin,blue!20] (4.7,-1.8) to[out=240,in=110] (4.7,-7.6);
     \draw[thin,blue!20] (4.7,-3.0) to[out=240,in=110] (4.7,-6.8);
     \draw[thin,blue!20] (4.7,-5.0) to[out=240,in=110] (4.7,-6.0);
     
      \draw[thin,blue!20] (6.4,-1.8) to[out=240,in=110] (6.4,-7.6);
     \draw[thin,blue!20] (6.4,-3.0) to[out=240,in=110](6.4,-6.8);
     \draw[thin,blue!20] (6.4,-5.0) to[out=240,in=110] (6.4,-6.0);
     
      \draw[thin,blue!20] (8.2,-1.8) to[out=240,in=110] (8.2,-7.8);
     \draw[thin,blue!20] (8.2,-3.0) to[out=240,in=110] (8.2,-6.8);
     \draw[thin,blue!20] (8.2,-5.0) to[out=240,in=110] (8.2,-6);
     
     \draw[thin,blue!20] (10.63,-1.8) to[out=240,in=110](10.63,-7.6);
     \draw[thin,blue!20] (10.63,-3.0) to[out=240,in=110] (10.63,-6.8);
     \draw[thin,blue!20] (10.63,-5.0) to[out=240,in=110](10.63,-6);
     

    \foreach \j in {3}{
        
             \foreach \i in {1}{
              \pgfmathtruncatemacro\result{\j-3}
            \node[shape=circle, fill = blue, scale= 0.3, font=\footnotesize,label={-90:\tiny{\textcolor{blue}{$x_{i,i',k}$}}}] ({vl\j}) at (\j*1.8-0.7,-6.0){};

   }
   }
   \foreach \j in {4,5}{
        
             \foreach \i in {1}{
              \pgfmathtruncatemacro\result{\j-3}
            \node[shape=circle, fill = blue, scale= 0.3, font=\footnotesize,label={-90:\tiny{\textcolor{blue}{$x_{i,i'+\result,k}$}}}] ({vl\j}) at (\j*1.8-0.8,-6.0){};

   }
   }
   
    \foreach \j in {1}{
        
             \foreach \i in {1}{
              \pgfmathtruncatemacro\result{\j-3}
            \node[shape=circle, fill = blue, scale= 0.3, font=\footnotesize,label={-90:\tiny{\textcolor{blue}{$x_{i,1,k}$}}}] ({vl\j}) at (2.6,-6.0){};

   }
   }
   \foreach \j in {7}{
        
             \foreach \i in {1}{
              \pgfmathtruncatemacro\result{\j-3}
            \node[shape=circle, fill = blue, scale= 0.3, font=\footnotesize,label={-90:\tiny{\textcolor{blue}{$x_{i,n,k}$}}}] ({vl\j}) at (-0.8+6.35*1.8,-6.0){};

   }
   }
   \foreach \j in {8}{
        
             \foreach \i in {1}{
              \pgfmathtruncatemacro\result{\j-3}
            \node[shape=circle, fill = blue, scale= 0.3, font=\footnotesize,label={-90:\tiny{\textcolor{blue}{$c_{i,k}$}}}] ({vl\j}) at (-1.2+7.35*1.8,-6.0){};

   }
   }

   \foreach \i in {1,2,3}{
   
            \node[shape=circle, fill = blue, scale= 0.15, font=\footnotesize] ({vl1\i}) at (3.0+\i*0.3,-6.0){};
            \node[shape=circle, fill = blue, scale= 0.15, font=\footnotesize] ({vl1\i}) at (9.0+\i*0.3,-6.0){};

   }

   
   \foreach \j in {3}{
        
             \foreach \i in {1}{
              \pgfmathtruncatemacro\result{\j-3}
            \node[shape=circle, fill = blue, scale= 0.3, font=\footnotesize,label={-90:\tiny{\textcolor{blue}{$x_{i,i',2}$}}}] ({vl\j}) at (\j*1.8-0.7,-6.8){};

   }
   }
   \foreach \j in {4,5}{
        
             \foreach \i in {1}{
              \pgfmathtruncatemacro\result{\j-3}
            \node[shape=circle, fill = blue, scale= 0.3, font=\footnotesize,label={-90:\tiny{\textcolor{blue}{$x_{i,i'+\result,2}$}}}] ({vl\j}) at (\j*1.8-0.8,-6.8){};

   }
   }
   
    \foreach \j in {1}{
        
             \foreach \i in {1}{
              \pgfmathtruncatemacro\result{\j-3}
            \node[shape=circle, fill = blue, scale= 0.3, font=\footnotesize,label={-90:\tiny{\textcolor{blue}{$x_{i,1,2}$}}}] ({vl\j}) at (2.6,-6.8){};

   }
   }
   \foreach \j in {7}{
        
             \foreach \i in {1}{
              \pgfmathtruncatemacro\result{\j-3}
            \node[shape=circle, fill = blue, scale= 0.3, font=\footnotesize,label={-90:\tiny{\textcolor{blue}{$x_{i,n,2}$}}}] ({vl\j}) at (-0.8+6.35*1.8,-6.8){};

   }
   }
   \foreach \j in {8}{
        
             \foreach \i in {1}{
              \pgfmathtruncatemacro\result{\j-3}
            \node[shape=circle, fill = blue, scale= 0.3, font=\footnotesize,label={-90:\tiny{\textcolor{blue}{$c_{i,2}$}}}] ({vl\j}) at (-1.2+7.35*1.8,-6.8){};

   }
   }

   \foreach \i in {1,2,3}{
   
            \node[shape=circle, fill = blue, scale= 0.15, font=\footnotesize] ({vl1\i}) at (3.0+\i*0.3,-6.8){};
            \node[shape=circle, fill = blue, scale= 0.15, font=\footnotesize] ({vl1\i}) at (9.0+\i*0.3,-6.8){};

   }

   
    \foreach \j in {3}{
        
             \foreach \i in {1}{
              \pgfmathtruncatemacro\result{\j-3}
            \node[shape=circle, fill = blue, scale= 0.3, font=\footnotesize,label={-90:\tiny{\textcolor{blue}{$x_{i,i',1}$}}}] ({vl\j}) at (\j*1.8-0.7,-7.6){};

   }
   }
   
   \foreach \j in {4,5}{
        
             \foreach \i in {1}{
              \pgfmathtruncatemacro\result{\j-3}
            \node[shape=circle, fill = blue, scale= 0.3, font=\footnotesize,label={-90:\tiny{\textcolor{blue}{$x_{i,i'+\result,1}$}}}] ({vl\j}) at (\j*1.8-0.8,-7.6){};

   }
   }
   
    \foreach \j in {1}{
        
             \foreach \i in {1}{
              \pgfmathtruncatemacro\result{\j-3}
            \node[shape=circle, fill = blue, scale= 0.3, font=\footnotesize,label={-90:\tiny{\textcolor{blue}{$x_{i,1,1}$}}}] ({vl\j}) at (2.6,-7.6){};

   }
   }
   \foreach \j in {7}{
        
             \foreach \i in {1}{
              \pgfmathtruncatemacro\result{\j-3}
            \node[shape=circle, fill = blue, scale= 0.3, font=\footnotesize,label={-90:\tiny{\textcolor{blue}{$x_{i,n,1}$}}}] ({vl\j}) at (-0.8+6.35*1.8,-7.6){};

   }
   }
   \foreach \j in {8}{
        
             \foreach \i in {1}{
              \pgfmathtruncatemacro\result{\j-3}
            \node[shape=circle, fill = blue, scale= 0.3, font=\footnotesize,label={-90:\tiny{\textcolor{blue}{$c_{i,1}$}}}] ({vl\j}) at (-1.2+7.35*1.8,-7.6){};

   }
   }

   \foreach \i in {1,2,3}{
   
            \node[shape=circle, fill = blue, scale= 0.15, font=\footnotesize] ({vl1\i}) at (3.0+\i*0.3,-7.6){};
            \node[shape=circle, fill = blue, scale= 0.15, font=\footnotesize] ({vl1\i}) at (9.0+\i*0.3,-7.6){};

   }

    \foreach \j in {1}{
        
             \foreach \i in {1}{
              \pgfmathtruncatemacro\result{\j-3}
            \node[shape=circle, fill = red, scale= 0.3, font=\footnotesize,label={0:\tiny{\textcolor{red}{$w_{i,1}$}}}] ({w\j}) at (\j*2.6,-0.5){};

   }
   }

   \foreach \j in {4,5}{
        
             \foreach \i in {1}{
              \pgfmathtruncatemacro\result{\j-3}
            \node[shape=circle, fill = red, scale= 0.3, font=\footnotesize,label={0:\tiny{\textcolor{red}{$w_{i,i'+\result}$}}}] ({w\j}) at (\j*1.8-0.8,-0.5){};

   }
   }
   
   \foreach \j in {3}{
        
             \foreach \i in {1}{
              \pgfmathtruncatemacro\result{\j-3}
            \node[shape=circle, fill = red, scale= 0.3, font=\footnotesize,label={0:\tiny{\textcolor{red}{$w_{i,i'}$}}}] ({w\j}) at (\j*1.8-0.7,-0.5){};

   }
   }
  
   \foreach \j in {7}{
        
             \foreach \i in {1}{
              \pgfmathtruncatemacro\result{\j-3}
            \node[shape=circle, fill = red, scale= 0.3, font=\footnotesize,label={90:\tiny{\textcolor{red}{$w_{i,n-1}$}}}] ({w\j}) at (9.7,-0.5){};

   }
   }
   \foreach \j in {8}{
        
             \foreach \i in {1}{
              \pgfmathtruncatemacro\result{\j-3}
            \node[shape=circle, fill = red, scale= 0.3, font=\footnotesize,label={90:\tiny{\textcolor{red}{$w_{i,n}$}}}] ({w\j}) at (10.63,-0.5){};

   }
   }

   \foreach \i in {1,2,3}{
   
            \node[shape=circle, draw, fill = black, scale= 0.1, font=\footnotesize] ({v1\i}) at (3.0+\i*0.3,-3.3){};
            \node[shape=circle, draw, fill = black, scale= 0.1, font=\footnotesize] ({v1\i}) at (8.9+\i*0.3,-3.3){};

   }
        
             \foreach \i in {1,2}{
              \pgfmathtruncatemacro\result{\i+1}
            \node[shape=circle, draw, fill = black, scale= 0.3, font=\footnotesize,label={0:\tiny{$v_{i,1,\i}$}}] ({v\i1}) at (0.8+1*1.8,-\i*1.2){};

   }
    \foreach \i in {1,2}{
            \node[shape=circle, draw, fill = black, scale= 0.3, font=\footnotesize,label={0:\tiny{$u_{i,1,\i}$}}] ({u\i1}) at (0.8+1*1.8,-0.6-\i*1.2){};
   }

    \foreach \i in {1,2}{
              \pgfmathtruncatemacro\result{\i+1}
            \node[shape=circle, draw, fill = black, scale= 0.3, font=\footnotesize,label={0:\tiny{$v_{i,n,\i}$}}] ({v\i7}) at (-0.8+6.35*1.8,-\i*1.2){};

   }
    \foreach \i in {1,2}{
              \pgfmathtruncatemacro\result{\i+1}
            \node[shape=circle, draw, fill = black, scale= 0.3, font=\footnotesize,label={0:\tiny{$v_{i,n+1,\i}$}}] ({v\i7}) at (-1.2+7.35*1.8,-\i*1.2){};

   }
    \foreach \i in {1,2}{
            \node[shape=circle, draw, fill = black, scale= 0.3, font=\footnotesize,label={0:\tiny{$u_{i,n,\i}$}}] ({u\i7}) at (-0.8+6.35*1.8,-0.6-\i*1.2){};
   }
   
    \foreach \i in {1,2}{
            \node[shape=circle, draw, fill = black, scale= 0.3, font=\footnotesize,label={0:\tiny{$u_{i,n+1,\i}$}}] ({u\i7}) at (-1.2+7.35*1.8,-0.6-\i*1.2){};
   }

     \foreach \i in {6}{
            \node[shape=circle, draw, fill = black, scale= 0.3, font=\footnotesize,label={-20:\tiny{$v_{i,1,k}$}}] ({v\i1}) at (0.8+1*1.8,-0.8-\i*0.6){};
    }
     \foreach \i in {6}{
            \node[shape=circle, draw, fill = black, scale= 0.3, font=\footnotesize,label={-20:\tiny{$u_{i,1,k}$}}] ({v\i1}) at (0.8+1*1.8,-1.4-\i*0.6){};
   }
    \foreach \i in {6}{
            \node[shape=circle, draw, fill = black, scale= 0.3, font=\footnotesize,label={0:\tiny{$v_{i,n,k}$}}] ({v\i7}) at (-0.8+6.35*1.8,-0.8-\i*0.6){};
    }
     \foreach \i in {6}{
            \node[shape=circle, draw, fill = black, scale= 0.3, font=\footnotesize,label={0:\tiny{$u_{i,n,k}$}}] ({u\i7}) at (-0.8+6.35*1.8,-1.4-\i*0.6){};
   }
   
   \foreach \i in {6}{
            \node[shape=circle, draw, fill = black, scale= 0.3, font=\footnotesize,label={0:\tiny{$v_{i,n+1,k}$}}] ({v\i7}) at (-1.2+7.35*1.8,-0.8-\i*0.6){};
    }
     \foreach \i in {6}{
            \node[shape=circle, draw, fill = black, scale= 0.3, font=\footnotesize,label={0:\tiny{$u_{i,n+1,k}$}}] ({u\i7}) at (-1.2+7.35*1.8,-1.4-\i*0.6){};
   }

   \foreach \j in {3}{
        
             \foreach \i in {1}{
              \pgfmathtruncatemacro\result{\i+1}
            \node[shape=circle, draw, fill = black, scale= 0.3, font=\footnotesize,label={0:\tiny{$v_{i,i',\i}$}}] ({v\i\j}) at (\j*1.8-0.7,-\i*1.2){};

   }
    \foreach \i in {2}{
              \pgfmathtruncatemacro\result{\i+1}
            \node[shape=circle, draw, fill = black, scale= 0.3, font=\footnotesize] ({v\i\j}) at (\j*1.8-0.7,-\i*1.2){};

   }
    \foreach \i in {1}{
            \node[shape=circle, draw, fill = black, scale= 0.3, font=\footnotesize] ({u\i\j}) at (\j*1.8-0.7,-0.6-\i*1.2){};
   }
   \foreach \i in {2}{
            \node[shape=circle, draw, fill = black, scale= 0.3, font=\footnotesize] ({u\i\j}) at (\j*1.8-0.7,-0.6-\i*1.2){};
   }
     \foreach \i in {6}{
            \node[shape=circle, draw, fill = black, scale= 0.3, font=\footnotesize] ({v\i\j}) at (\j*1.8-0.7,-0.8-\i*0.6){};
    }
     \foreach \i in {6}{
            \node[shape=circle, draw, fill = black, scale= 0.3, font=\footnotesize,label={270:\tiny{$u_{i,i',k}$}}] ({u\i\j}) at (\j*1.8-0.7,-1.4-\i*0.6){};
   }
   }
   
   \foreach \j in {4,5}{
        
             \foreach \i in {1}{
              \pgfmathtruncatemacro\result{\j-3}
            \node[shape=circle, draw, fill = black, scale= 0.3, font=\footnotesize,label={0:\tiny{$v_{i,i'+\result,\i}$}}] ({v\i\j}) at (\j*1.8-0.8,-\i*1.2){};

            }    
            \foreach \i in {1}{
                    \pgfmathtruncatemacro\result{\j-3}
                    \node[shape=circle, draw, fill = black, scale= 0.3, font=\footnotesize] ({u\i\j}) at (\j*1.8-0.8,-0.6-\i*1.2){};
            }
             \foreach \i in {2}{
              \pgfmathtruncatemacro\result{\j-3}
            \node[shape=circle, draw, fill = black, scale= 0.3, font=\footnotesize] ({v\i\j}) at (\j*1.8-0.8,-\i*1.2){};

            }    
            \foreach \i in {2}{
                    \pgfmathtruncatemacro\result{\j-3}
                    \node[shape=circle, draw, fill = black, scale= 0.3, font=\footnotesize] ({u\i\j}) at (\j*1.8-0.8,-0.6-\i*1.2){};
            }

                \foreach \i in {6}{
                        \pgfmathtruncatemacro\result{\j-3}
                            \node[shape=circle, draw, fill = black, scale= 0.3, font=\footnotesize] ({v\i\j}) at (\j*1.8-0.8,-0.8-\i*0.6){};
                }
            \foreach \i in {6}{
                    \pgfmathtruncatemacro\result{\j-3}
                     \node[shape=circle, draw, fill = black, scale= 0.3, font=\footnotesize,label={270:\tiny{$u_{i,i'+\result,k}$}}] ({u\i\j}) at (\j*1.8-0.8,-1.4-\i*0.6){};
            }

   }

     \foreach \j in {4,5}{
        
             \foreach \i in {3,4,5}{
            \node[shape=circle, draw, fill = black, scale= 0.1, font=\footnotesize] ({v\i\j}) at (\j*1.8-0.8,-2.5-\i*0.3){};

   }
   \foreach \i in {3,4,5}{
            \node[shape=circle, draw, fill = black, scale= 0.1, font=\footnotesize] ({v\i\j}) at (3*1.8-0.7,-2.5-\i*0.3){};

   }
   }
   \foreach \j in {1}{
        
             \foreach \i in {3,4,5}{
            \node[shape=circle, draw, fill = black, scale= 0.1, font=\footnotesize] ({v\i\j}) at (0.8+\j*1.8,-2.5-\i*0.3){};

   }
   }
   \foreach \j in {7}{
        
             \foreach \i in {3,4,5}{
            \node[shape=circle, draw, fill = black, scale= 0.1, font=\footnotesize] ({v\i\j}) at (-0.8+6.35*1.8,-2.5-\i*0.3){};

   }
   }
    \foreach \j in {8}{
        
             \foreach \i in {3,4,5}{
            \node[shape=circle, draw, fill = black, scale= 0.1, font=\footnotesize] ({v\i\j}) at (-1.2+7.35*1.8,-2.5-\i*0.3){};

   }
   }

\end{tikzpicture}
\caption{An example of path $P_i$, edge verification paths $P^e_{i,1}$ $P^e_{i,2}$, and $P^e_{i,k}$, also the edges between vertices of vertex selection paths and edge verification paths. }\label{fig:PWDK}
\end{figure}
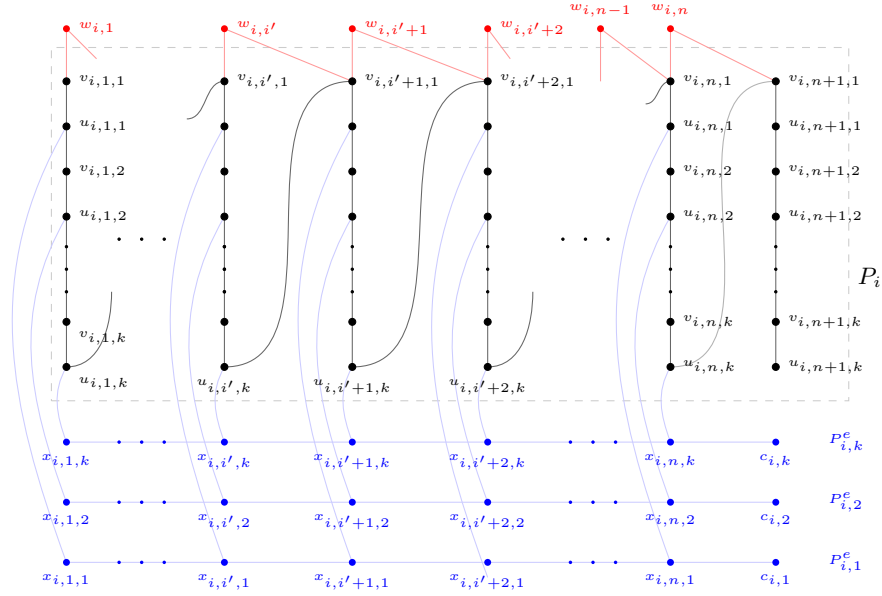

  We give a parameterized reduction from {\kmccs} to {\psp}.
 Let $G=(V,E)$ and $\{V_1,...,V_k\}$ be the input for {\kmccs}, for the  simplicity of notations we assume that every set has $n$ vertices and the vertices of set $V_i$ are labeled $v_{i,1}$ to $v_{i,n}$. We will construct an equivalent instance $(G'=(V',E'),\mathcal{P},k')$ of {\psp} (see Figure  \ref{fig:PWDK} for overview), the construction of $G'$ is as follows.
 
 \begin{itemize}
     \item For every $V_i$, we construct a gadget (subgraph of $G'$) which includes a vertex selection path $P_i$, a vertex set $W_i$, and $k$ edge verification paths $P^e_{i,l}$ where $l\in[k]$ as follows.
 \begin{itemize}
     \item  Corresponding to $V_i$, we start with creating $n+1$ paths of $2k$ vertices each, one path for every vertex $v_{i,i'}\in V_i$, and an additional path. For every $i'\in[n+1]$, the $i'$ path is $(v_{i,i',1},u_{i,i',1}, v_{i,i',2},u_{i,i',2}...., v_{i,i',k},u_{i,i',k})$. That is, it is a sequence of $v_{i,i',l},u_{i,i',l}$ for $l=1$ to $l=k$. We now combine these $n+1$ paths into one path $P_i$ by adding an edge between $u_{i,i',k}$ and $v_{i,i'+1,1}$ for every $i'\in [n]$. 
     
     \item We create $n$ vertices $w_{i,1}$ to $w_{i,n}$ and call the set of these vertices $W_i$. For every $i'\in[n]$, we connect $w_{i,i'}$ to $v_{i,i',1}$ and  $v_{i,i'+1,1}$ .

     \item We create $k$ edge verification paths $P^e_{i,1}$ to  $P^e_{i,k}$ with $n+1$ vertices each. The path $P^e_{i,j}$ is $(x_{i,1,j},x_{i,2,j},...,x_{i,n,j},c_{i,j})$, that is, its a sequence of vertices $x_{i,1,j}$ to $x_{i,n,j}$ and a vertex $c_{i,j}$ at the end.
     
     \item For every $j\in[k]$, $i'\in[n]$, we connect $u_{i,i',j}$ to $x_{i,i',j}$. These edges connects the vertices of vertex selection path $P_i$ to edge verification paths $P^e_{i,j}$.
\end{itemize}
     
     \item After constructing above mentioned gadgets for every vertex set in $\{V_1,..,V_k\}$, for $1\leq i<j \leq k$, we connect $c_{i,j}$ to $c_{j,i}$. We call these edges the inter gadget edges. Observe that there are $k \choose 2$ inter gadget edges.
\end{itemize}
\paragraph{}

    The above completes the construction of $G'$. We now construct collection $\mathcal{P}$ of size $nk+ |E|$ as follows.

    \begin{itemize}
        \item For every $i\in [k]$, from the subgraph of $G'$ induced by $V(P_i)\cup W_i$, we will add $n$ paths in the collection $\cal P$ as follows.
        \begin{itemize}
        \item Add a path $l_{i,\bar  i'}=(P_i(v_{i,1,1},v_{i,i',1}),w_{i,i'},P_i(v_{i,i'+1,1},u_{i,n+1,k}))$ for every $i'\in[n]$, where $P_i(v_{i,1,1},v_{i,i',1})$ is the path from vertex $v_{i,1,1}$ to $v_{i,i',1}$ in $P_i$ (a unique path since $P_i$ is a path). Intuitively, for every $i'\in[n]$ the $l_{i,\bar {i'}}$ contains all the edges of $P_i$ except the edges which belong to subpath $P_i(v_{i,i',1},u_{i,i',k})$. We call these paths the long paths. 
         \end{itemize}
         
        \item For every edge $v_{i,i'}v_{j,j'}\in E$ where $i<j$, we add the path $s_{i,i',j,j'}$= \\$(v_{i,i',j},u_{i,i',j}, P^e_{i,j}(x_{i,i',j},c_{i,j}),P^e_{j,i}(c_{j,i},x_{j,j',i}),u_{j,j',i},v_{j,j',i})$ in $\calp$, where\\ $P^e_{i,j}(x_{i,i',j},c_{i,j})$ is the path from $x_{i,i',j}$ to $c_{i,j}$ in $P^e_{i,j}$ (a unique path, since $P^e_{i,j}$ is a path). We note that every $s_{i,i',j,j'}$ contains exactly one inter gadget edge $c_{i,j},c_{j,i}$. This finishes the construction of $\cal P$.
        
    \end{itemize}
    \paragraph{}

    We set $k'=k+{k\choose2}$. Observe that the construction of $(G',\mathcal{P},k')$ takes time $(|V|)^{O(1)}$. We now claim the bounds on pathwidth and maximum degree of $G'$.

    \begin{lemma}\label{lemma:Pathwidthbound}
    Pathwidth of $G'$ is $O(k^2)$ and maximum degree of $G'$ is $4$.
    \end{lemma}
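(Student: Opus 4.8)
The two bounds are independent and both follow by inspecting the construction directly, so the plan is to treat them in turn; neither needs any sophisticated machinery.

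For the maximum degree I would simply go through the few types of vertex. Along the selection path $P_i$ every vertex has at most two path-neighbours; additionally each $v_{i,i',1}$ is incident to at most the two set vertices $w_{i,i'-1},w_{i,i'}$, each $u_{i,i',j}$ is incident to exactly one ``rung'' vertex $x_{i,i',j}$, each $x_{i,i',j}$ has its at most two neighbours along $P^e_{i,j}$ plus the one rung to $u_{i,i',j}$, each $w_{i,i'}$ has degree $2$, and each $c_{i,j}$ is incident to $x_{i,n,j}$ and to at most one inter-gadget partner $c_{j,i}$. Hence every vertex of $G'$ has degree bounded by a fixed constant (at most $4$), which is in particular $O(k)$.

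For the pathwidth I would exhibit a path decomposition of width $O(k^2)$ obtained by sweeping all $k$ gadgets simultaneously from left to right along the common coordinate $i'\in[n+1]$. For each $i'$ let the column $\mathrm{Col}(i')=\bigcup_{m=1}^{k}\bigl(\{v_{m,i',j},u_{m,i',j}:j\in[k]\}\cup\{w_{m,i'}\}\cup\{x_{m,i',j}:j\in[k]\}\bigr)$ collect the $i'$-th vertices of every gadget (with $w$ and $x$ absent for $i'=n+1$); this has size $O(k^2)$. The only edges joining $\mathrm{Col}(i')$ to later columns are, in each gadget $m$, the single path-join $u_{m,i',k}v_{m,i'+1,1}$, the $k$ edges $x_{m,i',j}x_{m,i'+1,j}$, and the bridge through $w_{m,i'}$; thus $S_{i'}=\bigcup_{m}\bigl(\{u_{m,i',k},w_{m,i'}\}\cup\{x_{m,i',j}:j\in[k]\}\bigr)$ is a separator of size at most $k(k+2)=O(k^2)$ between the columns $\le i'$ and those $>i'$.

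Taking bags $B_{i'}=\mathrm{Col}(i')\cup S_{i'-1}$ (with $S_0=\emptyset$) in the order $B_1,\dots,B_{n+1}$ then gives bags of size $O(k^2)$ containing every within-column edge and every column-crossing edge, and each vertex occurs in one bag or in two consecutive bags, so the interval property holds. The step that needs care is the inter-gadget edges $c_{i,j}c_{j,i}$, which link vertices of different gadgets and are not local to any column; I would handle these by appending a single terminal bag $B_{\mathrm{final}}=\{x_{m,n,j}:m,j\in[k]\}\cup\{c_{m,j}:m,j\in[k]\}$ of size $\le 2k^2$. Since every $x_{m,n,j}$ already lies in $S_n$, keeping it present until $B_{\mathrm{final}}$ preserves contiguity, and $B_{\mathrm{final}}$ covers all $\binom{k}{2}$ inter-gadget edges together with the edges $x_{m,n,j}c_{m,j}$. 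The resulting decomposition has width $O(k^2)$, and the main (if routine) obstacle is the bookkeeping verification that every edge of $G'$ appears in some bag and that every vertex's bags form an interval, which completes the argument.
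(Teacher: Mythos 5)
Your proof is correct, but your path decomposition is organized genuinely differently from the paper's. The paper sweeps the gadgets one at a time: for each $i$ it walks along the edges of $P_i$, placing the two endpoints of each edge in a fresh bag and then pulling into each bag the attached vertices $x_{i,i',j}$ and $w_{i,i'}$ (at most $2k+2$ extra vertices per bag), and it handles the inter-gadget edges by simply placing all $k^2$ vertices $c_{i,j}$ into \emph{every} bag. You instead sweep all $k$ gadgets simultaneously, column by column in the shared index $i'$, so that each bag is an entire column of size $O(k^2)$ plus a separator, and you isolate the $c$-vertices in a single terminal bag of size $2k^2$. Both yield width $O(k^2)$: in the paper's decomposition the quadratic term comes entirely from the ever-present $c$-vertices, while in yours it comes from the columns themselves and the $c$-vertices are treated locally, which is arguably a cleaner way to cover the $\binom{k}{2}$ inter-gadget edges; neither improves the asymptotic bound. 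On the maximum degree you are also right, and in fact sharper than the paper: the paper asserts that each $u_{i,i',j}$ has $2+k$ neighbours, but since the construction attaches $u_{i,i',j}$ only to the single vertex $x_{i,i',j}$, its degree is $3$, and the true maximum degree is the constant $4$, attained at the vertices $v_{i,i',1}$; either way the claimed $O(k)$ bound holds.
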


\begin{proof}
      Recalling the construction of $G'$, every $u_{i,i',j}$ has at most three neighbors, as they are connected to one vertex of edge verification path $P^e_{i,j}$. Every $x_{i,i',j}$ has at most three neighbors. Every $c_{i,j}$ has at most two neighbors. Every $w_{i,i'}$ has at most two neighbors. Every $v_{i,i',j}$ has at most four neighbors, as it can have at most two neighbors from path $P_i$ and at most two neighbors from $W_i$, and these are the vertices with highest degree. Thus maximum degree of $G'$ is at most $4$.
      \paragraph{}

      To bound the pathwidth, we will construct a path decomposition with pathwidth $O(k^2)$ for $G'$. We first decompose the gadgets corresponding to a fixed $i$, which is a sub-graph of $G'$ induced on the vertices $V(P_i)$, $W_i$, and $V(P^e_{i,i'})$. Since these gadgets are similar, we can use the same construction for every $i$.
      \paragraph{}

      For the simplicity of notations, we label $2k(n+1)-1$ edges of $P_{i}$ as $e_{i,1}$ to $e_{i,2k(n+1)-1}$ starting from $v_{i,1,1}u_{i,1,1}$ till $v_{i,n+1,k}u_{i,n+1,k}$. We construct the path decomposition as follows.
      
      \begin{enumerate}
          \item Create a sequence of bags $\beta_i =(B_{i,1},B_{i,2},...,B_{i,2k(n+1)-1})$;
          \item For every $j\in [2k(n+1)-1]$, add $V(e_{i,j})$ in $B_{i,j}$, that is we add every edge of $P_{i}$ sequentially in a sequence of bags;
          \item For every $P^e_{i,j}$ where $j\in[k]$, add $x_{i,i',j}$ in $B_{i,q}$ if $B_{i,q}$ contains at least one vertex from $\{v_{i,i',1},v_{i,i',2},..,v_{i,i',k}\}\cup \{u_{i,i',1},u_{i,i',2},..,u_{i,i',k}\}$;
          \item For every $i'\in[n]$, add $w_{i,i'}$ in $B_{i,q}$ if $B_{i,q}$ contains at least one vertex from\\ $\{v_{i,i',1},v_{i,i',2},...,v_{i,i',k}\}\cup \{u_{i,i',1},u_{i,i',2},...,u_{i,i',k}\}$.
      \end{enumerate}
\paragraph{}

After constructing $\beta_i$. Let $\beta= (\beta_1,\beta_2,...,\beta_k)$. We now add vertices $c_{i,j}$ for every $i,j\in[k]$ into every bag of every $\beta$. That is every $c_{i,j}$ belongs to every bag. We claim that $\beta$ represents the path decomposition of $G'$.

\paragraph{}

We first bound the maximum number of vertices in a bag. In the step 2, we added two vertices in each bag which are either $\{v_{i,i',l},u_{i,i',l}\}$ or $\{u_{i,i',k},v_{i,i'+1,1}\}$ for $i'\in[n+1]$, this implies that in step 3 we are adding at most $2k$ vertices in every bag $B_q$. Similar to step 3, addition of vertices $w_{i,i'}$ in step 4 depends on the vertices added in step $2$ and this step adds at most $2$ vertices in every bag. Finally, we add $k^2$ vertices $c_{i,j}$ in every bag. Thus, every bag contains at most $O(k^2)$ vertices, that is the pathwidth is $O(k^2)$. 
\paragraph{}

For the correctness of the decomposition, it is straightforward to verify that every vertex of $G'$ belong to at least one of the bag. Step 2 of construction ensures that for every edge $e_j\in P_{i}$, there is a bag $B_{i,j}$ which contain $V(e_j)$. All the edges between a vertex of an edge verification path $P^e_{i,j}$ and a vertex of a vertex selection path $P_i$ are of the form $u_{i,i',j}x_{i,i',j}$, and step 3 ensures that for every such edge there is a bag which contain them. The edges of edge verification paths $P^e_{i,j}$ of the form $x_{i,i',j}x_{i,i'+1,j}$ will be added (step 3) to the bag which contains edge $u_{i,i',k}v_{i,i'+1,1}$ (its an edge in $P_i$ and existence of such a bag is ensured by step 2). Similarly, the edges of form $w_{i,i'}v_{i,i',1}$ will be added (step 4) to the bag which contains vertex $v_{i,i',1}$, and the edges of the form $w_{i,i'}v_{i,i'+1,1}$ will be added by step 4 to the bag containing $u_{i,i',k},v_{i,i'+1,1}$ (existence of such bag assured by step 2). All the remaining edges incidents on a vertex $c_{i,j}$ and these vertices are added to every bag.
\paragraph{}

We now verify that if a vertex $v\in G'$ belongs to bags $B_p$ and $B_q$ where $p<q$, then for every $p<p'<q$, $v\in B_{p'}$. For the vertices of $P_{i}$, this property can be verified by the fact that step 2 adds all the edges of  $P_{i}$ sequentially and that violation of this property will contradict that $P_i$ is a path. The step 3 adds vertices $x_{i,i',j}$ in the bags which contain at least one vertex from $\{v_{i,i',1},v_{i,i',2},...,v_{i,i',k}\}\cup \{u_{i,i',1},u_{i,i',2},...,u_{i,i',k}\}$, since these vertices forms a subpath of $P_i$, and thus, they are in consecutive bags (step 1) and violation would contradict that $P_i$ is a path, finally vertices $c_{i,j}$ are added to every bag. The step 4 adds vertices $w_{i,i'}$ in the bags which contain at least one vertex from $\{v_{i,i',1},v_{i,i',2},...,v_{i,i',k}\}\cup \{u_{i,i',1},u_{i,i',2},...,u_{i,i',k}\}$, since these vertices forms a subpath of $P_i$, and thus, they are in consecutive bags (step 1) and violation would contradict that $P_i$ is a path. This concludes the correctness of path decomposition $\beta$ and finishes the proof.
\end{proof}

\paragraph{}

The following concludes the correctness of reduction and proof of Theorem \ref{theorem:Pathwidth-hardness}.
    
     \begin{lemma}\label{lemma:pathwidthcorrectness}
    $G=(V,E)$ with partition $V_1$ to $V_k$ is a yes instance of {\kmccs} if and only if $\mathcal{P}$ has $k+{k\choose2}$ pairwise edge disjoint paths.
    \end{lemma}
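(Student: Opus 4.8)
\emph{Proof plan.} The plan is to prove both directions through the natural correspondence between a multicolored clique and a family of $k$ long paths together with $\binom{k}{2}$ short paths. The guiding intuition is that choosing the long path $l_{i,\overline{i'}}$ in the gadget of $V_i$ encodes selecting the vertex $v_{i,i'}\in V_i$: by construction $l_{i,\overline{i'}}$ uses every edge of $P_i$ except those of the $i'$-th block $P_i(v_{i,i',1},u_{i,i',k})$, so exactly the $k$ internal edges $v_{i,i',1}u_{i,i',1},\dots,v_{i,i',k}u_{i,i',k}$ of that block are left free. A short path $s_{i,i',j,j'}$ encodes the edge $v_{i,i'}v_{j,j'}$ and meets $P_i$ only in the single edge $v_{i,i',j}u_{i,i',j}$ of block $i'$, meets $P_j$ only in $v_{j,j',i}u_{j,j',i}$ of block $j'$, uses the inter-gadget edge $c_{i,j}c_{j,i}$, and sub-paths of the two verification paths $P^e_{i,j},P^e_{j,i}$.

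For the forward direction, given a multicolored clique $\{v_{1,\sigma(1)},\dots,v_{k,\sigma(k)}\}$, I would select the $k$ long paths $l_{i,\overline{\sigma(i)}}$ for $i\in[k]$ and the $\binom{k}{2}$ short paths $s_{i,\sigma(i),j,\sigma(j)}$ for $i<j$; the latter exist because every pair $v_{i,\sigma(i)}v_{j,\sigma(j)}$ is an edge of $G$. Edge-disjointness is then checked by cases: long paths of distinct gadgets lie in vertex-disjoint subgraphs and only one is taken per gadget; each inter-gadget edge is used by a unique short path; the short paths incident to a fixed gadget $i$ use the distinct block-$\sigma(i)$ edges $v_{i,\sigma(i),j}u_{i,\sigma(i),j}$ (one per neighbour index $j$) and the distinct verification paths $P^e_{i,j}$; and each short path meets $P_i$ only inside block $\sigma(i)$, which $l_{i,\overline{\sigma(i)}}$ avoids. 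This is exactly the role of equipping each block with $k$ internal $v$--$u$ edges indexed by the potential neighbour gadgets.

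For the backward direction, suppose $\mathcal{P}$ contains $k+\binom{k}{2}$ pairwise edge-disjoint paths. I would first prove two capacity bounds. Since every short path contains exactly one of the $\binom{k}{2}$ inter-gadget edges while long paths contain none, at most $\binom{k}{2}$ short paths occur. For long paths, the crucial observation is that the auxiliary $(n+1)$-th block is never omitted by any $l_{i,\overline{i'}}$ with $i'\in[n]$, so every long path of gadget $i$ uses all edges of that last block; hence at most one long path is chosen per gadget, giving at most $k$ long paths. Since the two bounds sum to exactly $k+\binom{k}{2}$, both are tight: there is one long path $l_{i,\overline{\sigma(i)}}$ per gadget (defining the function $\sigma$) and one short path per inter-gadget edge. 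Finally, for each pair $(i,j)$ the unique short path using $c_{i,j}c_{j,i}$ has the form $s_{i,i',j,j'}$, and its $P_i$-edge $v_{i,i',j}u_{i,i',j}$ lies in block $i'$; edge-disjointness from $l_{i,\overline{\sigma(i)}}$ forces $i'=\sigma(i)$, since otherwise block $i'$ is fully occupied by that long path, and symmetrically $j'=\sigma(j)$. Thus the short path is $s_{i,\sigma(i),j,\sigma(j)}$, whose existence certifies $v_{i,\sigma(i)}v_{j,\sigma(j)}\in E$, so $\{v_{i,\sigma(i)}\}_{i\in[k]}$ is a multicolored clique.

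The main obstacle is the backward direction's counting step, specifically the claim that at most one long path is selected per gadget. This is precisely where the otherwise-superfluous $(n+1)$-th block earns its place: it is common to every long path of a gadget, so any two of them overlap there. Without it the bound could break (for instance, when $n=2$ the two long paths would occupy disjoint real blocks and be edge-disjoint). A secondary delicate point, needed in the forward direction, is that short paths incident to a common gadget do not collide, which relies on each neighbour index $j$ being assigned its own dedicated edge $v_{i,\sigma(i),j}u_{i,\sigma(i),j}$ and its own verification path $P^e_{i,j}$.
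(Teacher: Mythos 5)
Your proposal is correct and follows essentially the same route as the paper's proof: the same selection of $k$ long paths and $\binom{k}{2}$ short paths in the forward direction, and in the backward direction the same two capacity bounds (inter-gadget edges limit short paths to $\binom{k}{2}$; the common $(n+1)$-th block of $P_i$ limits long paths to one per gadget) followed by the observation that each short path's entry edge into $P_i$ must lie in the freed block $\sigma(i)$. The only difference is cosmetic — you argue the last step directly where the paper phrases it as a contradiction — and your explicit remark on why the auxiliary $(n+1)$-th block is needed matches the paper's use of the edge $v_{i,n+1,1}u_{i,n+1,1}$.
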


  \begin{proof}
     
     For the first direction, let $(G,E,\{V_1,....,V_k\})$ be a yes instance of {\kmccs}, and let $v_{i,f(i)}$ be the vertex selected from the set $V_i$ in the solution. We chose $k$ long paths $l_{i,\bar {f(i)}}$ from $\cal P$ for every $i\in[k]$. Since for every $1\leq i< j\leq k$, $v_{i,f(i)},v_{j,f(j)}$ is an edge in $G$, we chose $k\choose 2$ short paths $s_{i,f(i),j,f(j)}$ from $\cal P$. Recall that selecting $l_{i,\bar {f(i)}}$ will utilize all the edges of $P_i$ except the edges of the subpath $P_i(v_{i,i',1},v_{i,i',k})$, which now can be utilized by short paths $s_{i,f(i),j,f(j)}$. Further, a direct check can verify that all these selected paths are pairwise edge disjoint.
     
     \paragraph{}

        For the other direction let there be a solution $S$ of size $k+$ $k\choose2$ for {\psp}. Recall that all the short paths  goes through one of the inter gadget edge $c_{i,j}c_{j,i}$, and there are $k\choose2$ such edges. Thus, we can conclude that there are at most $k\choose2$ short paths in the solution. Thus, at least $k$ long paths are selected in the solution. Since for every $i\in[k]$, all the long paths $l_{i,\bar {i'}}$ where $i'\in[n]$ are pair wise edge intersecting (contain edge $v_{i,n+1,1},u_{i,n+1,1}$). Thus, at most one long path can be selected for every $i$. Since, at least $k$ long paths need to be selected, we conclude that exactly one long path is selected corresponding to every $i\in[k]$.
        Let for every $i$, $l_{i,\bar {f(i)}}$ be the long path selected in $S$. Let $C= \{v_{i,f(i)}| l_{i,\bar {f(i)}} \in S\}$, that is all the vertices of $G$ whose corresponding long paths are selected in the solution. We argue that $C$ is a multi colored clique of size $k$, assume to the contrary that $v_{i,f(i)},v_{j,f(j)}\in C$ where $i<j$ are not neighbors in $G$. 
        \paragraph{}

        Since there are $k\choose 2$ short paths selected and $k\choose 2$ inter gadget edges $c_{i,j},c_{j,i}$ available, every inter gadget edge $c_{i,j},c_{j,i}$ must belong to one selected short path. Recalling the construction of short paths, a short path that contain the edge $c_{i,j},c_{j,i}$ (where $i<j$),
        starts with an edge $v_{i,i',j}u_{i,i',j}$ of  path $P_i$ and ends at an edge $u_{j,j',i}v_{j,j',i}$ of  path $P_j$ where $i',j'\in[n]$. Due to the selection of long paths $l_{i,\bar{f(i)}}$ in $S$, all the edges of $P_i$ except the edges which belong to subpath $P_i(v_{i,f(i),1},u_{i,f(i),k})$ are utilised.
         Thus, all the $k\choose 2$ short paths selected in $S$ must start with an edge $v_{i,f(i),j}u_{i,f(i),j}$ and ends with an edge $u_{j,f(j),i}v_{j,f(j),i}$, formally these short paths should be $s_{i,f(i),j,f(j)}$ for every $1\leq i<j\leq k$, this can only happen if $v_{i,f(i)}v_{j,f(j)}$ is an edge in $G$, contradicting the assumption.
         This finishes the proof.
\end{proof}
\section{Parameterized Algorithms}

In section \ref{section:FVN}, we present an FPT algorithm for PSP parameterized by feedback vertex number $\Gamma$ plus maximum degree $\Delta$ of $G$, and prove Theorem \ref{theorem:FVD_maxdegree}. 
In section \ref{sect:FENalg}, we present a $4$-approximation algorithm for finding maximum path set packing which runs in time FPT parameterized by feedback edge number $\lambda$ and prove Theorem  \ref{theorem:4-apxFEN}. The approach we use to prove Theorem \ref{theorem:FVD_maxdegree} and Theorem \ref{theorem:4-apxFEN} is a non trivial adaptation of the approach given in \cite{Jansen17}.

\paragraph{}

In section \ref{sect:FENPre}, we discuss some preliminaries which we will need for the algorithms presented in sections \ref{section:FVN} and \ref{sect:FENalg}. In section \ref{sec:3parameters} we give the proof of Theorem \ref{theorem:tw_maxdegree}.
\paragraph{}

 We note that the algorithms we present can be modified slightly to solve the maximization version of {\psp}; in this version, we search for the path set packing of maximum size.

\subsection{Preliminaries: Defining Structures and Nice Solutions} \label{sect:FENPre}
Let $G=(V,E)$ be the input undirected graph. If $G$ is not connected, then we connect $G$ as follows. Let there be $m$ components in $G$, and let $\{C_1,C_2,\dots C_m\}$ be the set of all the components of $G$.  We arbitrarily pick a vertex from every  component of $G$, let $v_i$ be the vertex picked from the component $C_i$. For every $i\in [m-1]$, we add an edge $(v_i,v_{i+1})$. Let $((v_1,v_2),(v_2,v_3),\dots, (v_{m-1},v_{m}))$ be the order in which these edges are added to $G$. Observe that each time we are adding an edge between two distinct components of $G$. Thus, addition of these edges does not introduce any new cycle in $G$, and hence does not increase the feedback vertex number or the feedback edge number of $G$. Further, $\cal P$ is not changed, and maximum degree of $G$ is increased by at most two. From now on we therefore assume that the input graph $G$ is connected.
\paragraph{}

We now modify the input graph $G$ as follows. We add vertex set $\{z_1,z_2,z_3\}$ to $G$, and arbitrarily pick a vertex $v\in G$ and connect vertices in $\{z_1,z_2,z_3,v\}$ to each other, thus creating a clique on four vertices in $G$. This modification increases the feedback vertex number,  feedback edge number, and maximum degree of $G$ by only a constant, and it is safe for our purposes. In the remaining part of this section, we define some structures (adapted from \cite{Jansen17}) that we will need. 

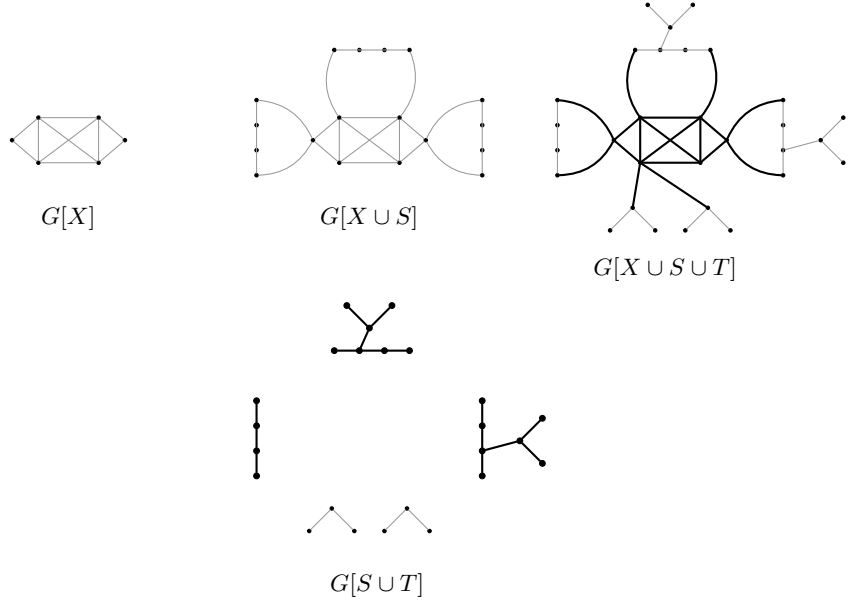
\begin{figure}[h]

    \centering
    \begin{tikzpicture}


 \node [ellipse, minimum height=1.5cm,minimum width= 2cm, label ={-90:\footnotesize{\text{$G[X]$}}}] (z) at (2*1,-2*1) {};
   
   \node[shape=circle, draw, fill = black, scale= 0.15, font=\footnotesize] ({11}) at (2*1+0.4,-2*1+0.3){};
    \node[shape=circle, draw, fill = black, scale= 0.15, font=\footnotesize] ({12}) at (2*1-0.4,-2*1+0.3){};
    \node[shape=circle, draw, fill = black, scale= 0.15, font=\footnotesize] ({13}) at (2*1+0.4,-2*1-0.3){};
    \node[shape=circle, draw, fill = black, scale= 0.15, font=\footnotesize] ({14}) at (2*1-0.4,-2*1-0.3){};
     \node[shape=circle, draw, fill = black, scale= 0.15, font=\footnotesize] ({15}) at (2*1-0.75,-2*1-0){};
     \node[shape=circle, draw, fill = black, scale= 0.15, font=\footnotesize] ({16}) at (2*1+0.75,-2*1-0){};
   
   \draw[thin,black!40] ({11}) to ({12});
   \draw[thin,black!40] ({11}) to ({13});
   \draw[thin,black!40] ({11}) to ({14});
   \draw[thin,black!40] ({11}) to ({16});
   \draw[thin,black!40] ({13}) to ({16});
   \draw[thin,black!40] ({12}) to ({13});
   \draw[thin,black!40] ({12}) to ({14});
   \draw[thin,black!40] ({13}) to ({14});
   \draw[thin,black!40] ({12}) to ({15});
   \draw[thin,black!40] ({14}) to ({15});
   
    \node [ellipse, minimum height=1.5cm,minimum width= 2cm, label ={-90:\footnotesize{\text{$G[X\cup S]$}}}] (z) at (2*3,-2*1) {};
   
   \node[shape=circle, draw, fill = black, scale= 0.15, font=\footnotesize] ({21}) at (2*3+0.4,-2*1+0.3){};
    \node[shape=circle, draw, fill = black, scale= 0.15, font=\footnotesize] ({22}) at (2*3-0.4,-2*1+0.3){};
    \node[shape=circle, draw, fill = black, scale= 0.15, font=\footnotesize] ({23}) at (2*3+0.4,-2*1-0.3){};
    \node[shape=circle, draw, fill = black, scale= 0.15, font=\footnotesize] ({24}) at (2*3-0.4,-2*1-0.3){};
     \node[shape=circle, draw, fill = black, scale= 0.15, font=\footnotesize] ({25}) at (2*3-0.75,-2*1-0){};
     \node[shape=circle, draw, fill = black, scale= 0.15, font=\footnotesize] ({26}) at (2*3+0.75,-2*1-0){};
   
   \draw[thin,black!40] ({21}) to ({22});
   \draw[thin,black!40] ({21}) to ({23});
   \draw[thin,black!40] ({21}) to ({24});
   \draw[thin,black!40] ({21}) to ({26});
   \draw[thin,black!40] ({23}) to ({26});
   \draw[thin,black!40] ({22}) to ({23});
   \draw[thin,black!40] ({22}) to ({24});
   \draw[thin,black!40] ({23}) to ({24});
   \draw[thin,black!40] ({22}) to ({25});
   \draw[thin,black!40] ({24}) to ({25});
   
   \foreach \i in {1,2,3,4}{
     
            \node[shape=circle, draw, fill = black, scale= 0.15, font=\footnotesize,label={}] ({21\i}) at (2*3-0.8+\i/3,-2*1+1.2){};
   }
   \foreach \i in {1,2,3,4}{
     
            \node[shape=circle, draw, fill = black, scale= 0.15, font=\footnotesize,label={}] ({22\i}) at (2*3+1.5,-2*1-0.8+\i/3){};
   }
   
    \foreach \i in {1,2,3,4}{
     
            \node[shape=circle, draw, fill = black, scale= 0.15, font=\footnotesize,label={}] ({23\i}) at (2*3-1.5,-2*1-0.8+\i/3){};
   }
   \draw[thin,black!40] ({211}) to ({214});
   \draw[thin,black!40] ({221}) to ({224});
   \draw[thin,black!40] ({231}) to ({234});
   
   \draw[thin,black!40,bend right =30] ({211}) to ({22});
   \draw[thin,black!40,bend left =30] ({214}) to ({21});
    \draw[thin,black!40,bend right =30] ({224}) to ({26});
   \draw[thin,black!40,bend left =30] ({221}) to ({26});
    \draw[thin,black!40,bend right =30] ({231}) to ({25});
   \draw[thin,black!40,bend left =30] ({234}) to ({25});

  
    \node [ellipse, minimum height=1.5cm,minimum width= 2cm, label ={-90:\footnotesize{\text{}}}] (z) at (2*5,-2*1) {};
   
   \node[shape=circle, draw, fill = black, scale= 0.15, font=\footnotesize] ({31}) at (2*5+0.4,-2*1+0.3){};
    \node[shape=circle, draw, fill = black, scale= 0.15, font=\footnotesize] ({32}) at (2*5-0.4,-2*1+0.3){};
    \node[shape=circle, draw, fill = black, scale= 0.15, font=\footnotesize] ({33}) at (2*5+0.4,-2*1-0.3){};
    \node[shape=circle, draw, fill = black, scale= 0.15, font=\footnotesize] ({34}) at (2*5-0.4,-2*1-0.3){};
     \node[shape=circle, draw, fill = black, scale= 0.15, font=\footnotesize] ({35}) at (2*5-0.75,-2*1-0){};
     \node[shape=circle, draw, fill = black, scale= 0.15, font=\footnotesize] ({36}) at (2*5+0.75,-2*1-0){};
   
   \draw[thick,black] ({31}) to ({32});
   \draw[thick,black] ({31}) to ({33});
   \draw[thick,black] ({31}) to ({34});
   \draw[thick,black] ({31}) to ({36});
   \draw[thick,black] ({33}) to ({36});
   \draw[thick,black] ({32}) to ({33});
   \draw[thick,black] ({32}) to ({34});
   \draw[thick,black] ({33}) to ({34});
   \draw[thick,black] ({32}) to ({35});
   \draw[thick,black] ({34}) to ({35});

   \foreach \i in {1,2,3,4}{
     
            \node[shape=circle, draw, fill = black, scale= 0.15, font=\footnotesize,label={}] ({31\i}) at (2*5-0.8+\i/3,-2*1+1.2){};
   }
   \foreach \i in {1,2,3,4}{
     
            \node[shape=circle, draw, fill = black, scale= 0.15, font=\footnotesize,label={}] ({32\i}) at (2*5+1.5,-2*1-0.8+\i/3){};
   }
   
    \foreach \i in {1,2,3,4}{
     
            \node[shape=circle, draw, fill = black, scale= 0.15, font=\footnotesize,label={}] ({33\i}) at (2*5-1.5,-2*1-0.8+\i/3){};
   }
   \draw[thin,black!40] ({311}) to ({314});
   \draw[thin,black!40] ({321}) to ({324});
   \draw[thin,black!40] ({331}) to ({334});
   
   \draw[thick,black,bend right =30] ({311}) to ({32});
   \draw[thick,black,bend left =30] ({314}) to ({31});
    \draw[thick,black,bend right =30] ({324}) to ({36});
   \draw[thick,black,bend left =30] ({321}) to ({36});
    \draw[thick,black,bend right =30] ({331}) to ({35});
   \draw[thick,black,bend left =30] ({334}) to ({35});

   \node[shape=circle, draw, fill = black, scale= 0.15, font=\footnotesize,label={}] ({351}) at (2*5-0,-2*1+1.5){};
   \node[shape=circle, draw, fill = black, scale= 0.15, font=\footnotesize,label={}] ({352}) at (2*5-0.3,-2*1+1.8){};
   \node[shape=circle, draw, fill = black, scale= 0.15, font=\footnotesize,label={}] ({353}) at (2*5+0.3,-2*1+1.8){};
   
   \draw[thin,black!40] ({351}) to ({352});
   \draw[thin,black!40] ({351}) to ({353});
  \draw[thin,black!40] ({351}) to ({312});
   
   \node[shape=circle, draw, fill = black, scale= 0.15, font=\footnotesize,label={}] ({361}) at (2*5-0.5,-2*1-0.9){};
   \node[shape=circle, draw, fill = black, scale= 0.15, font=\footnotesize,label={}] ({362}) at (2*5-0.8,-2*1-1.2){};
   \node[shape=circle, draw, fill = black, scale= 0.15, font=\footnotesize,label={}] ({363}) at (2*5-0.2,-2*1-1.2){};
   
      \draw[thin,black!40] ({361}) to ({362});
   \draw[thin,black!40] ({361}) to ({363});
  \draw[thick,black] ({361}) to ({34});

  \node[shape=circle, draw, fill = black, scale= 0.15, font=\footnotesize,label={}] ({381}) at (2.2*5-0.5,-2*1-0.9){};
   \node[shape=circle, draw, fill = black, scale= 0.15, font=\footnotesize,label={}] ({382}) at (2.2*5-0.8,-2*1-1.2){};
   \node[shape=circle, draw, fill = black, scale= 0.15, font=\footnotesize,label={}] ({383}) at (2.2*5-0.2,-2*1-1.2){};
   
      \draw[thin,black!40] ({381}) to ({382});
   \draw[thin,black!40] ({381}) to ({383});
  \draw[thick,black] ({381}) to ({34});
   
    \node[shape=circle, draw, fill = black, scale= 0.15, font=\footnotesize,label={}] ({371}) at (2*5+2,-2*1+0){};
   \node[shape=circle, draw, fill = black, scale= 0.15, font=\footnotesize,label={}] ({372}) at (2*5+2.3,-2*1-0.3){};
   \node[shape=circle, draw, fill = black, scale= 0.15, font=\footnotesize,label={}] ({373}) at (2*5+2.3,-2*1+0.3){};
   
   \draw[thin,black!40] ({371}) to ({372});
   \draw[thin,black!40] ({371}) to ({373});
  \draw[thin,black!40] ({371}) to ({322});
  
  \node [ label ={-20:\footnotesize{\text{$G[X\cup S \cup T]$}}}] (z) at (1.75*5,-2.6*1-0.8) {};


   \foreach \i in {1,2,3,4}{
     
            \node[shape=circle, draw, fill = black, scale= 0.25, font=\footnotesize,label={}] ({41\i}) at (1.2*5-0.8+\i/3,-6*1+1.2){};
   }
   \foreach \i in {1,2,3,4}{
     
            \node[shape=circle, draw, fill = black, scale= 0.25, font=\footnotesize,label={}] ({42\i}) at (1.2*5+1.5,-6*1-0.8+\i/3){};
   }
   
    \foreach \i in {1,2,3,4}{
     
            \node[shape=circle, draw, fill = black, scale= 0.25, font=\footnotesize,label={}] ({43\i}) at (1.2*5-1.5,-6*1-0.8+\i/3){};
   }
   \draw[thick,black] ({411}) to ({414});
   \draw[thick,black] ({421}) to ({424});
   \draw[thick,black] ({431}) to ({434});

\node[shape=circle, draw, fill = black, scale= 0.25, font=\footnotesize,label={}] ({451}) at (1.2*5-0,-6*1+1.5){};
   \node[shape=circle, draw, fill = black, scale= 0.25, font=\footnotesize,label={}] ({452}) at (1.2*5-0.3,-6*1+1.8){};
   \node[shape=circle, draw, fill = black, scale= 0.25, font=\footnotesize,label={}] ({453}) at (1.2*5+0.3,-6*1+1.8){};
   
   \draw[thick,black] ({451}) to ({452});
   \draw[thick,black] ({451}) to ({453});
  \draw[thick,black] ({451}) to ({412});
   
   \node[shape=circle, draw, fill = black, scale= 0.15, font=\footnotesize,label={}] ({461}) at (1.2*5-0.5,-6*1-0.9){};
   \node[shape=circle, draw, fill = black, scale= 0.15, font=\footnotesize,label={}] ({462}) at (1.2*5-0.8,-6*1-1.2){};
   \node[shape=circle, draw, fill = black, scale= 0.15, font=\footnotesize,label={}] ({463}) at (1.2*5-0.2,-6*1-1.2){};
   
   \draw[thin,black!40] ({461}) to ({462});
   \draw[thin,black!40] ({461}) to ({463});

   \node[shape=circle, draw, fill = black, scale= 0.15, font=\footnotesize,label={}] ({481}) at (1.4*5-0.5,-6*1-0.9){};
   \node[shape=circle, draw, fill = black, scale= 0.15, font=\footnotesize,label={}] ({482}) at (1.4*5-0.8,-6*1-1.2){};
   \node[shape=circle, draw, fill = black, scale= 0.15, font=\footnotesize,label={}] ({483}) at (1.4*5-0.2,-6*1-1.2){};
   
   \draw[thin,black!40] ({481}) to ({482});
   \draw[thin,black!40] ({481}) to ({483});

    \node[shape=circle, draw, fill = black, scale= 0.25, font=\footnotesize,label={}] ({471}) at (1.2*5+2,-6*1+0){};
   \node[shape=circle, draw, fill = black, scale= 0.25, font=\footnotesize,label={}] ({472}) at (1.2*5+2.3,-6*1-0.3){};
   \node[shape=circle, draw, fill = black, scale= 0.25, font=\footnotesize,label={}] ({473}) at (1.2*5+2.3,-6*1+0.3){};
   
   \draw[thick,black] ({471}) to ({472});
   \draw[thick,black] ({471}) to ({473});
  \draw[thick,black] ({471}) to ({422});
  
  \node [ label ={-20:\footnotesize{\text{$G[S\cup T]$}}}] (z) at (1.05*5,-6.8*1-0.8) {};

\end{tikzpicture}
\caption{Example induced graphs of $G$. The darkened edges in $G[X\cup S\cup T]$ represents the edge set ${\cal A}=E(G[X]) \cup E(X,S\cup T)$. The darkened components in $G[S\cup T]$ forms the set $\cal D$, and the lighter components forms $\cal T$. }\label{fig:structures}
\end{figure}

\begin{definition}
Given a graph $G$, we define the vertex sets $T,S$ and $X$ by the following process (refer Figure  \ref{fig:structures}).
\begin{itemize}
    \item Initialize $T$ as an empty set and $G'=(V',E')$ as a copy of $G$.
    \item While there is a vertex $v$ in $G'$ with degree $d
    _{G'}(v)=1$, we set $T=T\cup\{v\}$ and $G'=G'- \{v\}$.
    \item  $X$ is the set of all the vertices with degree at least three in $G[V\setminus T]$, and $S$ is the set $V\setminus (T\cup X)$.
\end{itemize}
\end{definition}

\paragraph{}

Observe that $G[T]$ is a forest as it is a $1$-degenerate graph. Every component of $G[T]$ is connected to $G[X\cup S]$ by a single edge in $G$ (Figure \ref{fig:structures}). $G[X\cup S]$ is connected and its minimum degree is at least two . Vertex set $X$ is nonempty as it contains at least vertices $z_1,z_2,z_3$, and $v$ (as they form a clique). Every vertex in $S$ has degree two in $G[X\cup S]$. Further, we observe that $G[S]$ is a union of paths, otherwise if $G[S]$ contains a cycle $C$, then all the vertices of $C$ will have degree $2$ in $G[S]$ itself, and they will be disconnected from the vertices of $X$ in $G[X\cup S]$, implying that $G[X\cup S]$ is disconnected, which is not the case.

\begin{observation}\label{obs:atmost1}
Every component of $G[S\cup T]$ contains at most one component of $G[S]$.
\end{observation}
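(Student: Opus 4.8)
The plan is to argue by contradiction, exploiting the single-edge attachment property of the components of $G[T]$ that was noted right after the definition. Recall that $G[T]$ is a forest and that every component of $G[T]$ is joined to $G[V\setminus T]$ by exactly one edge of $G$; since $V\setminus T = S\cup X$, this means in particular that each component of $G[T]$ has at most one edge to $S$. Suppose, toward a contradiction, that some connected component $D$ of $G[S\cup T]$ contains two distinct components $P_1$ and $P_2$ of $G[S]$. Then $D$ contains a simple path $Q$ from a vertex of $P_1$ to a vertex of $P_2$, and both endpoints of $Q$ lie in $S$.

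First I would observe that $Q$ must use at least one vertex of $T$: otherwise $Q$ would lie entirely inside $G[S]$ and would connect $P_1$ and $P_2$ there, contradicting that they are distinct components of $G[S]$. Hence I can fix a maximal subpath $R$ of $Q$ all of whose vertices lie in $T$; such an $R$ is nonempty and, because $Q$ begins and ends in $S$, it is flanked on both sides by vertices of $S$, say $s_1$ immediately before $R$ and $s_2$ immediately after $R$. As $Q$ is simple, $s_1\neq s_2$. Moreover the vertices of $R$ induce a connected subgraph of $G[T]$ and therefore all belong to a single component $C_T$ of $G[T]$.

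The key step, and the only delicate point, is to read off two distinct boundary edges from this configuration: the edge joining $s_1$ to the first vertex of $R$ and the edge joining the last vertex of $R$ to $s_2$ are both edges of $G$ with one endpoint in $C_T$ and the other in $S$, and they are distinct precisely because $s_1\neq s_2$. This exhibits two edges between $C_T$ and $S\subseteq V\setminus T$, contradicting that $C_T$ is attached to $V\setminus T$ by a single edge. Consequently no such path $Q$ can exist, so every component of $G[S\cup T]$ meets at most one component of $G[S]$, which is exactly the claim. I expect the main obstacle to lie entirely in the bookkeeping of this flanking argument, namely in verifying that the two boundary edges are genuinely distinct and genuinely incident to the same component of $G[T]$; the structural input, i.e.\ the single-edge attachment of the $T$-components, does all the real work.
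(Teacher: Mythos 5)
Your proof is correct and rests on exactly the same structural fact the paper uses, namely that each component of $G[T]$ is attached to $V\setminus T$ (and hence to $S$) by a single edge, so no $T$-component can link two distinct components of $G[S]$. Your version merely makes the path/flanking bookkeeping explicit where the paper's proof states the conclusion directly.
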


\begin{proof}
Every component of $G[T]$ is connected to $G[X\cup S]$ by a single edge. Thus, no component of $G[T]$ is connected to more than one component of $G[S]$ in $G[S\cup T]$. Hence, components of $G[S]$ are not merged in  $G[S\cup T]$.
\end{proof}

\paragraph{}

Let $\cal D$ be the set of all the components of $G[S\cup T]$ which contain a component of $G[S]$ (refer Fig \ref{fig:structures}). Further, let $\cal T $ be the set of all other components in $G[S\cup T]$, that is the set of all the components which do not contain vertices of $S$. Further, we observe that every component $C\in {\cal D}\cup \cal T$ is a tree. 

\begin{definition}
    For every $C\in{\cal D}\cup \cal T$, we define $\textsc{ext\_e}(C)= E(V(C),X)$, and call every edge of $\textsc{ext\_e}(C)$ an external edge of $C$.
\end{definition}

\begin{observation}\label{obs:twoedges}
Every $D\in \cal D$ has two external edges, that is $|\textsc{ext\_e}(D)|=2$. And every $C\in \cal T$ has one external edge, that is $|\textsc{ext\_e}(C)|=1$.
\end{observation}

\begin{proof}
Every component of $G[T]$ is connected to $G[X\cup S]$ by a single edge in $G$. Since $\cal T$ is a subset of all the components of $G[T]$, and none of the component in $\cal T$ is connected to a vertex in $S$ (by definition of $\cal T$), there must be a single edge between every component in $\cal T$ and vertex set $X$ in $G$. 
\paragraph{}

By definition, every vertex in $S$ has degree two in $G[X\cup S]$, and every component of $G[S]$ is a path. Thus, if a component of $G[S]$ is a path of two or more vertices, then each endpoint of it must be connected to a vertex in $X$ in $G$, and if the component is a single vertex, then that vertex must be connected to two vertices in $X$ in $G$. Thus, there are two edges between every component in $G[S]$ and the vertex set $X$ in $G$.
Further, consider a component $C\in \cal D$. By definition and by Observation \ref{obs:atmost1}, $C$ contains exactly one component $C'$ of $G[S]$. Hence, there are at least two edges between $V(C)$ and $X$ in $G$. Further, if $C$ also contains components of $G[T]$, then each of these components is connected to a vertex in $V(C')$, which is a subset of $S$; thus, they cannot have an edge with a vertex in $X$ in $G$. Thus, there are exactly two edges between $V(C)$ and $X$ in $G$.
\end{proof}

\paragraph{}

To bound the size of $X$ and $\cal D$, we recall the following from \cite{Jansen17}.

\begin{proposition}[\cite{Jansen17}]
Let $G$ be a connected graph of minimum degree at least two with cyclomatic number (feedback edge number) $\lambda$. Let $X$ be the set of all the vertices of degree at least three in $G$, then $|X| \leq 2 \lambda-2$, and if $X \neq \emptyset$, then the number of connected components of $G[V\setminus X]$ is at most $\lambda+|X|-1$.
\end{proposition}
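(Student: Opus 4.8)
The plan is to prove both bounds by elementary degree- and edge-counting, after observing that the minimum-degree-two hypothesis forces a clean structure. Since every vertex has degree at least two, the set $Y := V\setminus X$ consists precisely of the vertices of degree exactly two. First I would establish $|X|\le 2\lambda-2$ directly from the handshake identity: writing $2|E|=\sum_{v\in V}\deg(v)\ge 3|X|+2|Y| = |X|+2|V|$ and rearranging gives $|X|\le 2(|E|-|V|) = 2(\lambda-1)$, where the last equality uses $\lambda=|E|-|V|+1$.

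Next I would analyze the induced subgraph $G[Y]$. Each vertex of $Y$ has degree two in $G$, hence degree at most two in $G[Y]$, so $G[Y]$ is a disjoint union of paths and cycles. The key observation is that there can be no cycle component: the vertices of such a cycle would have both of their $G$-edges inside the cycle, making it a connected component of $G$ disjoint from the nonempty set $X$, contradicting connectivity of $G$. Thus $G[Y]$ is a disjoint union of $p$ paths, where $p$ is exactly the number of connected components of $G[V\setminus X]$ we wish to bound, and consequently $|E(G[Y])| = |Y|-p$.

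Then I would count the edges crossing between $X$ and $Y$. In each path component of $G[Y]$ the two endpoints have degree one inside $G[Y]$ and degree two in $G$, so each sends exactly one edge to $X$, while every internal vertex keeps both edges inside the path; an isolated vertex (a trivial path) sends both its edges to $X$. Hence each path component contributes exactly two edges to $E(X,Y)$, giving $|E(X,Y)|=2p$. Combining the three edge classes, $|E| = |E(G[X])| + |E(X,Y)| + |E(G[Y])| = |E(G[X])| + 2p + (|Y|-p) = |E(G[X])| + p + |Y|$. Substituting $|E| = \lambda + |X| + |Y| - 1$ and solving for $p$ yields $p = \lambda + |X| - 1 - |E(G[X])| \le \lambda + |X| - 1$, as required.

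Each of the two bounds is short once this structure is exposed. I expect the only real subtlety to be the cycle-exclusion step, which is precisely where the hypotheses $X\neq\emptyset$ and connectivity of $G$ are needed: without them a component of $G[Y]$ could be a cycle, and the identity $|E(G[Y])|=|Y|-p$ — the linchpin of the final computation — would fail.
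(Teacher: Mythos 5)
Your proof is correct. Note that the paper does not prove this statement at all: it is imported verbatim as a proposition from the cited reference, so there is no in-paper argument to compare against. Your self-contained derivation is sound on every step I checked: the handshake computation $2|E|\ge 3|X|+2|Y|=|X|+2|V|$ gives $|X|\le 2(|E|-|V|)=2\lambda-2$; the exclusion of cycle components of $G[Y]$ correctly uses both connectivity and $X\neq\emptyset$ (a cycle component would have all its $G$-edges internal, hence be all of $G$, forcing $X=\emptyset$); the identity $|E(G[Y])|=|Y|-p$ for a linear forest and the count $|E(X,Y)|=2p$ (each path component, including trivial ones, exports exactly two edge-endpoints since every $Y$-vertex has $G$-degree exactly two) are both right, the latter implicitly using that the graph is simple so an isolated $Y$-vertex has two distinct neighbours in $X$; and the final substitution $|E|=\lambda+|X|+|Y|-1$ yields $p=\lambda+|X|-1-|E(G[X])|$, which is in fact an exact formula and hence slightly stronger than the stated inequality. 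This makes the result self-contained where the paper relies on an external citation, at the modest cost of a page of counting.
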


\paragraph{}

We get the following corollary.
\begin{corollary}\label{cor:sizeX}
The size of vertex set $|X|= O(\lambda) = O(\Gamma \cdot \Delta)$. The size of component set $|{\cal D}| = O(\lambda) = O(\Gamma \cdot \Delta)$. Where $\Gamma$ is feedback vertex number of $G$.
\end{corollary}

\paragraph{}

In our algorithms, we will construct sub-problems for subgraphs of $G$, for which we need to define internal paths of subgraphs. 

\begin{definition}
For a subgraph $H\subseteq G$, a path $p$ is an internal path of $H$ if $E(p)\subseteq E(H)$, that is all the edges of $p$ belongs to $H$. Further, given a set of paths $P$ in $G$, we define $\textsc{int}(H,{P})= \{ p\mid \ p\in P \land (E(p) \subseteq E(H))\}$, that is all the paths in $P$ which are internal paths of $H$.
\end{definition}

\begin{observation}\label{obs:atleast_one_ext}
    Let $C\in {\cal D}\cup \cal T$, and let $p$ be a non internal path of $C$. If $E(p)\cap \textsc{ext\_e}(C) = \emptyset$, then $E(p)\cap E(C)=\emptyset$.
\end{observation}
\begin{proof}
It follows from the fact that if $p$ is a non internal path of $C$, then $p$ contains a vertex $v$ outside $C$, and every path in $G$ from the vertex $v$ to a vertex in $C$ contains an external edge of $C$.
\end{proof}

\begin{lemma}\label{lemma:paththreeormore1}
   Every $p\in {\cal P}$ can contain exactly one external edge of at most two distinct components of ${\cal D}\cup {\cal T}$.
\end{lemma}
\begin{proof}
    Assume to the contrary that a path $p\in{\cal P}$ contains exactly one external edge of three or more components of ${\cal D} \cup {\cal T}$, let $C_1,C_2,C_3\in {\cal D} \cup {\cal T}$ be any three of them, then $p$ has at least one vertex each from $C_1,C_2$, and $C_3$, let them be $v_1,v_2,$ and $v_3$ respectively. Without loss of generality, let $(v_1,v_2,v_3)$ be a subsequence of $p$, that is $v_1$ comes before $v_2$, and $v_2$ comes before $v_3$ in $p$. In this case, the subpath of $p$ from $v_1$ to $v_2$ contains an external edge $e_1$ of $C_2$ (as $v_1\not\in V(C_2))$, also the subpath of $p$ from $v_2$ to $v_3$ contains an external edge $e_2$ of $C_2$ (as $v_3\not\in V(C_2))$. Since $p$ is a simple path, $e_1\neq e_2$, contradicting the assumption that only one external edge of $C_2$ belongs to $p$.
\end{proof}

\begin{definition}
For an instance $(G,{\cal P},k)$ of {\psp}, for a subgraph $H\subseteq G$, we define $\textsc{opt}(H)$ to be the maximum size of a path set packing in $\textsc{int}(H,{\cal P})$.
\end{definition}
\paragraph{}

For a graph $G$ in context, we simply use vertex sets $T,S,X$ and component set $\cal D$ and $\cal T$ as defined above, unless otherwise stated.

\begin{definition}[nice path set packing]\label{definition:nicesolution}
Let $(G,{\cal P}, k)$ be an instance of {\psp}. We say, a path set packing $M\subseteq \cal P$ is a nice path set packing if the following holds.
\begin{itemize}
    \item for every component $D\in \cal D$, ${\textsc{opt}}(D)\geq |{\textsc{int}}(D,M)|\geq \textsc{opt}(D)-1$;
    \item for every component $T\in \cal T$, $\textsc{opt}(T)= |\textsc{int}(T,M)|$.
\end{itemize}
\end{definition}
\paragraph{}

The following lemma will help bound the number of guesses that we have to make in our algorithm, and it is motivated from the ideas used in \cite{Jansen17}.
\begin{lemma} \label{lemma:maxnicepath}
    Let $(G,{\cal P}, k)$ be an instance of {\psp}, and $M\subseteq \cal P$ be a path set packing of maximum size, then there exist a nice path set packing $M'\subseteq \cal P$ such that $|M'|=|M|$.
\end{lemma}
\begin{proof}
For every path set packing $M\subseteq \cal P$, it follows trivially that for every $C\in {\cal D}\cup {\cal T}$, ${\textsc{opt}}(C)\geq |{\textsc{int}}(C,{M})|$.
For the other direction of the inequality, given a maximum path set packing $M\subseteq \cal P$, we modify $M$ as follows.
\begin{itemize}
    \item While there exists a $C \in \cal D$, such that $ | {\textsc{int}}(C,{M}) |\leq {\textsc{opt}}(C)-2$, (OR) there exists $C \in \cal T$, such that $ | {\textsc{int}}(C,{M}) |\leq {\textsc{opt}}(C)-1$, we do the following:
    \begin{enumerate}
        \item Let $P=\{p\mid p\in M\land (E(p)\cap E(C)\neq \emptyset)\}$.
        \item Set $M= M\setminus P$.
        \item Let $P^*$ be a path set packing of size $\textsc{opt}(C)$ in ${\textsc{int}}(C,{\cal P})$, Set $M= M\cup P^*$.
    \end{enumerate}
\end{itemize}
\paragraph{}

  We first argue that in step $2$, we are removing at most $|P|\leq \textsc{opt}(C)$ paths from $M$. Every path in $P \setminus {\textsc{int}}(C,M)$ contains an edge of $C$ and is not an internal path of $C$, hence it must contain an external edge of $C$. Since $P$ is a path set packing, $P \setminus {\textsc{int}}(C,M)$ can contain at most as many paths as the number of external edges of $C$. Thus, if $C\in \cal D$, then  $P \setminus {\textsc{int}}(C,{M})$ contains at most two paths, and 
  $|P \cap {\textsc{int}}(C,M)|= |{\textsc{int}}(C,M) |\leq {\textsc{opt}}(C)-2$, hence $|P|\leq {\textsc{opt}}(C)$. Similar arguments work when $C\in {\cal T}$.
  In step $3$, we are adding $ \textsc{opt}(C)$ internal paths of $C$ to $M$. Thus, the size of $M$ does not decrease after each iteration of these modifications. Further, $M$ remains a path set packing, as removal of $P$ form $M$ in step $2$ ensures that $M$ no more contains a path which contains an edge of $C$, and hence adding $P^*$ to $M$ in step $3$ is safe. 
  Further, the procedure converges, since in the steps $2$ and $3$, removal and addition of paths corresponding to a component $C$ do not change $|\textsc{int}(C',M)|$ for every $C'\in {\cal D}\cap {\cal T}$ where $C'\neq C$. Further, after step $3$, $C$ no longer satisfy the loop condition, hence after each iteration there is one less component for which loop condition satisfies.
\end{proof}
\paragraph{}

 We recall that problem of finding maximum independent set in a graph can be solved in polynomial time on EPT graphs \cite{TARJAN1985221}, and it is equivalent to finding a maximum size path set packing in a collection of simple paths in a tree.

\begin{proposition}[\cite{TARJAN1985221,XZ2018}]\label{proposition:psp_tree}
Given a tree $T$ and a collection of simple paths $\cal P$ in $T$, the maximum size of a path set packing in $\cal P$ can be computed in time $(|V|+|\calp|)^{O(1)}$. 
\end{proposition}
\begin{corollary}
For every $C\in {\cal D}\cup {\cal T}$, ${\textsc{opt}}(C)$ can be computed in time $(|V|+|\calp|)^{O(1)}$.
\end{corollary}
\paragraph{}

We also note that by the results of \cite{TARJAN1985221,XZ2018}, for every $C\in {\cal D}\cup {\cal T}$, a path set packing of maximum size (i.e. of size ${\textsc{opt}}(C)$) in $\textsc{int}(C,\cal P)$ can be computed in time $(|V|+|\calp|)^{O(1)}$. 

\begin{corollary}\label{cor:optforest}
Given a subgraph $F$ of $G$ such that $F$ is a forest, ${\textsc{opt}}(F)$ can be computed in time $(|V|+|\calp|)^{O(1)}$. Further, a path set packing of maximum size (i.e. of size ${\textsc{opt}}(F)$) in $\textsc{int}(F,\cal P)$ can be computed in time $(|V|+|\calp|)^{O(1)}$.
\end{corollary}
\begin{proof}
    We can compute ${\textsc{opt}}(c)$ in time $(|V|+|\calp|)^{O(1)}$ for every component $c$ of $F$. Since the components of $F$ are pairwise edge disjoint, ${\textsc{opt}}(F))$ is equal to  $\sum_{c \text{ is a component of }F}{\textsc{opt}}(c)$. Similarly let $P_C$ be a path set packing of size ${\textsc{opt}}(C)$ in $\textsc{int}(C,\cal P)$, then $P_F=\bigcup_{c \text{ is a component of }F}P_C$ is a path set packing of size ${\textsc{opt}}(F))$ in $\textsc{int}(F,\cal P)$.
\end{proof}

\subsection{FPT Algorithm Parameterized by Feedback Vertex Number + Maximum Degree} \label{section:FVN}

In this section, we present an algorithm which solves {\psp} in time $({\lambda \cdot \Delta})^{O(\lambda \cdot \Delta)}\cdot (|V|+|\calp|)^{O(1)}$, which is equivalent to $({\Gamma \cdot \Delta^2})^{O(\Gamma \cdot \Delta^2)}\cdot (|V|+|\calp|)^{O(1)}$, where $\Gamma$ is feedback vertex number of $G$. Given an instance $(G,{\cal P},k)$ of {\psp},
we first define a set of edges. 

\begin{definition}
  We define ${\cal A}= E(G[X]) \cup E(X,S\cup T)$, that is, the set of all edges in $G$ with at least one endpoint in $X$ (refer Figure  \ref{fig:structures}).
\end{definition}

\paragraph{}

We recall that $|X|= O(\lambda)$ (Corollary \ref{cor:sizeX}), and hence $|{\cal A}|= O(\lambda \cdot \Delta)= O(\Gamma \cdot \Delta^2)$. We recall that the vertex set $X$ contains the vertices $z_1,z_2$, and $z_3$ that we added separately, and thus $\cal A$ contains edges between these vertices. Further, no path in $\cal P$ contains any of these edges. This is an essential property that we will use later in our algorithm.
\paragraph{}

We now search for a path set packing of maximum size in $\cal P$, which is also a nice path set packing, and its existence is ensured by Lemma \ref{lemma:maxnicepath}.  Recall that ${\cal D}\cup {\cal T}$ is the set of all the components in $G[S\cup T]$, and since $E(G)\setminus {\cal A}$ is the edge set $E(G[S\cup T])$, every path in the solution which does not intersect with $\cal A$, will be an internal path of a component in ${\cal D} \cup {\cal T}$. We first guess the number of internal paths that every $D\in \cal D$ will have in the solution. Formally, let $f_d: {\cal D} \to \{0,1\}$ be our guess that for every $D\in \cal D$, ${\textsc{opt}}(D)- f_d(D)$ internal paths of $D$ are in the solution. Further, for every $T\in \cal T$ we know that ${\textsc{opt}}(T)$ internal paths of $T$ will be in the solution. 
\paragraph{}

It is now left to guess the number of paths in the solution which intersects with ${\cal A}$. To this end, let $f_e = \{A_1,A_2,\dots, A_{|f_e|}\}$ be a partition of $\cal A$. Intuitively, this is our second guess, where we are guessing that if a path $p$ in the solution intersects with $\cal A$, then there exists an $A\in f_e$ such that $A=E(p)\cap {\cal A}$. Further, we are also guessing that there exists an $A'\in f_e$ such that $A'$ is the set of all the edges of $\cal A$ which do not belong to any path in the solution. Thus, we are guessing that there will be $|f_e|-1$ paths in the solution which intersects with ${\cal A}$. Further, we define  $\textsc{size}(f_d,f_e)$ as follows.

\begin{equation} \label{equation:sizefdfe}
    \textsc{size}(f_d,f_e) = (|f_e|-1)+ \sum_{T\in {\cal T}} {\textsc{opt}}(T) + \sum_{D\in {\cal D}} ({\textsc{opt}}(D)- f_d(D)).
\end{equation}

\paragraph{}

We now formally define the intuition behind the pair $(f_d,f_e)$ as the following.

\begin{definition} \label{definition:feasiblesolution}
  A path set packing $M\subseteq \cal P$ is a feasible solution for a pair $(f_d,f_e)$ if the following holds.
\begin{enumerate}
    \item Let $P=\{p\mid p\in M \land (E(p)\cap {\cal A} \neq \emptyset)\}$, that is the set of all the paths in $M$ which intersects with ${\cal A}$, then
    \begin{itemize}
        \item $|P|=|f_e|-1$;
        \item for every $p\in P$, there exists an $A\in f_e$ such that $E(p)\cap {\cal A} = A$.
    \end{itemize}
    \item For every $D\in \cal D$, $|{\textsc{int}}(D,M)| = {\textsc{opt}}(D)-f_d(D)$.
    \item For every $T\in \cal T$, $| {\textsc{int}}(T,M)| = {\textsc{opt}}(T)$.
\end{enumerate}
\end{definition}
\paragraph{}

Further, we say a pair $(f_d,f_e)$ has a feasible solution if and only if there exists a path set packing $M\subseteq \cal P$ which is a feasible solution for $(f_d,f_e)$.

\begin{observation} \label{obs:sizefdfe}
    If a path set packing $M \subseteq {\cal P}$ is a feasible solution for the pair $(f_d,f_e)$, then $|M|= \textsc{size}(f_d,f_e)$.
\end{observation}
\begin{proof}
    By definition, $M$ contains $|f_e|-1$ paths which intersect with $\cal A$. Since $E(G)\setminus \cal A$ is the edge set $E(G[S\cup T])$, if a path in $M$ does not intersect with $\cal A$, then it must be an internal path of a component in $ {\cal D}\cup {\cal T}$. In Equation \ref{equation:sizefdfe} we are counting every path of $M$ which is an internal path of a component in ${\cal D}\cup {\cal T}$. Further, a path cannot be an internal path of two distinct components, as components are edge disjoint. Hence, we are not counting any path twice.
\end{proof}

\begin{lemma} \label{lemma:maxfeasible}
    For every nice path set packing $M\subseteq \cal P$, there exists a pair $(f_d,f_e)$ such that $M$ is a feasible solution for the pair $(f_d,f_e)$.
\end{lemma}
\begin{proof}
Let $P$ be the set of all the paths in $M$ which intersects with $\cal A$; we construct $f_e = \{E(p) \cap {\cal A} \mid  p\in P\}$ $\cup \{{\cal A}\setminus E(P)\}$. Here $\{{\cal A}\setminus E(P)\}$ is non-empty, as it contains the edges between the vertices of $\{z_1,z_2,z_3\}$, which we added separately. Further, $E(p) \cap {\cal A}$ is distinct, disjoint, and nonempty for every 
$p\in P$, as $P$ is a path set packing. Thus, $f_e$ is a partition of $\cal A$, and $|P|= |f_e|-1$. For the second condition, since $M$ is a nice path set packing, for every component $D\in \cal D$, ${\textsc{opt}}(D)\geq |{\textsc{int}}(D,M)|\geq \textsc{opt}(D)-1$. Thus, for every $D\in \cal D$, if $|{\textsc{int}}(D,M)|=\textsc{opt}(D)-1$, then we set $f_d(D)= 1$, else we set $f_d(D)= 0$. Property three is trivially satisfied by $M$.
\end{proof}
\paragraph{}

Let $M^*\subseteq \cal P$ be a nice path set packing of maximum size in $\cal P$.  Further, let $\cal F$ be the set of all the possible pairs $(f_d,f_e)$, where $f_d: {\cal D}\to \{0,1\}$, and $f_e$ is a partition of $\cal A$. Then
\begin{equation} \label{eqn:sizemstar}
    |M^*|= \max \{ \textsc{size}(f_d,f_e)\mid (f_d,f_e)\in {\cal F} \land (f_d,f_e) \text{ has a feasible solution} \}.
\end{equation}
\paragraph{}

The correctness of Equation \ref{eqn:sizemstar} is due to the following. Observation \ref{obs:sizefdfe} ensures that if $(f_d,f_e)$ has a feasible solution, then there exists a path set packing  $M\subseteq \cal P$ of size $\textsc{size}(f_d,f_e)$. This combined with the maximality of $M^*$ implies that $|M^*|\geq \textsc{size}(f_d,f_e)$ if $(f_d,f_e)$ has a feasible solution. Further, the equality holds due to the Lemma \ref{lemma:maxfeasible}, which ensures that there exists a pair $(f^*_d,f^*_e)\in \cal F$ such that $M^*$ is a feasible solution of $(f^*_d,f^*_e)$, and hence $|M^*|= \textsc{size}(f^*_d,f^*_e)$.

 There are at most $2^{O(\lambda)}$ distinct $f_d: {\cal D}\to \{0,1\}$ and at most $({\lambda \cdot \Delta})^{O(\lambda \cdot \Delta)}$  distinct partitions $f_e$ of $\cal A$. If we can verify whether a pair $(f_d,f_e)$ has a feasible solution in time $(|V|+|{\cal P}|)^{O(1)}$, then we can use Equation \ref{eqn:sizemstar} to find $|M^*|$, and we can bound the running time of the algorithm to $({\lambda \cdot \Delta})^{O(\lambda \cdot \Delta)}\cdot (|V|+|\calp|)^{O(1)}$. In the remaining part of this section, we will discuss how to verify if a pair $(f_d,f_e)$ has a feasible solution. For this purpose, consider the following definition.

\begin{definition} \label{def:fdfecompatible}
For an $\alpha \in f_e$, a set of paths $P\subseteq \cal P$ is compatible to $(f_d,f_e,\alpha)$ if the following hold.
\begin{enumerate}
    \item $P$ is a path set packing;
    \item for every $p\in P$, there exists an $A\in (f_e\setminus \{\alpha\})$ such that $E(p)\cap {\cal A} =A$;
    \item for every $D\in {\cal D}$, ${\textsc{opt}}(D- E(P))\geq ({\textsc{opt}}(D)-f_d(D))$;
    \item for every $T\in {\cal T}$, ${\textsc{opt}}(T- E(P))= {\textsc{opt}}(T)$.
\end{enumerate}
\end{definition}

\begin{observation} \label{obs:subsetcomp}

Given a set of paths $P\subseteq \calp$ and $\alpha \in f_e$, we can verify if $P$ is compatible to $(f_d,f_e,\alpha)$ in time $(|V|+|\calp|)^{O(1)}$. Further, if $P$ is compatible to $(f_d,f_e,\alpha)$, then every subset $P'\subseteq P$ is compatible to $(f_d,f_e,\alpha)$.
\end{observation}

\begin{proof}
Given a set of paths $P$, it is straight-forward to verify the first two conditions of  being compatible to $(f_d,f_e,\alpha)$. Further, for every $C\in {\cal D}\cup {\cal T}$, $C- E(P)$ is a forest, hence using Corollary $\ref{cor:optforest}$ we can compute ${\textsc{opt}}(C-E(P))$ in time $(|V|+|\calp|)^{O(1)}$. Further, if $P$ is compatible to $(f_d,f_e,\alpha)$, then every subset $P'$ of $P$ also satisfies all four conditions of being compatible to $(f_d,f_e,\alpha)$.
\end{proof}

\begin{lemma}\label{lemma:fdfesolution}
$(f_d,f_e)$ has a feasible solution if and only if there exists a set of paths $P\subseteq \cal P$ such that $|P|=|f_e|-1$ and $P$ is compatible to $(f_d,f_e,\alpha)$ for an $\alpha\in f_e$. 
\end{lemma}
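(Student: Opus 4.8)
The plan is to prove both implications by exploiting the dichotomy that every path $p\in{\cal P}$ either uses at least one core edge, in which case $E(p)\cap E_X=E_i$ for a unique type $i$, or uses no core edge, in which case $E(p)\subseteq E(G[S\cup T])$ so that $p$ is internal to a single component of $G[S\cup T]$ (i.e.\ to some $D_i\in{\cal D}$ or $T_i\in{\cal T}$). A second fact I would isolate first is that in any edge-disjoint family at most one path can be of a given type $E_i$: since $E_1,\dots,E_l$ are the \emph{nonempty} parts of $E_X$, two paths of the same type would both contain every edge of $E_i\neq\emptyset$ and hence share an edge. Together these observations mean that an edge-disjoint family splits cleanly into at most $l$ core-using paths (at most one per type) plus a collection of internal paths, and that internal paths belonging to distinct components are automatically edge-disjoint because the components of $G[S\cup T]$ are vertex-disjoint.

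For the forward direction, given a feasible solution $M$ I would set $P=\{p\in M : p\text{ is of type }E_i\text{ for some }i\in[l]\}$. Feasibility condition~1 supplies a path of each type, and the uniqueness fact shows there is exactly one per type, so $|P|=l$; edge-disjointness and the type condition for $P$ are inherited directly from $M$. For compatibility conditions~3 and~4 the key point is that the $\text{OPT}(D_i)-f_d(D_i)$ paths counted by feasibility condition~2 are internal to $D_i$ and lie in $M$, hence are edge-disjoint from $P$ and avoid $E(P)$; they therefore witness $\text{OPT}(D_i-E(P))\ge \text{OPT}(D_i)-f_d(D_i)$. The same argument gives $\text{OPT}(T_i-E(P))\ge \text{OPT}(T_i)$, and since deleting edges cannot increase $\text{OPT}$ we get equality, establishing condition~4.

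For the reverse direction, starting from a $(f_d,f_e)$-compatible set $P$ with $|P|=l$, the uniqueness fact again forces $P$ to contain exactly one path of each type $E_1,\dots,E_l$, which is feasibility condition~1. I would then build $M$ by adjoining to $P$, for each $D_i\in{\cal D}$, a family of internal edge-disjoint paths of $D_i-E(P)$ of size exactly $\text{OPT}(D_i)-f_d(D_i)$ (possible by compatibility condition~3), and for each $T_i\in{\cal T}$ a family of $\text{OPT}(T_i)$ internal paths of $T_i-E(P)$ (possible by condition~4). Edge-disjointness of $M$ follows from three sources: disjointness inside $P$; vertex-disjointness of distinct components; and the fact that every adjoined internal path avoids $E(P)$ by construction. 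Finally, because every path of $P$ uses a nonempty set of core edges it is internal to no component, so $\text{INT}(D_i,M)$ and $\text{INT}(T_i,M)$ consist precisely of the internal paths just added, giving feasibility conditions~2 and~3.

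The main obstacle I anticipate is not any single implication but making precise the bookkeeping that core-using paths and internal paths do not clash: concretely, that a type-$E_i$ path may run through part of a component $D_i$ via one of its (at most two) external edges, thereby consuming some of its internal capacity, and that compatibility conditions~3 and~4 are exactly the quantitative guarantee that enough internal edge-disjoint paths survive after deleting $E(P)$. I would therefore phrase the internal counts in terms of $\text{OPT}(\,\cdot-E(P))$ throughout, and invoke Observation~\ref{obs:subsetcomp} together with Proposition~\ref{proposition:psp_tree} to ensure these internal optima are well-defined and attainable on the forests $D_i-E(P)$ and $T_i-E(P)$.
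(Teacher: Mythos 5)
Your proposal is correct and follows essentially the same route as the paper: the forward direction extracts the type-$E_i$ paths from $M$, and the reverse direction augments $P$ with $\footnotesize{\text{OPT}}(D_i-E(P))$ and $\footnotesize{\text{OPT}}(T_i-E(P))$ many internal paths per component. You simply spell out the details the paper leaves to "a direct check" (at most one path per nonempty type in an edge-disjoint family, and that core-using paths are internal to no component), which is a welcome tightening rather than a different argument.
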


\begin{proof}
For the forward direction, let $M\subseteq \cal P$ be a path set packing which is a feasible solution for $(f_d,f_e)$. Let $P$ be the set of all the paths in $M$ which intersects with $\cal A$. By definition, $|P|=|f_e|-1$, hence there exists an $\alpha \in f_e$ such that for every path $p\in P$, $E(p)\cap {\cal A}\neq  \alpha$. It is now straightforward to verify that $P$ satisfies the first two conditions of being compatible to $(f_d,f_e,\alpha)$. Further, for every $D\in \cal D$, the set  ${\textsc{int}}(D,M)$ is a path set packing containing only internal paths of $D$. Since both $P$ and ${\textsc{int}}(D,M)$ are disjoint subsets of a path set packing  $M$, none of the path in ${\textsc{int}}(D,M)$ contains an edge of $E(P)$. Thus, all the paths in ${\textsc{int}}(D,M)$ are internal to $D-E(P)$, that is  ${\textsc{int}}(D,M)$ = ${\textsc{int}}(D-E(p),M)$, and the existence of ${\textsc{int}}(D,M)$ ensures that there exists a path set packing of size $|{\textsc{int}}(D,M)|$ in the set ${\textsc{int}}(D-E(p),\cal P)$. Thus, $\textsc{opt}(D-E(P))\geq |{\textsc{int}}(D,M)|=(\textsc{opt}(D)-f_d(D))$. Thus, $P$ satisfies the third condition of being compatible to $(f_d,f_e,\alpha)$. For the fourth condition, similar arguments as above work for every $T\in \cal T$.
\paragraph{}

For the other direction, let $P\subseteq \cal P$ be a set of paths such that $|P|=|f_e|-1$ and $P$ is compatible to $(f_d,f_e,\alpha)$. We construct a path set packing $M'\subseteq \cal P$ as follows. 
\begin{itemize}
    \item We add $P$ to $M'$;
    \item for every $D\in {\cal D}$, ${\textsc{opt}}(D-E(P))\geq {\textsc{opt}}(D)-f_d(D)$. Thus, for every $D\in \cal D$, let $P_D$ be a path set packing of size  ${\textsc{opt}}(D-E(P))$ in $\textsc{int}(D-E(P),{\cal P})$, we arbitrarily pick a subset of $P_D$ of size exactly ${\textsc{opt}}(D)-f_d(D)$ and add it to $M'$;
    \item for every $T\in {\cal T}$, let $P_T$ be a path set packing of size  ${\textsc{opt}}(T-E(P))$ in $\textsc{int}(T-E(P),{\cal P})$, we add $P_T$ to $M'$.
\end{itemize}
\paragraph{}

 $M'$ is a path set packing, because by definition $P$ is a path set packing, and in steps two and three of the construction, each time we are adding a path set packing in $M'$, which contains internal paths of a distinct component in ${\cal D} \cup {\cal T}$, and components in  ${\cal D} \cup {\cal T}$ are pairwise edge disjoint. Further, none of the paths added in the steps two and three edge intersects with a path in $P$. 
\paragraph{}

In the construction above, apart from paths in $P$, no other path which intersects with $\cal A$ is added to $M'$. Further, adding $P$ to $M'$ ensures that $M'$ satisfies the first property of being a feasible solution for $(f_d,f_e)$. 
In the second step of the construction, for every $D\in \cal D$, we are adding ${\textsc{opt}}(D)-f_d(D)$ internal paths of $D$ in $M'$, and these are the only internal paths of $D$ added to $M'$. Thus, $M'$ satisfies the second property of being a feasible solution for $(f_d,f_e)$. Similarly, in the third step of the construction, for every $T\in \cal T$, we are adding ${\textsc{opt}}(T)$ internal paths of $T$, and these are the only internal paths of $T$ added to $M'$. Thus, $M'$ satisfies the third property of being a feasible solution for $(f_d,f_e)$.
\end{proof}

\paragraph{}

Due to the above lemma, to verify if a pair $(f_d,f_e)$ has a feasible solution, it will suffice to verify if there exists a set of paths $P\subseteq \cal P$ such that the size of $P$ is $|f_e|-1$ and $P$ is compatible to $(f_d,f_e,\alpha)$ for an $\alpha \in f_e$.
Here, the choice of $\alpha$ is essentially automatic, recall that $\cal A$ contains a non zero number of edges which belong to no path in $\cal P$ (at least the edges between the vertices of $\{z_1,z_2,z_3\}$ which we added separately), and if a set $A\in f_e$ contains any of these edges, then for every path $p\in \cal P$, $E(p)\cap {\cal A} \neq A$. If on a choice of $\alpha$, $A$ remains in $f_e\setminus \{\alpha\}$, then every set of paths $P$, which is compatible to $(f_d,f_e,\alpha)$, can be of size at most $|f_e|-2$. This is because a set of paths $P$ must be a path set packing if it is compatible to $(f_d,f_e,\alpha)$, and if $P$ is a path set packing, then for every $p\in P$, $E(p)\cap \cal A$ is distinct. The second condition in Definition \ref{def:fdfecompatible} requires $E(p)\cap \cal A$ to belong to $f_e\setminus \{\alpha\}$, and we know that for every $p\in P$, $E(P)\cap \cal A\neq A$, hence $|P|\leq |f_e|-2$. Essentially, it is required that all the edges in $\cal A$, which belong to no path in $\cal P$, belong to a same set in $f_e$, and we need to choose that set as $\alpha$. Otherwise, we conclude that $(f_d,f_e)$ has no feasible solution. Thus, we shall henceforth assume that there exists a set $A\in f_e$, which contains all the edges in $\cal A$ which belong to no path in $\cal P$, and we chose $A$ as $\alpha$. Further, if for this chosen $\alpha$, there exists a path set packing $P$ of size $|f_e|-1$, which is compatible to $(f_d,f_e,\alpha)$, we conclude that $(f_d,f_e)$ has a feasible solution, otherwise we conclude that $(f_d,f_e)$ has no feasible solution.
\paragraph{}

In the remaining part of this section, we assume that $(f_d,f_e,\alpha)$ is given, and our task is to verify in time  $(|V|+|{\cal P}|)^{O(1)}$, if there exists a path set packing of size of $|f_e|-1$, which is compatible to $(f_d,f_e,\alpha)$. To this end, consider the following lemma.

\begin{lemma}\label{lemma:threeormore}
If there exists an $A\in f_e$ such that $A$ contains exactly one external edge of three or more components in ${\cal D} \cup {\cal T}$, then for every  $p\in \cal P$, $E(p)\cap {\cal A} \neq A$.
\end{lemma}

\begin{proof}

Let there be an $A\in f_e$ such that $A$ contains exactly one external edge of three or more components in ${\cal D} \cup {\cal T}$. Let $p\in \cal P$ be a path such that $E(p)\cap {\cal A} = A$.
 By definition of $\cal A$, union of external edges of every component in ${\cal D} \cup {\cal T}$ forms a subset of $\cal A$. Thus $E(p)\cap {\cal A} = A$ implies that $p$ contains exactly one external edge of three or more components in ${\cal D} \cup {\cal T}$. By Lemma \ref{lemma:paththreeormore1}, we conclude that this is not possible.
\end{proof}

\paragraph{}

By the above lemma, if an $A\in (f_e \setminus \{\alpha\})$ contains exactly one external edge of three or more components in ${\cal D} \cup {\cal T}$, then every set of paths $P\subseteq \cal P$ compatible to $(f_d,f_e,\alpha)$ can be of size at most $|f_e|-2$. This is because if a set $P$ is a path set packing, then for every $p\in P$, $E(p)\cap \cal A$ is distinct. The second condition in Definition \ref{def:fdfecompatible} requires $E(p)\cap \cal A$ to belong to $f_e\setminus \{\alpha\}$, and we know that for every $p\in P$, $E(P)\cap \cal A\neq A$, hence $|P|\leq |f_e|-2$. Thus, we shall henceforth assume that no $A\in (f_e \setminus \{\alpha\})$ contains exactly one external edge of three or more components in ${\cal D} \cup {\cal T}$.
\paragraph{}

We now construct an auxiliary graph $H$ on vertex set $f_e\setminus \{\alpha\}$. In $H$, two vertices $A$ and $A'$ are adjacent if and only if there exists a $D\in \cal D$ such that both sets $A$ and $A'$ contain an external edge of $D$. We claim that the maximum degree of $H$ is two. For the proof of this claim consider the following arguments. If there is an edge between two vertices $A$ and $A'$ in $H$, then there must be a $D\in \cal D$ such that both sets $A$ and $A'$ contain an external edge of $D$, in this case both $A$ and $A'$ contains exactly one external edge of $D$ as $A$ and $A'$ are disjoint and $D$ has only two external edges. Thus, every edge incident on a vertex $A$ in $H$ implies that set $A$ contains exactly one external of a component in $\cal D$. Further, every $D\in \cal D$ can cause at most one edge in $H$, as $D$ has only two external edges, and none of its external edges belongs to more than one set in $f_e$. Thus, if the degree of a vertex $A$ is three or more in $H$, then it implies that the set $A$ contains exactly one external edge of three or more components in $\cal D$, which is not the case. Hence, the maximum degree of $H$ is  at most two, and $H$ is a disjoint union of paths and cycles.
\paragraph{}

 Let there be $l$ components in $H$. Consider an arbitrary but fixed and distinct labelling from $1$ to $l$ of the components of $H$. For every $i\in[l]$,  we construct a sequence $\pi_i$ of vertices of component $i$ as follows.
\begin{itemize}
    \item If the component $i$ is a path: we set $\pi_i$ to be the path formed by component $i$ that contain all its vertices, the starting point of the path is arbitrarily picked from one of its terminal (degree one vertex) if $i$ has two or more vertices.

    \item If the component $i$ is a cycle: we arbitrarily pick a vertex $A$ in component $i$. Since $i$ is a cycle, there are only two simple paths in $i$ which starts with $A$ and contain all the vertices of $i$. We arbitrarily pick one of these two paths, let $p$ be the picked path, we set $\pi_i=p$.
    
\end{itemize}
\paragraph{}

 Since $f_e\setminus \{\alpha \}$ is the vertex set of $H$, we can also say that, for every $i\in [l]$, $\pi_i$ is a sequence of sets of $f_e\setminus \{ \alpha \}$, and this is what we will refer to as $\pi_i$ for our purposes. Further, let $\Pi =\{\pi_{i}|i\in [l]\}$. We note that the construction of $H$ and $\Pi$ takes time $(|V|+|\calp|)^{O(1)}$.

\begin{definition}
  A sequence $\rho$ of paths in $\cal P$ is a candidate for a sequence $\pi\in \Pi$, if the following holds.
        \begin{itemize}
            \item $|\rho|=|\pi|$, and
            \item for every $i \in [|\pi|]$, $E({\rho[i]) \cap {\cal A}} = \pi[i]$.
        \end{itemize}
\end{definition}

\begin{lemma}\label{lemma:fdfesolution_candidate}
There exists a set of paths $P\subseteq \cal P$ such that $|P|=|f_e|-1$ and $P$ is compatible to $(f_d,f_e,\alpha)$ if and only if for every sequence $\pi \in \Pi$, there exists a candidate $\rho$ such that $set(\rho)$ is compatible to $(f_d,f_e,\alpha)$.
\end{lemma}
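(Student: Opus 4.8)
The plan is to prove both directions by exploiting the fact that the set $\Pi$ of type sequences partitions the types $\{E_1,\dots,E_l\}$ (the vertices of the auxiliary graph $H$) into the connected components of $H$, together with the structural feature, forced by the construction of $H$, that every component $C\in\mathcal{D}\cup\mathcal{T}$ has all of its external edges confined to types that lie in a \emph{single} component of $H$. This localization is what lets me assemble per-sequence candidates into one global solution and conversely.

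For the forward direction, suppose $P\subseteq\mathcal{P}$ is $(f_d,f_e)$-compatible with $|P|=l$. Since the paths of $P$ are pairwise edge disjoint and each is of some type $E_i$ with $i\in[l]$, and since two paths of the same (nonempty) type would share all edges of that type, $P$ contains exactly one path of each type $E_1,\dots,E_l$. For each $\pi_m\in\Pi$ I read off the sequence $\rho_m$ whose $j$-th entry is the unique path of type $\pi_m[j]$; then $\rho_m$ is a candidate for $\pi_m$, and $set(\rho_m)\subseteq P$, so by Observation \ref{obs:subsetcomp} $set(\rho_m)$ is $(f_d,f_e)$-compatible.

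For the backward direction I set $P=\bigcup_m set(\rho_m)$. Because distinct components of $H$ use disjoint type sets, this union contains exactly one path of each type $E_1,\dots,E_l$, so $|P|=l$ and condition 2 of compatibility is immediate. The heart of the argument is the structural claim, which I would establish first: each component $C\in\mathcal{D}\cup\mathcal{T}$ is entered only by paths whose types all belong to one fixed $\pi_m$. Indeed, a $\mathcal{T}$-component has a single external edge, lying in one type; a $\mathcal{D}$-component has two external edges, and if they lie in two different types then those two types are adjacent in $H$ by definition, hence in the same component $\pi_m$. Since a path in $P$ can reach the interior of $C$ only through an external edge of $C$, and a path of type $E_i$ uses external edge $e$ exactly when $e\in E_i$, it follows that among all paths of $P$ only those coming from that one $\rho_m$ contribute edges inside $C$; that is, $E(P)\cap E(C)=E(set(\rho_m))\cap E(C)$.

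Given this, the remaining conditions follow. For edge-disjointness (condition 1), two paths from the same $\rho_m$ are disjoint because $set(\rho_m)$ is compatible, while two paths from different sequences are disjoint because their types give disjoint intersections with the core edges $E_X$ and, by the structural claim, they never enter a common component of $\mathcal{D}\cup\mathcal{T}$ (the only location of non-core edges). For the OPT conditions 3 and 4, the identity $E(P)\cap E(C)=E(set(\rho_m))\cap E(C)$ gives $C-E(P)=C-E(set(\rho_m))$ as subgraphs, so the required bound on $\text{OPT}(C-E(P))$ transfers directly from the compatibility of $set(\rho_m)$; components touched by no sequence have $E(P)\cap E(C)=\emptyset$ and satisfy the conditions trivially, using $f_d\ge 0$. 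The main obstacle is precisely the structural localization claim and its careful use to equate $E(P)\cap E(C)$ with $E(set(\rho_m))\cap E(C)$ for every component; once that is in place, both directions reduce to bookkeeping over the partition of types induced by $H$.
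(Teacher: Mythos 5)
Your proof is correct and follows essentially the same route as the paper's: read off one path per type for the forward direction (using Observation \ref{obs:subsetcomp}), and for the converse take the union of the candidates and use the fact that every component of ${\cal D}\cup{\cal T}$ has all of its usable external edges in types lying in a single component of $H$, so that both edge-disjointness and the $\footnotesize{\text{OPT}}$ conditions localize to a single candidate. Your explicit ``localization claim'' is just a cleaner packaging of the case analysis the paper performs inline (and your forward direction is slightly more careful about why $P$ contains exactly one path of each type), but the substance is identical.
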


\begin{proof}
For the forward direction, let $P\subseteq \cal P$ be a set of paths such that $|P|=|f_e|-1$ and $P$ is compatible to $(f_d,f_e,\alpha)$. Since $P$ is a path set packing, for every $p\in P$, $E(p)\cap \cal A$ is distinct and it belongs to $f_e\setminus \{\alpha\}$, this implies that for every $A\in f_e\setminus \{\alpha\}$, there exists a path $p\in P$ such that $E(p)\cap {\cal A} = A$. For every $\pi \in \Pi$, we construct a candidate $\rho_{\pi}$ as follows. For every $i\in [|\pi|]$, let $p$ be the path in $P$ such that $E(p)\cap {\cal A}= \pi[i]$, we set $\rho_{\pi} [i]= p$. Further, for the constructed $\rho_{\pi}$, $set(\rho_{\pi})$ is compatible to $(f_d,f_e,\alpha)$ because it is a subset of $P$. 
\paragraph{}

For the other direction, for every $\pi \in \Pi$, let $\rho_{\pi}$ be a candidate of $\pi$ such that $set(\rho_{\pi})$ is compatible to $(f_d,f_e,\alpha)$. Let $\Phi= \{\rho_{\pi} \mid \pi \in \Pi\}$. Further, let $P^*= \bigcup_{\rho \in \Phi} set(\rho)$. By construction, every $A\in f_e\setminus \{\alpha\}$ belongs to exactly one sequence in $\Pi$, hence the size of $P^*$ is $|f_e|-1$. We now show that $P^*$ is a path set packing. Consider any two distinct paths $p_i,p_j \in P^*$ and let $E(p_i)\cap {\cal A} = A_i$ and $E(p_j)\cap {\cal A} = A_j$. 
Since $A_i$ and $A_j$ are distinct and disjoint, $p_i$ and $p_j$ do not intersect at an edge that belong to $\cal A$. If there exists a $D\in \cal D$ such that both $p_i$ and $p_j$ intersect at an edge of $D$, then by Observation \ref{obs:atleast_one_ext}, both of them must contain an external edge of $D$ as well. This implies that both $A_i$ and $A_j$ must contain an external edge of $D$ as external edges of $D$ forms a subset of $\cal A$. In this case, $A_i$ and $A_j$ must be neighbors in the auxiliary graph $H$, hence they both belong to a same component of $H$, and to a same sequence $\pi \in \Pi$. Thus, both the paths $p_i$ and $p_j$ belong to the candidate $\rho_{\pi}$. Since $set(\rho_{\pi})$ is compatible to $(f_d,f_e,\alpha)$, both $p_i$ and $p_j$ are edge disjoint, contradicting the assumption that they intersect at an edge of $D$. Similarly, every $T\in \cal T$ has only one external edge and that belongs to $\cal A$. Thus, the external edge of every $T\in \cal T$ belongs to at most one of $A_i$ and $A_j$, hence it belongs to at most one of $p_i$ and $p_j$. This implies that for every $T\in \cal T$, at most one of $p_i$ and $p_j$ can contain an edge of $T$ (By Observation \ref{obs:atleast_one_ext}). Thus, $p_i$ and $p_j$ do not intersect at an edge that belong to a $T\in \cal T$.
\paragraph{}

Further, the second condition for $P^*$ being compatible to $(f_d,f_e,\alpha)$ is trivially satisfied. We prove the third condition in Definition \ref{def:fdfecompatible} for an arbitrary $D \in \cal D$. Let $P_D$ be the set of all the paths in $P^*$ which contain an edge of $D$. Since every path in $P_D$ is not internal to $D$, every path in $P_D$ will have an external edge of $D$ as well (Observation \ref{obs:atleast_one_ext}). For every $p\in P_D$, let $A_p= E(p)\cap \cal A$. By construction of $H$, for every distinct $p_1,p_2\in P_D$, $A_{p_1}$ and $A_{p_2}$ are neighbours in $H$. Hence, the vertices $\{A_p \mid p\in P_D\}$ form a connected component in $H$, and hence they all belong to the same sequence $\pi \in \Pi$, and thus $P_D \subseteq set(\rho_{\pi})$. Thus, $(D-E(P^*))= (D-E(P_D))= (D-E(set(\rho_{\pi}))$. Since it is given that $set(\rho_{\pi})$ is compatible to $(f_d,f_e,\alpha)$, we have ${\textsc{opt}}(D-E(P^*))= {\textsc{opt}}(D-E(set(\rho_{\pi})))\geq ({\textsc{opt}}(D)-f_d(D))$. Similarly, for the fourth condition, for every $T\in \cal T$, at most one path in $P^*$ can contain the external edge of $T$, and only that path can contain its edges. If a $p\in P^*$ contains an edge of $T$, then let $\rho\in \Phi$ be the candidate which contains $p$, then $(T-E(P^*))=(T-E(p))=(T-set(\rho))$. Since it is given that $set(\rho)$ is compatible to $(f_d,f_e,\alpha)$, we have ${\textsc{opt}}(T-E(P^*))= {\textsc{opt}}(T-E(set(\rho)))={\textsc{opt}}(T)$.
\end{proof}

\paragraph{}

Due to the above lemma, to verify if there exists a set of paths $P\subseteq \cal P$ such that $|P|=|f_e|-1$ and $P$ is compatible to $(f_d,f_e,\alpha)$, it suffices to verify if for every sequence $\pi \in \Pi$, there exists a candidate $\rho$ such that $set(\rho)$ is compatible to $(f_d,f_e,\alpha)$.
Since $|\Pi|$ is at most $|V|^2$, it suffices to prove that given a sequence $\pi\in \Pi$, in time $(|V|+|\calp|)^{O(1)}$ we can verify if $\pi$ has a candidate $\rho$ such that $set(\rho)$ is compatible to $(f_d,f_e,\alpha)$. To this end, consider the following lemma.

\begin{lemma}\label{lemma:adjacent}
For every type sequence $\pi\in \Pi$, let $\rho$ be a candidate of $\pi$, then $set(\rho)$ is compatible to  $(f_d,f_e,\alpha)$ if and only if
\begin{itemize}
    \item For every $j\in [|\rho|]$, $\{\rho[j],\rho[(j+1)\mod|\rho|]\}$ is compatible to $(f_d,f_e,\alpha)$.
\end{itemize}
\end{lemma}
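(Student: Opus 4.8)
The plan is to prove both directions, with the forward implication essentially free and the reverse implication carrying all the work. For the forward direction, suppose $set(\rho_i)$ is $(f_d,f_e)$-compatible. Every set $\{\rho_i[j],\rho_i[(j+1)\bmod|\pi_i|]\}$ is a subset of $set(\rho_i)$ (of size one or two), so by Observation~\ref{obs:subsetcomp} it is $(f_d,f_e)$-compatible as well; this immediately gives the ``only if'' direction.

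For the reverse direction, assume every cyclically consecutive pair is compatible and verify the four conditions of compatibility for $set(\rho_i)$. First I would establish the central structural fact: \emph{for every component $C\in\mathcal{D}\cup\mathcal{T}$, the paths of $set(\rho_i)$ that use an edge interior to $C$ are contained in a single consecutive pair $\{\rho_i[j],\rho_i[(j+1)\bmod|\pi_i|]\}$}. The argument rests on two points. Since $f_e$ partitions $E_X$ and the vertices of a type sequence are distinct types, the paths of $\rho_i$ have pairwise disjoint intersections with $E_X$; and since every external edge of $C$ lies in $E_X$, a path can reach the interior of $C$ only through an external edge belonging to its own type. A component in $\mathcal{T}$ has a single external edge, lying in one class, so only one path of $\rho_i$ can touch it. A component $D\in\mathcal{D}$ has two external edges; if they lie in the same class only one path can touch $D$, and if they lie in different classes $E_a,E_b$ then by construction $E_a$ and $E_b$ are adjacent in $H$, hence consecutive in $\pi_i$, so the (at most two) paths touching $D$ are exactly $\rho_i[j],\rho_i[(j+1)\bmod|\pi_i|]$ for the corresponding position $j$.

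Given this fact the four conditions follow. Condition~2 is immediate from the definition of a candidate. For condition~1, two distinct paths of $set(\rho_i)$ cannot share an edge of $E_X$ (their $E_X$-parts are disjoint), so any shared edge would be interior to some component $C$; but then both paths touch $C$, so by the structural fact they form a consecutive pair, contradicting the assumed edge-disjointness of that pair. For conditions~3 and~4, fix a component $C$ and let $\{p,q\}$ be the consecutive pair (possibly a single path, or none) supplied by the structural fact; since no other path of $set(\rho_i)$ contributes edges to $C$, we have $C-E(set(\rho_i))=C-E(\{p,q\})$, whence $\mathrm{OPT}(C-E(set(\rho_i)))=\mathrm{OPT}(C-E(\{p,q\}))$. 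The required inequality (for $D_i\in\mathcal{D}$) or equality (for $T_i\in\mathcal{T}$) then transfers verbatim from the compatibility of the consecutive pair. Components of $\mathcal{D}\cup\mathcal{T}$ whose relevant classes do not appear in $\pi_i$ are untouched by $set(\rho_i)$ and satisfy their conditions trivially.

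The main obstacle is the structural fact of the second paragraph, and in particular its careful bookkeeping: one must rule out a third path touching a component (using disjointness of the type classes), confirm that the two paths touching a two-external-edge component are genuinely consecutive rather than merely in the same sequence, and handle the degenerate cases where $|\pi_i|=1$ (so the ``pair'' is a single path), where $\pi_i$ is a cycle (so the wrap-around index is a true adjacency), and where several components of $\mathcal{D}$ induce the same edge of $H$ (all absorbed by the same consecutive pair). Once this is in place, the transfer of all four conditions to $set(\rho_i)$ is routine.
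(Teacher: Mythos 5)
Your proposal is correct and follows essentially the same route as the paper's proof: the forward direction via Observation~\ref{obs:subsetcomp}, and the reverse direction by using the disjointness of the type classes and the external-edge counts of components in ${\cal D}\cup{\cal T}$ to localize every shared edge and every affected component to a single (cyclically) consecutive pair, from which all four compatibility conditions transfer. Your packaging of this as one structural fact, and in particular your direct argument that a component of ${\cal T}$ can be touched by at most one path of $\rho_i$ (since its unique external edge lies in only one type class), is if anything a cleaner rendering of the same idea than the paper's endpoint-based case analysis.
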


\begin{proof}
For the forward direction, let $\rho$ be a candidate of $\pi$ such that $set(\rho)$ is compatible to $(f_d,f_e,\alpha)$, then every subset of $set(\rho)$ is compatible to $(f_d,f_e,\alpha)$.
\paragraph{}

For the other direction, let $\rho$ be a candidate of $\pi$ and for every $j\in[|\rho|]$, $\{\rho[j],\rho[(j+1)\mod|\rho|]\}$ is compatible to $(f_d,f_e,\alpha)$. We now show that $\rho$ is compatible to $(f_d,f_e,\alpha)$.  We assume that $|\rho|\geq 3$, otherwise the case is trivial to verify. We first show that $set(\rho)$ is a path set packing. Assume to the contrary that there exists two paths $\rho[i]$ and $\rho[j]$ which intersects at an edge $e\in E(G)$ where $i\neq j$. The edge $e$ cannot belong to $\cal A$, since $\pi[i]$ and $\pi[j]$ are distinct and disjoint. If $e$ belongs to a $D\in \cal D$, then both $\rho[i]$ and $\rho[j]$ must contain an external edge of $D$ (By Observation \ref{obs:atleast_one_ext}), and in that case, by construction of $\pi$ and definition of $\rho$, $\rho[i]$ and $\rho[j]$ must be adjacent in $\rho$, or must be terminals of $\rho$ (possible if $\pi$ corresponds to a component of $H$, which is a cycle). In other words, either $i=(j+1)\mod|\rho|$, or $j=(i+1)\mod|\rho|$. And thus, \{$\rho[i],\rho[j]\}$ is compatible to $(f_d,f_e,\alpha)$, hence they are pairwise edge disjoint, contradicting our assumption that $e$ belongs to $D$. Further, $e$ cannot belong to a $T\in \cal T$, because for every $T\in \cal T$, $T$ has only one external edge and that belongs to $A$. Thus, for every $T\in {\cal T}$, the external edge of $T$ can belongs to at most one of $\pi[i]$ and $\pi[j]$, hence it belongs to at most one of $\rho[i]$ and $\rho[j]$. This implies that at most one of the $\rho[i]$ and $\rho[j]$ can contain an edge of $T$ (By Observation \ref{obs:atleast_one_ext}). This contradicts the fact that $e$ belongs to a $T\in \cal T$, and thus contradicts our assumption that $\rho[i]$ and $\rho[j]$ intersects at an edge.
\paragraph{}

Further, $\rho$ satisfies the second condition of being compatible to $(f_d,f_e,\alpha)$, as it is a candidate of $\pi$. For the third condition, for $D\in \cal D$, let $P_D\subseteq \rho$ be the set of all the paths of $\rho$ which contain an edge of $D$. We claim that there exists a $j\in[|\rho|]$ such that indices of paths of $P_D$ in $\rho$ belong to $\{j,(j+1)\mod|\rho|\}$. This is because $D$ has only two external edges, and paths in $\rho$ are pairwise edge disjoint and are non internal to $D$, hence $|P_D|\leq 2$. Further, by construction of $\pi$ and definition of $\rho$, if $|P_D|= 2$, then both the paths of $P_D$ must be adjacent in $\rho$ or must be terminals of $\rho$ (possible if $\pi$ corresponds to a component of $H$, which is a cycle). Then there exists a $j\in[|\rho|]$, such that $D-E(set(\rho))= D-E(P_D)= D-E(\{\rho[j],\rho[(j+1)\mod|\rho|]\})$, and thus ${\textsc{opt}}(D-E(set(\rho)))={\textsc{opt}}(D-E(\{\rho[j],\rho[(j+1)\mod|\rho|]\}))\geq ({\textsc{opt}}(D)-f_d(D))$, and this holds for every $D\in \cal D$. For the fourth condition, similar arguments as above work for every $T\in \cal T$.
\end{proof}

\paragraph{}

The above lemma, helps us verify if a sequence $\pi\in \Pi$ has a candidate $\rho$ which is compatible to $(f_d,f_e,\alpha)$, by just verifying if paths for adjacent sets in $\pi$ are compatible to $(f_d,f_e,\alpha)$. If the size of $\pi$ is $\leq 2$, then by iterating over every distinct subset of $\cal P$ of size $|\pi|$, we can verify if there exists a candidate of $\pi$ which is compatible to $(f_d,f_e,\alpha)$. This will take time $(|V|+|\calp|)^{O(1)}$. We now move on to the case when $|\pi|\geq 3$.

Given a type sequence $\pi\in \Pi$, we create an auxiliary directed graph $H_{\pi}$ as follows.
\begin{itemize}
    \item  For every $j\in [|\pi|]$,  construct a vertex set $V_j=\{v_p \mid\ (p\in {\cal P}) \land (E(p)\cap {\cal A} = \pi[j]) \land \{p\}\text{ is compatible to }(f_d,f_e,\alpha)\}$. 
    
    \item For every $j\in [|\pi|]$, let $j'= (j+1) \mod |\pi|$, we add an arc (directed edge) from $v_p\in V_j$ to $v_q\in V_{j'}$ if and only if $\{p,q\}$ is compatible to $(f_d,f_e,\alpha)$. 
\end{itemize}
\paragraph{}

We note that the construction of $H_{\pi}$ takes time $(|V|+|{\cal P}|)^{O(1)}$. The constructed $H_{\pi}$ is helpful due to the following lemma.

\begin{lemma}\label{lemma:fdfe_cycle}
$\pi$ has a candidate $\rho$ such that $set(\rho)$ is compatible to $(f_d,f_e,\alpha)$ if and only if $H_{\pi}$ has a cycle containing exactly one vertex from every vertex set in $\{V_1,V_2,\dots,  V_{|\pi|}\}$.
\end{lemma}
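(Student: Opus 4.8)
My plan is to set up a natural correspondence between candidates $\rho_i$ for the type sequence $\pi_i$ and cyclic sequences of vertices in the auxiliary digraph $H_i$, and then invoke Lemma \ref{lemma:adjacent} to reduce global $(f_d,f_e)$-compatibility to the purely local (consecutive-pair) condition that $H_i$ is built to encode. First I would observe that, by construction, $H_i$ is laid out in $|\pi_i|$ layers $V_1,\dots,V_{|\pi_i|}$, where $V_j$ consists of exactly those paths $p$ of type $\pi_i[j]$ for which the singleton $\{p\}$ is $(f_d,f_e)$-compatible, and an arc runs from $v_p\in V_j$ to $v_q\in V_{j'}$ (with $j'=(j+1)\bmod|\pi_i|$) precisely when $\{p,q\}$ is $(f_d,f_e)$-compatible. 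Thus a sequence $\rho_i$ with $\rho_i[j]\in V_j$ for each $j$ corresponds to picking one vertex from each layer, and the consecutive-pair compatibility of $\rho_i$ is exactly the statement that the chosen vertices are joined by arcs $v_{\rho_i[1]}\to v_{\rho_i[2]}\to\cdots\to v_{\rho_i[|\pi_i|]}\to v_{\rho_i[1]}$, i.e. that they form a directed cycle hitting each $V_j$ once.

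\textbf{The two directions.}
For the forward direction, suppose $\pi_i$ has a candidate $\rho_i$ with $set(\rho_i)$ being $(f_d,f_e)$-compatible. By Observation \ref{obs:subsetcomp}, every singleton $\{\rho_i[j]\}$ is $(f_d,f_e)$-compatible, so each $\rho_i[j]$ indeed gives a vertex in $V_j$; moreover $\rho_i[j]$ is of type $\pi_i[j]$ since $\rho_i$ is a candidate. By Lemma \ref{lemma:adjacent} (the easy direction, again via Observation \ref{obs:subsetcomp}) each consecutive pair $\{\rho_i[j],\rho_i[(j+1)\bmod|\pi_i|]\}$ is $(f_d,f_e)$-compatible, which by definition of the arcs of $H_i$ gives the cycle $v_{\rho_i[1]}\to\cdots\to v_{\rho_i[|\pi_i|]}\to v_{\rho_i[1]}$ meeting every layer in exactly one vertex. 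Conversely, suppose $H_i$ has a cycle $v_{p_1}\to v_{p_2}\to\cdots\to v_{p_{|\pi_i|}}\to v_{p_1}$ using exactly one vertex $v_{p_j}\in V_j$ per layer. Because $H_i$ is layered with all arcs going from $V_j$ to $V_{j+1}$ (cyclically), any cycle hitting each layer once must, after relabelling, visit the layers in cyclic order, so we may set $\rho_i[j]=p_j$. Each $p_j$ has type $\pi_i[j]$, so $\rho_i$ is a candidate; each consecutive pair is $(f_d,f_e)$-compatible by the arc condition; hence by the nontrivial direction of Lemma \ref{lemma:adjacent}, $set(\rho_i)$ is $(f_d,f_e)$-compatible.

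\textbf{Main obstacle.}
The routine work is the bookkeeping of the cyclic indexing and the layered structure; the one genuine subtlety I would be careful about is arguing that \emph{any} cycle in $H_i$ meeting every $V_j$ exactly once must traverse the layers in the cyclic order $V_1\to V_2\to\cdots\to V_{|\pi_i|}\to V_1$, so that it really induces a well-defined candidate sequence. This follows from the fact that all arcs of $H_i$ go only between consecutive layers $V_j$ and $V_{(j+1)\bmod|\pi_i|}$, and a cycle visiting each of the $|\pi_i|$ layers once has exactly $|\pi_i|$ arcs, forcing it to step forward one layer at each arc. Once this orientation argument is in place, the equivalence reduces cleanly to Lemma \ref{lemma:adjacent}, and the final algorithmic payoff is that detecting such a cycle can be done in polynomial time (e.g. by checking, for each starting vertex $v_p\in V_1$, reachability back to $v_p$ through all layers), which is what the running-time bound requires.
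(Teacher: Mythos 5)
Your proposal is correct and follows essentially the same route as the paper's proof: the forward direction uses Observation \ref{obs:subsetcomp} to get singleton and consecutive-pair compatibility (hence the arcs and the cycle), and the backward direction reads a candidate off the cycle and invokes Lemma \ref{lemma:adjacent}. Your explicit justification that a cycle meeting every layer exactly once must step through the layers in cyclic order is a point the paper asserts more tersely, but the argument is the same.
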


\begin{proof}
For the forward direction, let $\rho$ be a candidate of $\pi$ such that $set(\rho)$ is compatible to $(f_d,f_e,\alpha)$. Then every subset of $set(\rho)$ is compatible to $(f_d,f_e,\alpha)$, and by construction of $H_{\pi}$, there is a directed edge in $H_{\pi}$ from the vertex corresponding to $\rho[j]$ to the vertex corresponding to $\rho[(j+1) \mod |\rho|]$ for every $j\in [|\pi|]$. Hence, there is a cycle in $H_{\pi}$ containing exactly one vertex from every vertex set in $\{V_1,V_2,\dots,  V_{|\pi|}\}$.
\paragraph{}

For the other direction, let there be such a cycle $c$ in $H_{\pi}$. As per the construction of $H_{\pi}$, every directed edge in cycle $c$ must be from a vertex of set $V_j$ to a vertex of set $V_{(j+1) \mod |\pi|}$ where $j\in [|\pi|]$. We construct $\rho$ by setting $\rho[j]$ to be the path corresponding to the vertex of $V_j$ in the cycle $c$. Thus, $\{\rho[j], \rho_i[(j+1) \mod |\pi_i|]\}$ is compatible to $(f_d,f_e,\alpha)$ as their corresponding vertices have a directed edge in $H_{\pi}$. Recalling Lemma \ref{lemma:adjacent}, we conclude that $set(\rho)$ is compatible to $(f_d,f_e,\alpha)$. This finishes the proof.
\end{proof}

\begin{lemma}\label{lemma:verify_cycle}
In time $(|V|+|{\cal P}|)^{O(1)}$, we can find a cycle containing exactly one vertex from every set in $\{V_1,V_2,\dots,  V_{|\pi|}\}$ in $H_{\pi}$ or conclude that no such cycle exists.
\end{lemma}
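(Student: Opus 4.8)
The plan is to exploit the rigid layered structure of $H_i$. By construction every arc of $H_i$ goes from some $V_j$ to $V_{(j+1)\bmod |\pi_i|}$, so the vertex sets $V_1,\dots,V_{|\pi_i|}$ — which are pairwise disjoint, since distinct positions of $\pi_i$ carry distinct types and a path has a single type — behave as the consecutive layers of a directed cycle of layers. Consequently a cycle of $H_i$ containing exactly one vertex from each $V_j$ is precisely a choice of vertices $u_1\in V_1,\dots,u_{|\pi_i|}\in V_{|\pi_i|}$ with an arc $u_j\to u_{(j+1)\bmod |\pi_i|}$ for every $j$; conversely any such choice is automatically a simple cycle of length $|\pi_i|$, because consecutive vertices lie in different (disjoint) layers and the walk returns to its start only after traversing all $|\pi_i|$ layers once.

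First I would search for such a transversal cycle through each fixed anchor vertex $s\in V_1$ separately. For a fixed $s$ I would compute the layer-by-layer forward reachability: set $R_1=\{s\}$ and, for $j=1,\dots,|\pi_i|-1$, let $R_{j+1}$ be the set of vertices of $V_{j+1}$ having an in-arc from some vertex of $R_j$. A vertex $u\in V_j$ lies in $R_j$ iff there is a directed path $s=w_1\to w_2\to\cdots\to w_j=u$ with $w_t\in V_t$, and such a path is automatically vertex-simple by layer disjointness. After the last layer I would test whether some $u\in R_{|\pi_i|}$ has an arc $u\to s$. If so, a transversal cycle through $s$ exists and I would recover it by standard back-pointers; if the test fails for every $s\in V_1$, then by the equivalence above no such cycle exists.

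For the running time, constructing $H_i$ costs $O(|\calp|^2)$ pairwise compatibility checks, each polynomial by Observation \ref{obs:subsetcomp}; the number of layers is $|\pi_i|\le l\le |E_X|$ and each $|V_j|\le |\calp|$. The reachability pass for one anchor $s$ inspects each arc a constant number of times and is therefore polynomial, and there are at most $|V_1|\le |\calp|$ anchors, giving a total running time polynomial in $(|V|+|\calp|)$, as claimed.

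The step I expect to require the most care is the correctness argument that layered reachability suffices, i.e. that no extra bookkeeping is needed to guarantee the cycle visits each $V_j$ exactly once and is vertex-simple. This is exactly where the two structural facts are used: arcs only ever advance the layer index by one (mod $|\pi_i|$), and the $V_j$ are pairwise disjoint, so a closed walk of length $|\pi_i|$ based at $s$ can only use one vertex per layer. Fixing the anchor $s$ is also essential, since plain reachability would detect a closed walk but not that it returns to the vertex it started from; it is the closing arc $u\to s$ that certifies a genuine transversal cycle. Equivalently, one could phrase the whole test as composing the $|\pi_i|$ arc-relations $R_j\subseteq V_j\times V_{(j+1)\bmod|\pi_i|}$ around the cycle and checking the resulting relation on $V_1\times V_1$ for a diagonal element, which is again polynomial.
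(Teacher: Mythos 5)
Your proposal is correct and follows essentially the same route as the paper: both fix an anchor vertex in $V_1$ and exploit the fact that every arc advances the layer index by one, so that a closed walk of length $|\pi_i|$ returning to the anchor is automatically a cycle meeting each $V_j$ exactly once. The paper implements the check via a shortest-path computation (an out-neighbour $v\in V_2$ of $u\in V_1$ with distance $|\pi_i|-1$ back to $u$), whereas you unroll the same reachability layer by layer, which is only a cosmetic difference.
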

\begin{proof}
We claim that $H_{\pi}$ has a cycle containing exactly one vertex from every vertex set in $\{V_1,V_2,\dots,V_{|\pi|}\}$  if and only if there exist two vertices $u\in V_1$ and $v\in V_2$ such that there is a arc from $u$ to $v$ and the distance from $v$ to $u$ is $|\pi|-1$. For the forward direction, let there be such a cycle $c$ in $H_{\pi}$. Let $u$ and $v$ be the vertices in $c$ from $V_1$ and $V_2$ respectively. Since all the directed edges in $H_i$ are from vertices of vertex sets $V_j$ to the vertices of $V_{(j+1)mod|\pi_i|}$ where $j\in[|\pi_i|]$, thus there must be a directed edge from $u$ to $v$ and this directed edge must be part of cycle $c$, and the existence of cycle $c$ ensures that $u$ is reachable from $v$ by traversing $|\pi|-1$ edges. Since at least $|\pi|-1$ edges are required to reach a vertex of $V_1$ from a vertex of $V_2$. Thus, distance from $u$ to $v$ is $|\pi|-1$. For the other direction let $u\in V_1$ and $v\in V_2$ be two vertices such that $(u,v)$ is a directed edge in $H_{\pi}$ and the shortest distance from $v$ to $u$ is $|\pi|-1$. Then the shortest path from $v$ to $u$ contains exactly one vertex from every vertex  set, and since $(u,v)$ is an edge, it forms a cycle in $H_{\pi}$containing exactly one vertex from every vertex set in $\{V_1,V_2,\dots,  V_{|\pi|}\}$.
\paragraph{}

To verify if such a cycle exists, we can verify for each vertex $u\in V_1$, if there exists a vertex $v\in V_2$ such that $(u,v)$ is a directed edge and the shortest distance from $v$ to $u$ is $|\pi|-1$. Further, the shortest distance between two vertices in directed graph can be computed in time polynomial in number of vertices of the graph. This finishes the proof of the lemma. 

\end{proof}
\paragraph{}

The above lemma finishes the proof of Theorem \ref{theorem:FVD_maxdegree}.

\subsection{4-Approximation Algorithm in FPT time parameterized by Feedback Edge Number} \label{sect:FENalg}

In this section, we present a $4$-approximation algorithm for {\psp} which runs in time $({\lambda})^{O({\lambda}^2)}\cdot (|V|+|\calp|)^{O(1)}$. We first define a set of edges.

\begin{figure}[h]

    \centering
    \begin{tikzpicture}


  
    \node [ellipse, minimum height=1.5cm,minimum width= 2cm, label ={-90:\footnotesize{\text{}}}] (z) at (2*5,-2*1) {};
   
   \node[shape=circle, draw, fill = black, scale= 0.15, font=\footnotesize] ({31}) at (2*5+0.4,-2*1+0.3){};
    \node[shape=circle, draw, fill = black, scale= 0.15, font=\footnotesize] ({32}) at (2*5-0.4,-2*1+0.3){};
    \node[shape=circle, draw, fill = black, scale= 0.15, font=\footnotesize] ({33}) at (2*5+0.4,-2*1-0.3){};
    \node[shape=circle, draw, fill = black, scale= 0.15, font=\footnotesize] ({34}) at (2*5-0.4,-2*1-0.3){};
     \node[shape=circle, draw, fill = black, scale= 0.15, font=\footnotesize] ({35}) at (2*5-0.75,-2*1-0){};
     \node[shape=circle, draw, fill = black, scale= 0.15, font=\footnotesize] ({36}) at (2*5+0.75,-2*1-0){};
   
   \draw[thick,black] ({31}) to ({32});
   \draw[thick,black] ({31}) to ({33});
   \draw[thick,black] ({31}) to ({34});
   \draw[thick,black] ({31}) to ({36});
   \draw[thick,black] ({33}) to ({36});
   \draw[thick,black] ({32}) to ({33});
   \draw[thick,black] ({32}) to ({34});
   \draw[thick,black] ({33}) to ({34});
   \draw[thick,black] ({32}) to ({35});
   \draw[thick,black] ({34}) to ({35});

   \foreach \i in {1,2,3,4}{
     
            \node[shape=circle, draw, fill = black, scale= 0.15, font=\footnotesize,label={}] ({31\i}) at (2*5-0.8+\i/3,-2*1+1.2){};
   }
   \foreach \i in {1,2,3,4}{
     
            \node[shape=circle, draw, fill = black, scale= 0.15, font=\footnotesize,label={}] ({32\i}) at (2*5+1.5,-2*1-0.8+\i/3){};
   }
   
    \foreach \i in {1,2,3,4}{
     
            \node[shape=circle, draw, fill = black, scale= 0.15, font=\footnotesize,label={}] ({33\i}) at (2*5-1.5,-2*1-0.8+\i/3){};
   }
   \draw[thin,black!40] ({311}) to ({314});
   \draw[thin,black!40] ({321}) to ({324});
   \draw[thin,black!40] ({331}) to ({334});
   
   \draw[thick,black,bend right =30] ({311}) to ({32});
   \draw[thick,black,bend left =30] ({314}) to ({31});
    \draw[thick,black,bend right =30] ({324}) to ({36});
   \draw[thick,black,bend left =30] ({321}) to ({36});
    \draw[thick,black,bend right =30] ({331}) to ({35});
   \draw[thick,black,bend left =30] ({334}) to ({35});

   \node[shape=circle, draw, fill = black, scale= 0.15, font=\footnotesize,label={}] ({351}) at (2*5-0,-2*1+1.5){};
   \node[shape=circle, draw, fill = black, scale= 0.15, font=\footnotesize,label={}] ({352}) at (2*5-0.3,-2*1+1.8){};
   \node[shape=circle, draw, fill = black, scale= 0.15, font=\footnotesize,label={}] ({353}) at (2*5+0.3,-2*1+1.8){};
   
   \draw[thin,black!40] ({351}) to ({352});
   \draw[thin,black!40] ({351}) to ({353});
  \draw[thin,black!40] ({351}) to ({312});
   
   \node[shape=circle, draw, fill = black, scale= 0.15, font=\footnotesize,label={}] ({361}) at (2*5-0.5,-2*1-0.9){};
   \node[shape=circle, draw, fill = black, scale= 0.15, font=\footnotesize,label={}] ({362}) at (2*5-0.8,-2*1-1.2){};
   \node[shape=circle, draw, fill = black, scale= 0.15, font=\footnotesize,label={}] ({363}) at (2*5-0.2,-2*1-1.2){};
   
      \draw[thin,black!40] ({361}) to ({362});
   \draw[thin,black!40] ({361}) to ({363});
  \draw[thin,black!40] ({361}) to ({34});

  \node[shape=circle, draw, fill = black, scale= 0.15, font=\footnotesize,label={}] ({381}) at (2.2*5-0.5,-2*1-0.9){};
   \node[shape=circle, draw, fill = black, scale= 0.15, font=\footnotesize,label={}] ({382}) at (2.2*5-0.8,-2*1-1.2){};
   \node[shape=circle, draw, fill = black, scale= 0.15, font=\footnotesize,label={}] ({383}) at (2.2*5-0.2,-2*1-1.2){};
   
      \draw[thin,black!40] ({381}) to ({382});
   \draw[thin,black!40] ({381}) to ({383});
  \draw[thin,black!40] ({381}) to ({34});
   
    \node[shape=circle, draw, fill = black, scale= 0.15, font=\footnotesize,label={}] ({371}) at (2*5+2,-2*1+0){};
   \node[shape=circle, draw, fill = black, scale= 0.15, font=\footnotesize,label={}] ({372}) at (2*5+2.3,-2*1-0.3){};
   \node[shape=circle, draw, fill = black, scale= 0.15, font=\footnotesize,label={}] ({373}) at (2*5+2.3,-2*1+0.3){};
   
   \draw[thin,black!40] ({371}) to ({372});
   \draw[thin,black!40] ({371}) to ({373});
  \draw[thin,black!40] ({371}) to ({322});
  
  \node [ label ={-20:\footnotesize{\text{$G[X\cup S \cup T]$}}}] (z) at (1.75*5,-2.6*1-0.8) {};


   \foreach \i in {1,2,3,4}{
     
            \node[shape=circle, draw, fill = black, scale= 0.25, font=\footnotesize,label={}] ({41\i}) at (3*5-0.8+\i/3,-2*1+1.2){};
   }
   \foreach \i in {1,2,3,4}{
     
            \node[shape=circle, draw, fill = black, scale= 0.25, font=\footnotesize,label={}] ({42\i}) at (3*5+1.5,-2*1-0.8+\i/3){};
   }
   
    \foreach \i in {1,2,3,4}{
     
            \node[shape=circle, draw, fill = black, scale= 0.25, font=\footnotesize,label={}] ({43\i}) at (3*5-1.5,-2*1-0.8+\i/3){};
   }
   \draw[thick,black] ({411}) to ({414});
   \draw[thick,black] ({421}) to ({424});
   \draw[thick,black] ({431}) to ({434});

\node[shape=circle, draw, fill = black, scale= 0.25, font=\footnotesize,label={}] ({451}) at (3*5-0,-2*1+1.5){};
   \node[shape=circle, draw, fill = black, scale= 0.25, font=\footnotesize,label={}] ({452}) at (3*5-0.3,-2*1+1.8){};
   \node[shape=circle, draw, fill = black, scale= 0.25, font=\footnotesize,label={}] ({453}) at (3*5+0.3,-2*1+1.8){};
   
   \draw[thick,black] ({451}) to ({452});
   \draw[thick,black] ({451}) to ({453});
  \draw[thick,black] ({451}) to ({412});
   
   \node[shape=circle, draw, fill = black, scale= 0.15, font=\footnotesize,label={}] ({461}) at (3*5-0.5,-2*1-0.9){};
   \node[shape=circle, draw, fill = black, scale= 0.15, font=\footnotesize,label={}] ({462}) at (3*5-0.8,-2*1-1.2){};
   \node[shape=circle, draw, fill = black, scale= 0.15, font=\footnotesize,label={}] ({463}) at (3*5-0.2,-2*1-1.2){};
   
   \draw[thin,black!40] ({461}) to ({462});
   \draw[thin,black!40] ({461}) to ({463});

   \node[shape=circle, draw, fill = black, scale= 0.15, font=\footnotesize,label={}] ({481}) at (3.2*5-0.5,-2*1-0.9){};
   \node[shape=circle, draw, fill = black, scale= 0.15, font=\footnotesize,label={}] ({482}) at (3.2*5-0.8,-2*1-1.2){};
   \node[shape=circle, draw, fill = black, scale= 0.15, font=\footnotesize,label={}] ({483}) at (3.2*5-0.2,-2*1-1.2){};
   
   \draw[thin,black!40] ({481}) to ({482});
   \draw[thin,black!40] ({481}) to ({483});

    \node[shape=circle, draw, fill = black, scale= 0.25, font=\footnotesize,label={}] ({471}) at (3*5+2,-2*1+0){};
   \node[shape=circle, draw, fill = black, scale= 0.25, font=\footnotesize,label={}] ({472}) at (3*5+2.3,-2*1-0.3){};
   \node[shape=circle, draw, fill = black, scale= 0.25, font=\footnotesize,label={}] ({473}) at (3*5+2.3,-2*1+0.3){};
   
   \draw[thick,black] ({471}) to ({472});
   \draw[thick,black] ({471}) to ({473});
  \draw[thick,black] ({471}) to ({422});
  
  \node [ label ={-20:\footnotesize{\text{$G[S\cup T]$}}}] (z) at (2.8*5,-2.8*1-0.8) {};


    \node [ellipse, minimum height=1.5cm,minimum width= 2cm, label ={-90:\footnotesize{\text{}}}] (z) at (2*5,-2*1) {};
   
   \node[shape=circle, draw, fill = black, scale= 0.15, font=\footnotesize] ({51}) at (2.5*5+0.4,-6*1+0.3){};
    \node[shape=circle, draw, fill = black, scale= 0.15, font=\footnotesize] ({52}) at (2.5*5-0.4,-6*1+0.3){};
    \node[shape=circle, draw, fill = black, scale= 0.15, font=\footnotesize] ({53}) at (2.5*5+0.4,-6*1-0.3){};
    \node[shape=circle, draw, fill = black, scale= 0.15, font=\footnotesize] ({54}) at (2.5*5-0.4,-6*1-0.3){};
     \node[shape=circle, draw, fill = black, scale= 0.15, font=\footnotesize] ({55}) at (2.5*5-0.75,-6*1-0){};
     \node[shape=circle, draw, fill = black, scale= 0.15, font=\footnotesize] ({56}) at (2.5*5+0.75,-6*1-0){};

   \foreach \i in {1,2,3,4}{
     
            \node[shape=circle, draw, fill = black, scale= 0.15, font=\footnotesize,label={}] ({51\i}) at (2.5*5-0.8+\i/3,-6*1+1.2){};
   }
   \foreach \i in {1,2,3,4}{
     
            \node[shape=circle, draw, fill = black, scale= 0.15, font=\footnotesize,label={}] ({52\i}) at (2.5*5+1.5,-6*1-0.8+\i/3){};
   }
   
    \foreach \i in {1,2,3,4}{
     
            \node[shape=circle, draw, fill = black, scale= 0.15, font=\footnotesize,label={}] ({53\i}) at (2.5*5-1.5,-6*1-0.8+\i/3){};
   }
   \draw[thin,black!40] ({511}) to ({514});
   \draw[thin,black!40] ({521}) to ({524});
   \draw[thin,black!40] ({531}) to ({534});

   \node[shape=circle, draw, fill = black, scale= 0.15, font=\footnotesize,label={}] ({551}) at (2.5*5-0,-6*1+1.5){};
   \node[shape=circle, draw, fill = black, scale= 0.15, font=\footnotesize,label={}] ({552}) at (2.5*5-0.3,-6*1+1.8){};
   \node[shape=circle, draw, fill = black, scale= 0.15, font=\footnotesize,label={}] ({553}) at (2.5*5+0.3,-6*1+1.8){};
   
   \draw[thin,black!40] ({551}) to ({552});
   \draw[thin,black!40] ({551}) to ({553});
  \draw[thin,black!40] ({551}) to ({512});
   
   \node[shape=circle, draw, fill = black, scale= 0.15, font=\footnotesize,label={}] ({561}) at (2.5*5-0.5,-6*1-0.9){};
   \node[shape=circle, draw, fill = black, scale= 0.15, font=\footnotesize,label={}] ({562}) at (2.5*5-0.8,-6*1-1.2){};
   \node[shape=circle, draw, fill = black, scale= 0.15, font=\footnotesize,label={}] ({563}) at (2.5*5-0.2,-6*1-1.2){};
   
      \draw[thin,black!40] ({561}) to ({562});
   \draw[thin,black!40] ({561}) to ({563});
  \draw[thin,black!40] ({561}) to ({54});

  \node[shape=circle, draw, fill = black, scale= 0.15, font=\footnotesize,label={}] ({581}) at (2.7*5-0.5,-6*1-0.9){};
   \node[shape=circle, draw, fill = black, scale= 0.15, font=\footnotesize,label={}] ({582}) at (2.7*5-0.8,-6*1-1.2){};
   \node[shape=circle, draw, fill = black, scale= 0.15, font=\footnotesize,label={}] ({583}) at (2.7*5-0.2,-6*1-1.2){};
   
      \draw[thin,black!40] ({581}) to ({582});
   \draw[thin,black!40] ({581}) to ({583});
  \draw[thin,black!40] ({581}) to ({54});
   
    \node[shape=circle, draw, fill = black, scale= 0.15, font=\footnotesize,label={}] ({571}) at (2.5*5+2,-6*1+0){};
   \node[shape=circle, draw, fill = black, scale= 0.15, font=\footnotesize,label={}] ({572}) at (2.5*5+2.3,-6*1-0.3){};
   \node[shape=circle, draw, fill = black, scale= 0.15, font=\footnotesize,label={}] ({573}) at (2.5*5+2.3,-6*1+0.3){};
   
   \draw[thin,black!40] ({571}) to ({572});
   \draw[thin,black!40] ({571}) to ({573});
  \draw[thin,black!40] ({571}) to ({522});
  
  \node [ label ={-20:\footnotesize{\text{$G-\cal B$}}}] (z) at (2.35*5,-6.6*1-0.8) {};

\end{tikzpicture}
\caption{The darkened edges in $G[X\cup S\cup T]$ represents the edge set ${\cal B}=E(G[X]) \cup E(X,S)$. The darkened components in $G[S\cup T]$ forms the set $\cal D$, and the lighter components forms $\cal T$. The graph $G-\cal B$ is a forest. }\label{fig:structures2}
\end{figure}
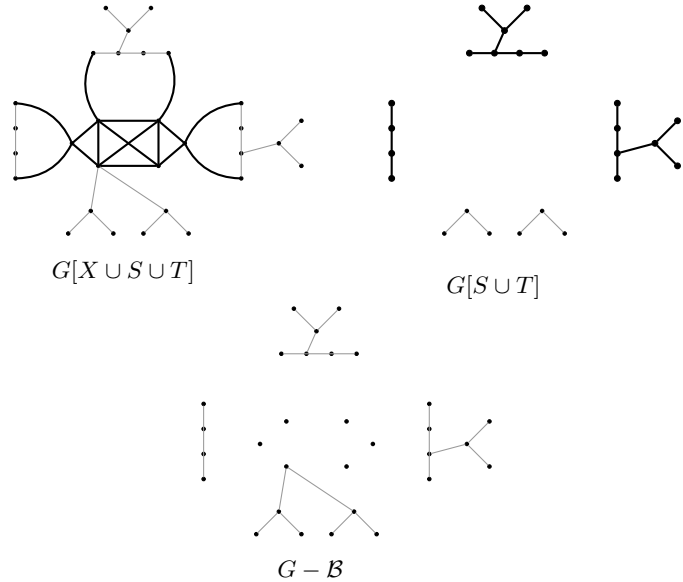

\begin{definition}
  We define ${\cal B}= E(G[X]) \cup E(X,S)$.
\end{definition}

\paragraph{}

Every edge in $E(X,S)$ is an external edge of a component in $\cal D$, recalling the Corollary \ref{cor:sizeX}, $|X|=O(|\lambda|)$ and $|{\cal D}|= O(|\lambda|)$. Further, recalling Observation \ref{obs:twoedges}, every component in $\cal D$ has two external edges. Thus, the size of $|{\cal B}|= O({\lambda}^2)$.
\paragraph{}

 Let $M\subseteq \cal P$ be a path set packing of maximum size in $\cal P$. Let $M_1 = \{p\mid p\in M\ \land \ E(p)\cap {\cal B} \neq \emptyset \}$ and $M_2 = M\setminus M_1$, that is $M_1$ is the set of all the paths of $M$ which contain at least one edge from $\cal B$, and $M_2$ are the remaining paths of $M$. We have $|M|=|M_1|+|M_2|$. Since all the edges of $M_2$ belong to the subgraph $G-{\cal B}$, we have $|M_2|\leq \textsc{opt}(G-{\cal B})$. Thus,

\begin{equation}
    |M|\leq|M_1|+\textsc{opt}(G-{\cal B}).
\end{equation}

 Let $M'_1\subseteq \cal{P}$ be a path set packing such that $|M'_1|\geq {1\over3} \cdot|M_1|$, then 

 \begin{equation}\label{eqn:4approx}
     \max(\textsc{opt}(G-{\cal B}),M'_1)\geq {1\over4}\cdot |M|.
 \end{equation}

Thus, it will suffice to find a path set packing of size $\textsc{opt}(G-{\cal B})$ and $M'_1$ in time $(\lambda)^{O({\lambda}^2)}\cdot (|V|+|{\cal P}|)^{O(1)}$. First, to find a path set packing of size $\textsc{opt}(G-{\cal B})$, consider the following lemma.

\begin{lemma}
    $G-{\cal B}$ is a forest.
\end{lemma}
\begin{proof}
    $G-{\cal B}$ is the graph with vertex set $V(G)$ and the edge set $E(G[S\cup T]) \cup E(X,T)$ (refer Figure  \ref{fig:structures2} for overview). Let $G'$ be the graph with vertex set $G[V]$ and the edge set $E(G[S\cup T])$, $G'$ is a forest in which every component either belongs to $\cal D \cup \cal T$ or is an isolated vertex that belongs to the vertex set $X$. Further, $E(X,T)$ is the set of all the external edges of components in $\cal T$. Every component in $\cal T$ is a tree and has only one external edge, thus every edge in $E(X,T)$ connects a vertex in $X$ to a distinct component in $\cal T$. We add edges of $E(X,T)$ in an arbitrary order to the graph $G'$, every time we are adding an edge between two distinct components, thus no new cycle is introduced in the graph. Since $G'$ is a forest, the obtained graph $G-\cal B$ is a forest. 
\end{proof}

\paragraph{}

Recalling Corollary \ref{cor:optforest}, a path set packing of size $\textsc{opt}(G-{\cal B})$ in $\textsc{int}(G-{\cal B},\cal P)$ can be found in time $(|V|+|{\cal{P}}|)^{O(1)}$. It remains to show that we can find $M'_1$ in time $(\lambda)^{O({\lambda}^2)}\cdot (|V|+|{\cal P}|)^{O(1)}$. To find $M'_1$, we consider $M_1$. Since $M_1$ is a path set packing and each of its path intersects with ${\cal B}$, we have $|M_1|\leq |\cal B|$. To this end, we will now try to guess $|M_1|$ and $E(p)\cap {\cal B}$ for every $p\in M_1$. For which, let $f_b= \{B_1,B_2,\dots,B_{|f_b|}\}$ be a partition of $\cal B$. We say $f_b$ is a \textit{correct guess }for $M_1$ if the following holds. 
\begin{enumerate}
    \item $|M_1|=|f_b|-1$;
    \item For every $p\in M_1$, there exists a $B\in f_b$, such that $E(p)\cap {\cal B}=B$.
\end{enumerate}

\begin{lemma} \label{lemma:existcorrectgusee}
     There exists a partition $f^*_b$ of $\cal B$, such that $f^*_b$ is a correct guess for $M_1$.
\end{lemma}
\begin{proof}
      $M_1$ is a path set packing and every path of $M_1$ intersects with $\cal B$. We construct $f^*_b= \{E(p)\cap {\cal B}\mid p\in M_1\} \cup \{
 {\cal B}\setminus E(M_1)\}$. Note that $\{{\cal B}\setminus E(M_1)\}$ is nonempty, as $\cal B$ contains edges which belong to no path in $\cal P$ (at least the edges between the vertices of $\{z_1,z_2,z_3\}$ which we added separately). Further, $E(p)\cap {\cal B}$ is distinct, disjoint, and nonempty for every $p\in M_1$, thus $f^*_b$ is a partition of $\cal B$ and $|M_1|=|f^*_b|-1$.
\end{proof}

\begin{definition}
    For an edge set $F\subseteq E$, we define $\Upsilon(F)= \{C\mid C\in ({\cal D}\cup {\cal T}) \land |\textsc{ext\_e}(C) \cap F|=1\}$, that is the set of all the components of ${\cal D}\cup {\cal T}$ whose exactly one external edge belong to $F$.
\end{definition}

\paragraph{}

Given a partition $f_b$ of $\cal B$, we proceed as follows.
Let ${\cal P}_{1}=\{p\mid (p\in{\cal P}) \land (B\in f_b) \land (E(p)\cap {\cal B} = B) \}$. 
For a universe $U= f_b \cup {\cal{D}}\cup {\cal{T}}$, we construct a collection of sets $\cal Q$ as follows. \newline
 
\begin{itemize}
        \item For every $p\in {\cal P}_{1}$, we define $Q(p)= \{B\} \cup \Upsilon(E(p))$, where $B=E(p)\cap \cal B$.
        \item ${\cal Q}= \{Q(p)\mid p\in {\cal P}_1\}$.
    \end{itemize} 

\begin{claim}
    For every $p\in {\cal P}_1$, $|Q(p)|\leq 3$.
\end{claim}

\begin{proof}
   This follows from the fact that for every $p\in {\cal P}_{1}$, $|\Upsilon(E(p))|\leq 2$ ( Lemma \ref{lemma:paththreeormore1}).
   
\end{proof}

\paragraph{}

For every $l\in[|f_b|-1]$, we construct an instances $I=(U,{\cal Q},l)$ of $3$-{\sp}, and solve it using Theorem \ref{thm:3setpacking}. Let $l^*$ be the maximum integer in $[|f_b|-1]$ such that $I=(U,{\cal Q},l^*)$ is a yes instance, and let ${\cal{Q}}^*$ be the set packing in $\cal Q$ obtained for this instance. We then construct a set of paths $M^*$ as follows.

\begin{enumerate}
        \item Initialize $M^*=\emptyset.$
        \item For every $q\in Q^*$:
        \begin{itemize}
            \item Arbitrarily pick a path $p\in {\cal P}_{1}$, such that $Q(p)=q$, and add $p$ to  $M^*$.       
        \end{itemize} 
\end{enumerate}
\paragraph{}

Since we are solving $O(|\cal B|)$ instances of $3$-{\sp} by Theorem \ref{thm:3setpacking}, and each takes time $2^{O(|{\cal B}|)}\cdot (|V|+|{\cal P}|)^{O(1)}$.
Thus, for a given $f_b$, the construction of $M^*$ takes time $2^{O(|{\cal B}|)}\cdot (|V|+|{\cal P}|)^{O(1)}$. Further, to bound the size of $M^*$, consider the following.

\begin{lemma}\label{lemma:correctguess}
    If $f_b$ is a correct guess for $M_1$, then $|M^*|\geq {1\over 3}\cdot |M_{1}|$.
\end{lemma}
\begin{proof}
   We are adding a path in $M^*$ for every $q\in Q^*$, hence $|{\cal Q}^*|= |M^*|$. It will now suffice to prove that $|Q^*|\geq {1\over 3}\cdot |M_{1}|$. To prove our claim, we construct two sets of paths $M_{1,1}$ and $M_{1,2}$ as follows.
    \begin{enumerate}
      
        \item Initialize $M_{1,1}=M_{1,2}=\emptyset.$
        \item While $M_{1}\neq \emptyset$:
        \begin{enumerate}
            \item Arbitrarily pick a path $p\in M_{1}$, and let $N(p)=\{p' \mid p' \in (M_{1}\setminus \{p\})\land (\Upsilon(E(p))\cap \Upsilon(E(p'))\neq \emptyset) \}$.           
            \item Set $M_{1,1}=M_{1,1}\cup \{p\}$, $M_{1,2}=M_{1,2}\cup N(p)$, and $M_{1}=M_{1}\setminus (\{p\}\cup N(p))$. 
        \end{enumerate} 
    \end{enumerate}
    \paragraph{}

    We claim that for every $p\in M_{1}$, $N(p)=|\Upsilon(E(p))|$. This is because $M_1$ is a path set packing, hence for every $C\in \Upsilon(E(p))$, at most two distinct paths in $M_{1}$ can have its external edge, and one of them is $p$ itself. Thus, there are at most $|\Upsilon(E(p))|$ paths $p'\in (M_{1}\setminus \{p\})$ such that $\Upsilon(E(p))\cap \Upsilon(E(p'))\neq \emptyset$. Recalling Lemma \ref{lemma:paththreeormore1}, we have $|N(p)|\leq 2$. Further, in the construction above, each time we are adding a path $p$ in $M_{1,1}$, we are adding at most two paths $N(p)$ in $M_{1,2}$. Thus, $|M_{1,1}|\geq {1\over 2}\cdot|M_{1,2}|$, and since $|M_{1,1}|+|M_{1,2}|= |M_{1}|$, we have  $|M_{1,1}|\geq {1\over{3}}\cdot |M_{1}|$.
    \paragraph{}

    If $f_b$ is a correct guess for $M_1$, then for every $p\in M_1$, $E(p)\cap {\cal B}= B$ for a $B\in f_b$. Thus, $M_1\subseteq {\cal P}_1$, and $Q(p)$ is well defined for every $p\in M_1$. Let ${\cal Q}'=\{Q(p)\mid p\in M_{1,1}\}$. Since $M_{1,1}$ is a path set packing, $E(p)\cap \cal B$ is distinct for every $p\in M_{1,1}$, hence $Q(p)$ is distinct for every $p\in M_{1,1}$, and hence $|{\cal Q}'|= |M_{1,1}|$. Further, ${\cal Q}'$ is a set packing in $Q$ because for every two distinct paths $p_1,p_2\in M_{1,1}$, $\Upsilon(E(p_1))\cap \Upsilon(E(p_2)) =\emptyset$. Further, due to the maximality of ${\cal Q}^*$, we have $|{\cal Q}^*|\geq|{\cal Q}'|=|M_{1,1}|\geq {1\over{3}}\cdot |M_{1}|$.
\end{proof}

\begin{lemma}\label{lemma:correctguesspacking}
    $M^*$ is a path set packing in $\cal P$.
\end{lemma}

\begin{proof}

    Assume to the contrary that there exist two distinct paths $p_1,p_2\in {M^*}$ such that $E(p_1)\cap E(p_2)\neq \emptyset$. 
    Let $E(p_1)\cap {\cal B}=B_1$ and $E(p_2)\cap {\cal B}=B_2$. Since $B_1\in Q(p_1)$ and $B_2\in Q(p_2)$, $B_1$ and $B_2$ are distinct and disjoint, and no edge in $\cal B$ belongs to $E(p_1)\cap E(p_2)$. Since $E(G[X])\subseteq \cal B$, no edge in $E(G[X])$ belongs to $E(p_1)\cap E(p_2)$.
    \paragraph{}

    External edges of every component in $\cal D$ forms a subset of   $\cal B$, and as discussed above, no edge in $\cal B$ belongs to $E(p_1)\cap E(p_2)$. Further, every $T\in {\cal T}$ has exactly one external edge, and if the external edge of a $T\in {\cal T}$ belongs to both $p_1$ and $p_2$, then both $\Upsilon(E(p_1))$ and $\Upsilon(E(p_2))$ must contain $T$, which is not the case as $Q(p_1)$ and $Q(p_2)$ are disjoint, and hence $\Upsilon(E(p_1))$ and $\Upsilon(E(p_2))$ must be disjoint as well. Recalling that ${\cal D} \cup {\cal T}$ is the set of all the components in $G[S\cup T]$, and union of external edges of all the components in ${\cal D} \cup {\cal T}$ is $E[X,S\cup T]$, no edge in $E[X,S\cup T]$ belongs to $E(p_1)\cap E(p_2)$.
    \paragraph{}

    As discussed above, for every $T\in \cal T$, at most one of the $p_1$ and $p_2$ contains the external edge of $T$, recalling Observation \ref{obs:atleast_one_ext}, this implies that at most one of the $E(p_1)$ and $E(p_2)$ contains an edge of $T$. Further, we argue that for every $D \in \cal D$, at most one of the $p_1$ and $p_2$ contains an external edge of $D$. This is because no external edge of $D$ belongs to $E(p_1)\cap E(p_2)$, and if both $p_1$ and $p_2$ contain an external edge of $D$, then both of them contain exactly one external edge of $D$, and in this case, $D$ must be in both $\Upsilon(E(p_1))$ and $\Upsilon(E(p_2))$, which is not the case. This implies that at most one of the $E(p_1)$ and $E(p_2)$ contains an edge of $D$.
    Since ${\cal D}\cup {\cal T}$ is the set of all the components in $G[S\cup T]$, no edge in $E(G[S\cup T])$ belongs to $E(p_1)\cap E(p_2)$. Since $E(G[X])\cup E(X, S\cup T) \cup E(G[S\cup T]) = E(G)$, it contradicts our assumption that $E(p_1)\cap E(p_2)\neq \emptyset$.
\end{proof}

\paragraph{}

We construct $M^*$ for every possible $f_b$, and pick the maximum sized $M^*$ to be $M'_1$, by Lemma \ref{lemma:existcorrectgusee}, Lemma \ref{lemma:correctguess} and Lemma \ref{lemma:correctguesspacking}, $|M'_1|\geq {1\over 3}\cdot |M_1|$. This will allow us to use Equation \ref{eqn:4approx} to find a path set packing of size at least ${1\over4}\cdot |M|$.
 \paragraph{}

Since there are $\lambda^{O({\lambda}^2)}$ possible partitions $f_b$ of $\cal B$, and as discussed, we can construct $M^*$ for every $f_b$ in 
time $2^{O(|{\lambda}|)}\cdot (|V|+|{\cal P}|)^{O(1)}$. Finding $\textsc{opt}(G-\cal B)$ takes time $(|V|+|{\cal P}|)^{O(1)}$. Thus, the algorithm takes time $\lambda^{O({\lambda}^2)}\cdot (|V|+|{\cal P}|)^{O(1)}$. This finishes the proof of Theorem \ref{theorem:4-apxFEN}.

\subsection{FPT Algorithm Parameterized by Treewidth + Maximum Degree + Maximum Path Length} \label{sec:3parameters}

Given a pair $(G,\calp)$, consider the auxiliary graph $H$ whose vertices correspond to paths in $\calp$ and in which two vertices are adjacent if the corresponding paths intersect at an edge. Then finding the maximum path set packing in $\calp$ is equivalent to finding a maximum size independent set in $H$.
We can deduce the following facts about the structure of $H$.

\begin{lemma} \label{lemma:fpt}
If $G$ has treewidth $\tau$ and maximum degree $\Delta$, and each path in $\calp$ is of length at most $r$, then the treewidth of $H$ is at most $(\tau+1)\Delta^r$.
\end{lemma}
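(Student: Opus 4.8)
The plan is to build a tree decomposition of the conflict graph $H$ directly from one of $G$. I would start with a tree decomposition $(T,\{B_t\}_{t\in T})$ of $G$ of width $k$, so that every bag satisfies $|B_t|\le k+1$. On the \emph{same} tree $T$, I define a new family of bags by setting, for each node $t$,
$B'_t=\{p\in\calp : V(p)\cap B_t\neq\emptyset\}$,
i.e.\ a path is placed in every node whose $G$-bag meets one of the path's vertices. The claim is then that $(T,\{B'_t\}_{t\in T})$ is a valid tree decomposition of $H$ whose bags have size at most $(k+1)\Delta^r$.

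Verifying the three axioms is the routine part. Every path $p$ has a nonempty vertex set, so $p$ lies in at least one $B'_t$. For edge coverage, if $p$ and $q$ are adjacent in $H$ they share an edge $uv$ of $G$; since $uv\in E(G)$, some bag $B_t$ contains both $u$ and $v$, whence $u\in V(p)\cap B_t$ and $u\in V(q)\cap B_t$, so $p,q\in B'_t$. For connectivity (the interval property), the set of nodes containing a fixed path $p$ is exactly $\bigcup_{u\in V(p)}\{t: u\in B_t\}$; each set $\{t:u\in B_t\}$ is connected in $T$ by the tree-decomposition property, and for every edge $uu'$ of $p$ there is a bag containing both $u$ and $u'$, so the node-sets of consecutive vertices of $p$ overlap. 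As $p$ is a path, hence connected, this union is a connected subtree of $T$.

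The crux, and the only place where the hypotheses on $\Delta$ and $r$ enter, is bounding $|B'_t|$. Every $p\in B'_t$ passes through at least one of the at most $k+1$ vertices of $B_t$, so $|B'_t|\le\sum_{w\in B_t}N(w)$, where $N(w)$ counts the paths of $\calp$ incident to $w$. Here I would argue that the number of paths of length at most $r$ through a fixed vertex $w$ is at most $\Delta^r$: reading off the edges of such a path outward from $w$, there are at most $\Delta$ choices at each of at most $r$ steps, so each path is encoded by a bounded sequence of edge-choices. This gives $|B'_t|\le(k+1)\Delta^r$, and therefore treewidth at most $(k+1)\Delta^r-1\le(k+1)\Delta^r$, as claimed. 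I expect the delicate point to be making this counting fully rigorous: a path can leave $w$ in two directions, so a naive bound risks producing an exponent of $2r$ rather than $r$. The clean way to keep the exponent at $r$ is to observe that the two arms of a path at $w$ have total length at most $r$, so the product of their ($\le\Delta^{\text{(arm length)}}$) choices is bounded by $\Delta^r$ (up to lower-order factors in $r$), or equivalently to charge each path to an endpoint and use that at most $\Delta^r$ simple paths of length at most $r$ emanate from a fixed vertex.
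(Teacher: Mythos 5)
Your construction is exactly the one the paper uses: keep the tree of a width-$k$ decomposition of $G$ and replace each bag $B_t$ by the set of paths of $\calp$ meeting $B_t$, then bound the bag size by $(k+1)$ times the number of length-$\le r$ paths through a fixed vertex. The verification of the axioms and the counting are the same in substance (your connectivity argument via overlapping subtrees is a direct version of the paper's separator-based contradiction), and you are in fact more candid than the paper about the only delicate point, namely that the ``two arms'' counting gives $\Delta^r$ only up to lower-order factors in $r$ --- a slack that is harmless for the FPT conclusion.
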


\begin{proof}
Given a tree decomposition $(T,\beta)$ of $G$ of width $\tau$, consider the pair $(T,\gamma)$, where $\gamma(v)=\cup_{x \in \beta(v)} \cup \{p\}$,
where the inner union is over all paths $p$ in $G$ of length at most $r$ and containing $x$.
We have $|\gamma(v)| \leq (\tau+1)\Delta^{r}$ since each vertex in $\beta(v)$ belongs to at most $\Delta^{r}$ paths of length at most $r$ in $G$. Further $(T,\gamma)$ is a tree decomposition of $H$: 

(a) for every $p\in \calp$, there exists a node $v$ in $T$ such that $p$ is in $\gamma(v)$, eg: node $v\in V(T)$ such that $\beta(v)$ contains a vertex of $p$;

(b) if $p,q\in \calp$ are adjacent in $H$ that is $p,q$ intersect at an edge, then both $p,q$ belong to $\gamma(v)$ for a node $v$ in $T$ such that $\beta(v)$ contains a common vertex of $p,q$;

(c) now we show that for every $p \in \calp$, the set of all the nodes $v\in V(T)$ such that $\gamma(v)$ contains $p$ induces a connected subgraph of $T$.
We assume to the contrary that there is a path $p \in \calp$ such that the set $S(p)$ of all the nodes in $T$ such that $\gamma(v)$ contains $p$ does not induce a connected subgraph of $T$, then there must be a node $v$ in $T$ such that $p \notin \gamma(v)$ and $v$ belongs to the path between two distinct nodes in $S(p)$ which belong to separate components in $T[S(p)]$. Graph $G-\beta(v)$ is disconnected
and has at least two connected components containing two distinct vertices, say $x,y$ of the path $p$. This implies that at least one vertex of the subpath of $p$ joining $x,y$ must belong to $\beta(v)$, and thus $p$ belongs to $\gamma(v)$, a contraposition to our assumption.
\end{proof}

The problem of finding maximum independent set admits a $O(2^{\tau})$ algorithm on graphs of treewidth $\tau$ \cite{AP89}.
Hence we obtain Theorem \ref{theorem:tw_maxdegree} as a corollary.

\section{Conclusion}

We investigated parameterized complexity of {\psp} with various structural and natural parameters. The hardness and FPT results that we obtained help establish parameterized complexity of {\psp} with respect to most widely used parameters. We leave open the question if {\psp} is FPT parameterized by the feedback edge number of the input graph.

\paragraph{}
\textbf{Acknowledgements.}
We thank anonymous reviewer of an earlier version of this paper for referring us to EPG and EPT graphs. We also thank reviewers of WALCOM 2023 for their valuable suggestions and feedback.

\bibliographystyle{splncs04}

\bibliography{path-set-references}

\end{document}